\documentclass[a4paper, 12pt]{article}
\usepackage{setspace}
\usepackage{etoolbox}

\AtBeginEnvironment{proof}{\doublespacing}

\usepackage[a4paper, total={6in, 8in}]{geometry}
\usepackage{amssymb, stmaryrd}
\usepackage[parfill]{parskip}
\usepackage{amsmath}
\usepackage{bbm}
\usepackage{graphicx}
\usepackage{tikz}
\usetikzlibrary{positioning}
\usetikzlibrary{calc,intersections}
\usepackage{bbm}
\usepackage{pgf}
\usepackage{natbib}
\usepackage{hyperref}

\usetikzlibrary{patterns}
 
\newtheorem{theorem}{Theorem}

\newtheorem{axiom}{Axiom}

\newtheorem{claim}{Claim}

\newtheorem{definition}{Definition}
\newtheorem{example}{Example}

\newtheorem{proposition}{Proposition}
\newtheorem{remark}{Remark}

\newenvironment{proof}[1][Proof]{\noindent\textbf{#1.} }{\ \rule{0.5em}{0.5em}}

\title{Stochastic Choice, Limited Attention, and Aggregation over Attributes}
\author{Arkarup Basu Mallik \thanks{Economic Research Unit, ISI Kolkata. \\
 Email:  arkarupbasumallik97@gmail.com }
 \and Mihir Bhattacharya \thanks{Department of Economics, Ashoka University, Sonipat, Haryana, India. \\ 
Corresponding author email: mihir.bhattacharya@ashoka.edu.in}\and Anuj Bhowmik \thanks{Economic Research Unit, ISI Kolkata. \\ Email: anujbhowmik09@gmail.com} }
\begin{document}
\maketitle                                     
      
\begin{abstract}

\noindent We study an attribute-based model of stochastic choice in which attention to each attribute is limited, and characterize two choice rules within it. The Multiplicative Attention Rule (MAR) rewards an alternative only for the attributes on which it ranks first, combining attention to these multiplicatively across attributes; it captures choice deferral driven by the cognitive load of attending to `too many' attributes. The Additive Attention Rule (AAR) instead averages attention across attributes and credits an alternative by its full ordinal rank within the menu. This difference in aggregation is what separates the two rules behaviourally: AAR reproduces the Attraction Effect while holding the relevant attributes fixed across menus, something MAR can do only when the attribute set itself changes, though AAR, in turn, cannot accommodate the Compromise Effect. Both rules are characterized by axioms on observable choice data, and in both cases the underlying attention parameters and attribute-based preferences are uniquely identified from that data.

\paragraph{JEL classification:} D1, D91 \paragraph{Keywords:} limited attention, attributes, stochastic choice, choice overload 

\end{abstract}

\newpage

\setcounter{tocdepth}{2}
\tableofcontents

\section{Introduction}

Attributes can provide decision makers with useful rationales for evaluating alternatives (\cite{manzini2007sequentially}). The importance assigned to different attributes, however, may vary across individuals and decisions, and decision makers may not devote equal attention to all of them. For instance, someone choosing a car may place greater emphasis on its performance than on the number of airbags it offers. Such differences in attention can shape how alternatives are evaluated and chosen. In this paper, we characterize \emph{two} complementary choice rules that capture the role of attribute-specific attention in decision making. The first, the Multiplicative Attention Rule (MAR), aggregates attention \emph{multiplicatively} across attributes and is tops-only within each attribute: an alternative receives a reward only when it ranks first on that attribute. The second, the Additive Attention Rule (AAR), aggregates attention through \emph{averaging} and evaluates alternatives according to their full attribute-wise rankings, rather than only whether they rank first. These two rules offer distinct perspectives on how attention shapes choice, allowing us to study when limited attention generates different patterns of behaviour and what can be learned about the decision maker's underlying preferences from observed choices.

The two rules share a common foundation: attention probabilities over attributes and attribute-specific preference orderings over alternatives. Their distinct methods of aggregating attention, however, give rise to different behavioural implications. MAR accommodates choice overload and a high probability of choosing the default (status quo) alternative, whereas AAR reproduces the Attraction Effect without requiring changes in the set of relevant attributes across menus. We provide complete axiomatic characterizations of both rules in terms of observable choice data and establish the unique identification of their attention parameters and attribute-based preferences. Together, these results connect the structure of attention aggregation to observable patterns of choice and clarify the distinct behavioural roles that the two rules can capture.

We provide some examples from consumer behaviour where attention paid to various
attributes plays an important role:

\begin{enumerate}

\item Consider a consumer who wants to buy a smartphone. She may pay more
attention to buying a phone based on screen size instead of the number of
megapixels in its camera. These two attributes are independent so the attention
paid to them is assumed to be independent. We assume that the DM is generally
attentive, i.e., the probability that attention is paid to any attribute is greater
than half.\footnote{This is only needed for MAR; AAR is more general in this regard.} In other words, each relevant attribute is more likely to be considered
than not.

\item Consider a consumer who wants to buy a box of breakfast cereal. She may be
more focused on nutritional attributes and less on the type of fruit that comes
with it. The difference in attention probabilities to the two attributes may affect
the choice probability of a specific cereal box. However, unlike our first premise,
in this case, the attributes are not independent (since the nutritional attributes
and the type of fruit associated with the cereal are correlated). In our model,
we assume that the attention paid to the attributes is independent.\footnote{\cite{bialkova2010determines} identify the factors affecting the attention paid to different
nutritional labels in an experimental setting.}
\end{enumerate}
We take choice probabilities to be observable for any given subset of alternatives and any subset of attributes (in fact, the parameters of our model can be identified using only choice probabilities from binary comparisons of alternatives under all singleton attribute sets (for the preference orderings) and from singleton alternative sets under all singleton attribute sets (for the attention parameters)). This is a natural assumption: in most online and offline markets, a planner can observe which alternatives are on offer and which attributes are associated with them. Online retailers routinely let shoppers filter alternatives by attribute ratings, product type, dimensions, delivery time, so that choice data broken down by attribute set is often directly available. Our rules are designed to capture how attention that shifts across attributes shows up in exactly this kind of data.

We first characterize Multiplicative Attention Rule
(MAR) according to which the probability of an alternative $x$ is the product of
the probabilities with which the DM pays attention to the attributes under which $x$
is ranked highest and the weighted probability with which attention is not drawn by
those attributes under which $x$ is not the highest ranked. The ranking over alternatives is a linear ordering for each attribute. First, we provide an example of the rule
with three alternatives and two attributes.

\begin{example}\label{egA}
Suppose $S = \{x, y, z\}$ and $A = \{a_1, a_2\}$. Let the two orderings with
respect to the attributes $a_1$ and $a_2$ be $x \succ_{a_1} y \succ_{a_1} z$ and $y \succ_{a_2} x \succ_{a_2} z$ respectively.
Let $\gamma(a_1) = 0.6$ and $\gamma(a_2) = 0.9$ be the two attention probabilities of the attributes.\footnote{Note that the attention paid to attributes need not sum to one since they are assumed to be
independent.}
The choice probabilities of the three alternatives from the set $S = \{x, y, z\}$ according
to MAR, $p^S$, are as follows:
\begin{align*}
\text{(i)}\quad & p^S(x, S, A) = \gamma(a_1)\left( \frac{1-\gamma(a_2)}{|S|-1}\right)  = 0.6(0.05) = 0.03. \\[4pt]
\text{(ii)}\quad & p^S(y, S, A) = \gamma(a_2)\left( \frac{1-\gamma(a_1)}{|S|-1}\right)  = 0.9(0.2) = 0.18. \\[4pt]
\text{(iii)}\quad & p^S(z, S, A) = \left(\frac{1-\gamma(a_2)}{|S|-1}\right)\left( \frac{1-\gamma(a_1)}{|S|-1}\right) = \left( \frac{1-0.9}{3-1}\right)\left( \frac{1-0.6}{3-1}\right) = 0.01.
\end{align*}
\end{example}

Note that the choice probability of alternative $y$ is the highest (6 times that of $x$
and 18 times that of $z$) since the attention paid to the attribute ($a_2$) in which it is
ranked first is much higher than the attention paid to the attribute ($a_1$) where it is
not (0.9 as compared to 0.6). Moreover, the probability of ignoring the attribute $a_1$
in which it is not ranked first is also relatively high (0.4). Also note that the choice
probability of the default alternative is the residual probability given by $p^S(x^*, S, A) =
1 - 0.03 - 0.18 - 0.01 = 0.78$.

An observation from the above example is the high probability of the default alternative.\footnote{We discuss the high probability of the default alternative in detail in Appendix \ref{sec7}.}
An interpretation of this property in the example is the fact that DM pays
attention to both attributes which have different top-preferred alternatives. This
`confuses'\footnote{For a small number of attributes, this `confusion' is necessary and sufficient for the default
alternative to be chosen with the highest probability. However, for a large number of attributes,
even in absence of confusion, the default alternative can be chosen with the highest probability. This
is because the large number of attributes itself exerts high cognitive load on the DM and she resorts
to her default alternative more often than not. We get back to this point in Appendix \ref{sec7}.} the DM since she does not choose according to the relative positions of
$x$ and $y$, which are similar across the two attributes. Even though she pays higher
attention to the second attribute, the choice probability of $y$ is multiplied by the
probability of not paying attention to the first attribute (weighted by $\frac{1}{2}$). This discounting drags down the choice probability of alternative $y$ in the set, resulting
in the DM choosing the default with a higher probability.

MAR can explain cognitive constraints while choosing when there are numerous
attributes under consideration even if the DM is highly attentive. Consider the following situation: If the number of attributes is too high and even if there is a unique
alternative that is ranked highest in all of them, then it is plausible that the probability with which that alternative is chosen is not very high.\footnote{In Appendix \ref{sec7}, we also show that MAR is robust in terms of relative choice probabilities
in the presence of a large number of attributes; see Example \ref{egRob} in particular.} This captures the cognitive load caused by having to go through every
attribute, which leads to choice deferral.\footnote{\cite{iyengar2000choice} provide empirical evidence that the presence of more alternatives
can lead to choice deferral. While our paper has an attributes-based explanation, the essential
motivation is similar in the sense that the presence of more entities can lead to more cognitive load.
In our case, the presence of more attributes can lead to a choice deferral.} This also implies a high choice probability of
the default alternative, since the DM has too many attributes to focus on. Also, DMs may choose to window shop (the act of just browsing various
products) more often than she actually buys a product.\footnote{Another similar scenario related to this is the DM wanting to browse products online while
procrastinating.}

We use seven axioms to characterize MAR. Two carry most of the weight. Independence (IN) is a stochastic analogue of Plott's Path Independence axiom (\cite{plott}): choice probabilities under a union of disjoint attribute sets factor as the product of the choice probabilities under each set separately, holding the menu fixed. Uniformity of Inferior Alternatives (UIA) requires that, under a single attribute, any two alternatives beaten pairwise by a common alternative $x$ receive equal choice probability from the full menu; here, only the top-ranked alternative is ever distinguished. The remaining five axioms are more standard. Full Support (FS) requires choice probabilities under a single attribute to exhaust the menu, so the default is never chosen when there is more than one alternative to compare. Invariance of Singletons (IS) requires that a singleton menu's choice probability, under a singleton attribute set, not depend on which alternative is offered. Dominance (DOM) is an attribute-wise analogue of independence of irrelevant alternatives: removing an alternative that loses every pairwise comparison to $x$ under a given attribute leaves $x$'s choice probability under that attribute unchanged. Binary Difference (BD) rules out ties in binary comparisons under a single attribute, and Stochastic Transitivity (ST) is the natural probabilistic analogue of transitivity, both restricted to singleton attribute sets.

Our first main result (Theorem \ref{thm1}) shows that these seven axioms characterize MAR. Their value lies less in any single axiom than in what the two doing the real work jointly pin down: IN forces attention to aggregate \emph{multiplicatively} across attributes, and UIA allows within-attribute credit to collapse onto the top-ranked alternative alone, discarding all information about how the remaining alternatives are ranked. Both features are derived from behaviour that is fully observable in binary comparisons under one attribute at a time. This is also what sets up the paper's central comparison: in Section \ref{secAAR}, we replace IN and UIA with axioms that account for an alternative's full rank rather than only whether it ranks first. We show that this change in the axioms, while keeping the underlying primitives unchanged, leads to the Additive Attention Rule.

We show in Section \ref{sec4.1} that all the parameters in MAR are uniquely identifiable
from the choice data. BD plays an important
role in this by ensuring the difference in the choice probabilities from binary sets. This
leads to the attention parameters being distinct for a given choice data since no two
data sets would generate the same attention parameters and preference orderings. We next show that MAR satisfies a well-known property Regularity
with respect to alternatives (for a fixed set of attributes), and also satisfies regularity
with respect to attributes (for a fixed set of alternatives). The former requires that
the choice probability of an alternative from a subset of alternatives $S \subseteq T$ be weakly
greater than from $T$ when the set of attributes is fixed. The latter requires that the
choice probability of an alternative from a subset of attributes $A \subseteq B$ be weakly
greater than that from $B$ when the set of alternatives is fixed. This is the standard
regularity axiom defined over alternatives as in \cite{fudenberg2015stochastic}, \cite{mcfadden1990stochastic} and \cite{aguiar2019adaptive}. A violation of regularity can also
be captured when the set of attributes and the set of alternatives are allowed to vary.
A justification for this is the Attraction Effect as introduced in \cite{huber1982adding} 
which causes a reversal in the choice probabilities.


The regularity of MAR with respect to alternatives comes with an important restriction. With multiplicative aggregation across attributes and a tops-only evaluation within each attribute, removing alternatives cannot reduce the choice probability of any remaining alternative when the attribute set is fixed. Thus, MAR cannot generate the Attraction Effect (\cite{huber1982adding}) on a fixed attribute set. Within our framework, the effect can arise under MAR only if the addition of the asymmetrically dominated decoy also changes the relevant attribute set. For example, in Section \ref{sec4.2}, the relevant attributes shift from $\{a,b\}$ to $\{a,c\}$. Hence, under MAR, the Attraction Effect requires a change in the attributes considered relevant to the choice problem.

This motivates our second main contribution. We introduce the Additive Attention Rule (AAR), which uses the same primitives as MAR, namely attribute-specific attention parameters and attribute-specific linear orders over alternatives, but aggregates them differently. Attention is averaged across attributes, and an alternative is weighted according to its full ordinal rank within each attribute rather than only whether it is top-ranked. This change in aggregation allows AAR to generate the Attraction Effect while keeping the set of attributes fixed (Section \ref{secATTAAR}). In particular, adding an asymmetrically dominated alternative can reverse the choice ranking between two existing alternatives even though the attributes relevant to the choice remain unchanged. 

Under AAR, the same $S$, $A$, and attention weights as Example \ref{egA} give choice probabilities $0.3$, $0.325$, and $0.125$ for $x$, $y$, and $z$, against MAR's $0.03$, $0.18$, and $0.01$ (Example \ref{egAdditive}): crediting an alternative for its full rank, rather than only for being on top, keeps $x$ and even $z$ in contention and more than halves the default's share.

We provide a characterization of AAR (Theorem \ref{thmAAR}) by replacing IN, FS, UIA, and DOM with three axioms suited to additive aggregation: Attribute Average (AVG), Menu Additivity (MA), and Binary Calibration (BC), while retaining IS, BD, and ST. The same linearity that allows AAR to generate the Attraction Effect, however, rules out the Compromise Effect (\cite{simonson1989choice}): a rank-two alternative cannot simultaneously receive greater weight than two extreme alternatives (Section \ref{secAARcomp}). These results highlight the distinct implications of multiplicative and additive attention aggregation and the behavioural phenomena each can accommodate.

Many papers have studied the salience of attributes in individual stochastic choice.
\cite{bordalo2012salience} models the salience of differences in expected payoffs in a model
of risk, \cite{bordalo2013salience} and \cite{bordalo2020memory} model the role of memory in
decision-making which explains many behavioural biases including inattention to hidden product attributes. \cite{manzini2012categorize} shows that a choice rule which
takes into account different categories before choosing an alternative satisfies weak
WARP in a deterministic setting. The choice rules characterized in our setting can violate weak WARP and can even violate independence of irrelevant alternatives.

We contribute to the literature on Multiple Attributes (or Criteria) based Decision
Making, Limited Attention, and Stochastic Choice. \cite{lancaster1966new} introduced a
model with multiple attributes being associated with each alternative. He explicitly
considers alternatives as a vector of attributes and provides a theory on the optimization of the DM subject to a budget constraint on the attributes space. There are
papers which study the case where the criteria can be interpreted as multiple selves
or rationales, e.g. \cite{kalai2002rationalizing}, \cite{manzini2007sequentially}, \cite{cherepanov2013rationalization}, and \cite{hara2019coalitional}. In the limited attention literature, \cite{masatlioglu2012revealed} provides conditions for revealed preference, while \cite{lleras2017more} provides
a choice-theoretic foundation of limited attention and consideration sets. A good
starting point for the literature on stochastic choice is \cite{luce}.

\cite{manzini2014stochastic} is closely related to our paper because of the similarity in the formulation of the stochastic choice rule, particularly MAR. Their model, however, captures limited attention over alternatives, with attention parameters defined at the level of alternatives. In our model, attention is instead defined over attributes, and both the set of alternatives and the set of attributes may vary across choice problems. 

Another closely related paper is \cite{bhattacharya2021frame}, which studies stochastic choice through frames. Although their framework also involves limited attention, their choice rule differs from ours. In particular, MAR and AAR partition the attribute set differently across alternatives, which leads to different functional forms for the choice probabilities and a different distribution of probability across dominated alternatives. Moreover, in \cite{bhattacharya2021frame}, frames do not provide additional information for the rational evaluation of alternatives, as in \cite{salant2008f}. In our model, by contrast, attributes are informative and enter directly into the DM's evaluation of alternatives. 

Our approach requires a richer observable choice data set than these related models: choice probabilities must be observed for every set of alternatives and every set of attributes. This informativeness does not require a substantial additional data burden for identification. Although $p$ is defined over all sets of alternatives and attributes, the parameters of both MAR and AAR are identified from choice probabilities for binary comparisons under singleton attribute sets and for singleton alternative sets under singleton attribute sets. Thus, identification requires only a relatively small, attribute-specific slice of the observable choice data.

\cite{gul2014random} uses attributes in a stochastic choice setting. They consider a setup where the attributes are not explicitly modeled and are revealed from the stochastic
choice data. Their axioms and the structure of the choice rule are very different as
compared to ours. They seek to explain the problem of duplicates infested in the
Luce rule (\cite{luce}). An interesting model is provided by \cite{honda2021categorical} where
stochastic choices are made when each alternative can be associated with a category.
The stochastic choice rule that is used has two terms (in a product form). The first
term is representative of the probability of the categories in which there are better
alternatives (than the said alternative) not being considered. The second term is the
probability of consideration of the category containing the said alternative multiplied
by its Luce weight with respect to the other alternatives in the same category. The
weights in MAR and AAR just depend on the cardinality of the menu
(of alternatives) and the cardinality of the attribute set.

\cite{cattaneo2020random} provides a model where attention is both limited and random.
They have an axiom that states that the probability of an alternative being considered
weakly decreases as the menu of alternatives expands. MAR, on the other hand, satisfies a different but related form of monotonicity: Regularity in Attributes. The
situation is similar: If no new alternatives are added but the relevant attribute set
expands, attention diminishes. Hence, each alternative is less
likely to be considered (according to their model) and its choice probability also
decreases. AAR, by contrast, can accommodate some violations of regularity.

\cite{kovach2022focal} is another stochastic choice model with limited attention where some alternatives have more attention associated with them, known as the Focal Alternatives. \cite{kovach2023reference} presents a model
where reference points direct attention and shows how stochastic choices are made
when the attention is reference-dependent and random. \cite{ahumada2018luce}
explains a model where a DM uses the Luce Rule on a consideration set. \cite{kimya2018choice} gives a detailed model of consideration set formation in the presence of attributes and limited attention. However, his model is based on a deterministic choice
setup with thresholds, unlike ours. Also, for various forms of Stochastic Transitivity,
see \cite{Fishburn1973}.

In addition to the above, there are other papers that capture the role of attributes on
stochastic choice. \cite{galperti2024listing} considers the case where people's decisions depend
on the \textit{order} in which the attributes of the available options are presented to them.
They cite sufficient empirical evidence which confirms that such framing effects exist.
In particular, \cite{bergus1995clinical}, \cite{cunnington1997effect}, and \cite{chapman2000cognitive} show that doctor's diagnoses (which can be interpreted as the chosen
alternative) can depend on the order of attributes (symptoms) in which they are
presented. It is plausible that the \textit{set} of attributes presented to the DM while looking
at a menu of alternatives can influence choice. We provide a theory of stochastic choice
motivated by this. This is relevant for choice architecture (through advertising on
particular attributes), where firms focus on directing the DM's attention to
those attributes so as to influence choice. For example, shopping websites such as
Amazon and Best Buy often emphasize certain attributes of their products that draw
additional attention to them.

The paper is divided into sections. Section \ref{sec2} describes the model, while Section \ref{sec3} introduces MAR and provides the axioms required for the main result on MAR. Section \ref{sec4} provides the
results and properties satisfied by MAR. Section \ref{secAAR} develops the Additive Attention
Rule in full: the rule itself, the mechanism behind the Attraction Effect under a fixed
attribute set, its axiomatic characterization, the impossibility of the
Compromise Effect, and a discussion on why a simpler version of the rule fails. Section \ref{sec5} concludes by drawing together what the multiplicative
and additive rules each contribute. Appendix \ref{sec7} provides a detailed discussion of the
link between MAR and Pareto dominance. In particular, it provides more motivational examples, explains when the default alternative is chosen more often than not, and shows that under full attention a deterministic choice function based on MAR is Pareto dominant. Appendix \ref{sec6} establishes the independence of the axioms for both MAR and AAR.

\section{Model}\label{sec2}

\par Let $\mathbf{X}$ be the set of all alternatives and \(\mathcal{X}\) be the set of all finite non-empty subsets of \(\mathbf{X}\). Let \(\mathbf A\) be the set of all attributes and \(\mathcal{A}\) be the set of all finite non-empty subsets of \(\mathbf{A}\). We define stochastic choice rules which provide the choice probability of an alternative for every $S\in \mathcal{X}$ and $A\in \mathcal{A}$.

\begin{definition}[Stochastic Choice Rule (SCR)] An SCR, $p:\mathbf X\times \mathcal{X}\times \mathcal{A}\to [0,1]$ for any $x\in \mathbf{X}, S\in\mathcal{X}$ and $A\in\mathcal{A}$ provides 
\(p(x,S,A)\) which is  the probability that alternative \(x\in \mathbf X\) is chosen from a menu \(S\in \mathcal{X}\) and the set of relevant attributes \(A \in \mathcal{A}\). We denote $x^*\notin \mathbf{X}$ as a default alternative which is an outside option or the act of not choosing any alternative from $S$ and \(p(x^{*}, S, A)=1-\sum_{x \in S}p(x,S,A).\) 
\end{definition}

The choice probabilities over the set of alternatives and the default alternative sum to one, i.e., \(\sum_{x \in S}p(x,S,A)+p(x^{*}, S, A)=1\), where \(x \in \mathbf X\), \(S \in \mathcal{X}\), and \(A \in \mathcal{A}\). Note that \(p(x,S,A)=0\) if \(x \not \in S\). 

\par We introduce some notations. Suppose that $\succeq_{a}$ is a linear ordering over $\mathbf{X}$ according to the attribute $a\in A$.\footnote{We say that $R$ is a linear ordering if it is (i) complete: $xRy$ or $yRx$ for all $x,y$, (ii) antisymmetric: $xRy$ and $yRx$ implies that $x=y$, and (iii) transitive: $xRy$ and $yRz$ implies $xRz$ for any $x,y,z$.} This is also the preference of the DM on the given attribute $a\in \mathbf A$. We say that an alternative \(x\) \textit{dominates} another alternative \(y\) via an attribute \(a\) if \(x \succeq_{a} y\). For any $A\in \mathcal A$, $S\in \mathcal X$ and  $x\in S$,
 the set 
\[
A_{x}(S)=\{a \in A: x \succeq_{a} y,  \forall y \in S\}
\]
 denotes the attributes in which $x$ dominates all the alternatives in $S$ and let 
\[
A_{x}^c(S)=\{a \in A: \exists y \in S \backslash \{x\}, y \succeq_{a} x\}
\] 
be the set of attributes where $x$ is dominated by some alternative $y\in S\backslash\{x\}$. 

 \section{The Multiplicative Attention Rule}\label{sec3}
 We are now in a position to define our first stochastic choice rule: \textit{Multiplicative Attention Rule $($MAR$)$}. 

\begin{definition}[MAR] An SCR, $p^{S}$ is a MAR if for all $A\in\mathcal{A}$ there exists a linear ordering $\succeq_{a}$ on $\mathbf X$ and attention parameters $\gamma(a)\in (\dfrac{1}{2},1]$ for all $a\in \mathbf A$ such that for any $S\in\mathcal{X}$ and $x\in S$,   \begin{displaymath}
p^{S}(x,S,A)=\begin{cases}
\prod_{a\in A_{x}(S)}\gamma(a)\dfrac{\prod_{b\in A_{x}^{c}(S)}(1-\gamma(b))}{(|S|-1)^{|A_{x}^{c}(S)|}}, &\quad \text{ if }  A_{x}(S) \not = \emptyset,  A_{x}^{c}(S) \not = \emptyset; \\
\ \prod_{a \in A_{x}(S)}\gamma(a), & \quad \text{if } A_{x}(S)=A; \\
 \dfrac{\prod_{b \in  A_{x}^{c}(S) }(1-\gamma(b))}{(|S|-1)^{|A_{x}^{c}(S)|}}, &\quad \text{ if } A_{x}^{c}(S)=A. \\   
    \end{cases}
    \end{displaymath}
\end{definition}

 According to MAR, the probability of an alternative $x$ from a set of alternatives $S$ and given a set of attributes $A$ is the product of the probability of paying attention to the set of attributes in which $x$ ranks first and the probability with which attention is not paid to other attributes in which $x$ is not ranked first where the latter is weighted by $\dfrac{1}{|S|-1}$. Therefore, the choice probability is weighted down by the attributes in which it does not rank first and the number of alternatives in $S$. Note that \(\gamma(a)\) is the probability that the relevant attribute \(a\) is considered or the attention probability of attribute $a$. The rule also considers high attention to the attributes, i.e., $\gamma(a) > \dfrac{1}{2}$, for all $a \in A$ where \(A\) is the relevant set of attributes. This assumption means that each \textit{relevant} attribute is more likely to be considered than not. Another implication of this assumption is that if there is only one attribute then it selects the highest-ranked alternative in that attribute with a greater probability than the default alternative. 

We introduce the Axioms required to characterize MAR:
\begin{axiom}[Independence (\textbf{IN})] An SCR $p$ satisfies \textbf{IN} if for any $S\in\mathcal{X}$ and $A, B \in \mathcal{A}$ such that $A\cap B=\emptyset$, we have 
        \(p(x,S, A \cup B)=p(x,S,A).p(x,S,B)\).
    \end{axiom}
    
    \textbf{IN} states that whenever two relevant attribute sets are disjoint, the choice probability for the union of the two sets is equal to the product of the choice probabilities given the two sets separately (keeping the set of alternatives fixed). \textbf{IN} is related to the \textit{Path Independence} axiom as in \cite{plott} which states that $\forall S,T \in \mathcal{X}, S\cap T= \emptyset, c(S\cup T)=c(c(S)\cup c(T))$\footnote{$c$ is a choice correspondence, which maps from a feasible menu of alternatives to the best/acceptable alternatives in that menu. Formally, $c:\mathcal{X}\rightarrow\mathcal{X}$ and $c(S) \subseteq S, \forall S \in \mathcal{X}$.}. \textbf{IN} is a stochastic analogue of this axiom (although not exactly) applied to the attributes set by keeping the alternatives set fixed. Thus, if an alternative gets a high choice probability in $(S,A)$ and if it also gets a high choice probability in $(S,B)$, then it should also get a relatively high choice probability\footnote{Note that the choice probability $p(x,S,A\cup B)$ is somewhat dampened even if the choice probabilities $p(x,S,A)$ and $p(x,S,B)$ are high. This is because the cognitive load of considering attributes in $(S,A \cup B)$ is higher than the cognitive load of considering attributes in either of $(S,A)$ and $(S,B)$.} in $(S, A \cup B)$ and if it gets a low choice probability either in $(S,A)$ or in $(S,B)$, then it should also get a low choice probability in $(S,A \cup B)$. So, \textbf{IN} posits that the choice probabilities are consistent while combining different attribute sets given that the set of alternatives is held fixed.

    The next Axiom applies to cases with a singleton attribute set. 
    
\begin{axiom}[Full Support (\textbf{FS})] An SCR $p$ satisfies \textbf{FS} if for any $S\in\mathcal{X}$, \(|S| \geq 2\) and any $a\in \mathbf A$,   \(\sum_{x \in S}p(x,S,\{a\})=1\).
\end{axiom}

\textbf{FS} states that under a single attribute and any set of alternatives, with at least two alternatives, the choice probabilities over the alternatives excluding the default alternative sum to one.

When there are two or more alternatives and a single attribute, the DM prefers variety in the menu of alternatives and the default alternative $x^{*}$ is not chosen at all. This is because with $|S| \geq 2$ and a singleton attribute, it is easy to compare alternatives in a single dimension and the DM chooses some alternative or the other as there is a freedom of choice or comparison over alternatives and at the same time, it is easier to focus with a single relevant attribute. However, if there is only one alternative and one attribute, then the DM may not choose that alternative with full support and the default alternative has a positive choice probability. This is due to the fact that there is no other alternative to compare with other than the default alternative which enhances the choice probability of the latter. Alternatively, the DM might not always want to choose the last piece of cake (the only alternative available) as this shows a lack of freedom of choice. There is a great deal of literature available on freedom of choice (see, for example, \cite{barbera2004ranking} and \cite{puppe1996axiomatic}).

\begin{axiom}[Invariance of Singletons (\textbf{IS})]  An SCR $p$ satisfies \textbf{IS} if for any $a\in \mathbf A$ and $x,y\in \mathbf{X}$,
    \(p(x, \{x\}, \{a\})=p(y, \{y\}, \{a\}).\)
\end{axiom}

\textbf{IS} states that if there is only one relevant attribute and only one alternative is available, then the choice probability of that alternative is independent of the alternative. In this case, the choice probability of the alternative is driven by the relevant attribute. Alternatively, the probability of choosing the default alternative from singleton alternative sets with singleton relevant attributes is independent of which other singleton alternative is in question. Here, if the DM has a lack of freedom of choice, she does not care which singleton alternative is available to her and she just cares about the attribute in question.\footnote{In fact, MAR satisfies a stronger version of \textbf{IS}: For any $x,y \in \mathbf{X}$ and for any $A \in \mathcal{A}$, $p(x, \{x\}, A)=p(y, \{y\}, A)$. This shows that if only one \textit{alternative} is available, then the choice probability is driven solely by the set of relevant attributes (singleton or not) $A$ in question. However, for the purpose of our characterization, the weaker version, \textbf{IS}, is enough. } As an example, suppose that the alternative in question is a newspaper. If a single newspaper is available in the economy, then the DM may think that the newspaper in question lacks autonomy and only relays the messages of the government. Thus, it is plausible that the DM may want to make her choices based on how she feels about the government (which is analogous to the attribute in question in our case) in power and is invariant to whichever single newspaper is in circulation.\footnote{An alternative explanation based on limited attention is that in our model, choice is random due to imperfect attention and the attention drawn to an alternative is influenced only by attributes. See \cite{bhattacharya2021frame} for a more detailed discussion of this axiom along these lines.}

\begin{axiom}[Uniformity in Inferior Alternatives (\textbf{UIA})] An SCR $p$ satisfies \textbf{UIA} if for all $S\in\mathcal{X}$, $a\in \mathbf A$, 

\[\big[\exists\; x,y,z\in S \text{ s.t. } p(x,\{x,y\},\{a\}) > p(y,\{x,y\}, \{a\}) \]\[\text{ and } p(x,\{x,z\},\{a\}) > p(z,\{x,z\}, \{a\})\big]\]\[ \implies \big[p(y,S,\{a\}) = p(z,S, \{a\})\big].\]
\end{axiom}

This Axiom states that (when a single attribute is considered) if there is an alternative $x$ that has a higher choice probability than two other alternatives \(y,z\) when offered in two menus $\{x,y\}$ and $\{x,z\}$ respectively, then the choice probabilities of the latter two alternatives are equal in the menu $S$ whenever $x,y,z \in S$. Since the choice probability of \(x\) is higher than that of \(y\) and \(z\) in the binary sets $\{x,y\}$ and $\{x,z\}$, respectively, \(y\) and \(z\) are inferior to $x$ according to the choice probability in a binary comparison. This Axiom states that the inferior elements in a menu are uniform in terms of choice probability (where only a singleton attribute is relevant). In some sense, either of the inferior alternatives (which are $y,z$) cannot be the most preferred alternative in the set $S$ that contains the superior alternative $x$. This reflects the fact that only the \textit{top} alternative matters (in the case of singleton attribute sets) and the inferior alternatives are chosen with equal probability. One can ask that, given the premises of \textbf{UIA} ($x$ being superior to both $y$ and $z$), and assuming without loss of generality that $y$ is superior to $z$ (that is, $p(y, \{y,z\},\{a\})> p(z, \{y,z\},\{a\})$), should it not be the case that $p(y,S,\{a\}) > p(z,S,\{a\})$ (instead of  $p(y,S,\{a\}) = p(z,S,\{a\})$ as required by \textbf{UIA})? In this context, we suggest that the answer is negative.\footnote{This is relaxed in AAR.} This is because the DM cares only about the top alternative and additionally, has limited attention. We note that although $y$ is superior to $z$, it is also true that $y$ is not the top alternative in $(S, \{a\})$ as long as $x \in S$. While $y$ and $z$ are chosen with equal probability in $(S, \{a\})$, note that \textbf{UIA} allows for $p(y,T, \{a\})>p(z,T,\{a\})$, where $x \not \in T$ and $y$ is the top alternative in the choice problem, $(T,\{a\})$,\footnote{Such a $T$ always exists; for example, take $T=\{y,z\}$.} in terms of binary choice probabilities (that is, $p(y, \{w,y\}, \{a\})>p(w,\{w,y\},\{a\}), \forall w \in T \setminus\{y\}$). In summary, \textbf{UIA} allows the DM to `experiment' among inferior alternatives (whenever a single attribute is relevant) with equal choice probabilities.

\begin{axiom}[Dominance (\textbf{DOM})] An SCR $p$ satisfies \textbf{DOM} if for all $a\in \mathbf A$ and for all $S\in \mathcal{X}$ such that $x,y\in S$,
 \[ 
\big[p(x, \{x,y\}, \{a\}) > p(y,\{x,y\}, \{a\}) \mbox{ for all } y\in S\setminus \{x\}\big]
\]
\[
\implies \big[p(x, S, \{a\}) = 
p(x,S \backslash \{y\}, \{a\}) \mbox{ for all } y\in S\setminus \{x\}\big].\] 
\end{axiom}

\textbf{DOM} states that for any menu $S$ and any alternative $x$ when only one attribute is considered if the choice probability of \(x\) is strictly greater than that of \(y\) 
when offered in the menu $\{x, y\}$ for any $y\in S\backslash\{x\}$ then the choice probability of \(x\) does not change when $y$ is removed from $S$ with the same attribute being considered. This is because if $x$ was chosen with a higher probability than $y$ in $(\{x,y\}, \{a\})$ then it is conceivable that $x$ is better than $y$ in the attribute $a$ as it is easier to focus with a single relevant attribute. Thus, the removal of any dominated element $y$ should not influence the probability of choosing the dominant element $x$ when the same attribute $a$ is in question. This axiom is also similar to the independence from irrelevant alternatives (IIA), which is widely used in the individual and collective choice literature, which is suitably modified in our framework (in our case, the irrelevant alternatives are the dominated alternatives).

\begin{axiom}[Binary difference (\textbf{BD})] An SCR $p$ satisfies \textbf{BD} if for all $a\in \mathbf A$ and for all \(x,y \in X, x\neq y\),
    \(p(x,\{x,y\},\{a\}) \not = p(y,\{x,y\},\{a\})\).
\end{axiom}

This Axiom rules out equality of choice probabilities when the menu consists of two alternatives and a single attribute. In such situations, the DM chooses one of the two distinct alternatives with a strictly higher probability than the other. This is similar to the strict preference restriction on binary comparison between alternatives, which is widely used in social choice theory and individual choice theory. We use a weak form where the strict inequality of the binary choice probabilities is required to hold only for singleton attribute sets.

\begin{axiom}[Stochastic Transitivity (\textbf{ST})]
An SCR $p$ satisfies \textbf{ST} if for all $a\in \mathbf A$ and for all \(x,y,z \in X\), 
\[\Big[p(x,\{x,y\},\{a\}) > p(y,\{x,y\},\{a\}) \text { and } p(y,\{y,z\},\{a\}) > p(z,\{y,z\},\{a\})\Big] \] \[\implies p(x,\{x,z\},\{a\}) > p(z,\{x,z\},\{a\}).\]
\end{axiom}

This Axiom is a variation of Weak Stochastic Transitivity in \cite{Fishburn1973} with strict inequalities and holds when a single attribute is considered. It is a straightforward version of the transitivity postulate modified for the attribute setting. Although there is some experimental evidence against transitivity, it is used in most choice-theoretical models. However, we weaken the transitivity and apply it only to cases with singleton attribute sets.


\subsection{Results on MAR}\label{sec4}

We are now ready to state our first main result.

\begin{theorem}\label{thm1} An SCR satisfies \textbf{IN}, \textbf{FS}, \textbf{IS}, \textbf{UIA}, \textbf{DOM}, \textbf{BD} and \textbf{ST} if and only if it is a \textbf{MAR}.
\end{theorem}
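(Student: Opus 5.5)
The plan is to prove necessity by direct verification and sufficiency by constructing $\gamma$ and the orderings $\succeq_a$ from the data. Throughout I assume, as is implicit, that $\mathbf{X}$ has at least two elements.

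For necessity, fix a MAR $p^S$ and check each axiom from the formula.
\begin{itemize}
\item \textbf{IN}: for disjoint $A,B$ we have $(A\cup B)_x(S)=A_x(S)\cup B_x(S)$ and $(A\cup B)^c_x(S)=A^c_x(S)\cup B^c_x(S)$, both disjoint unions. The three cases of the definition are all instances of the single product $\prod_{a\in A_x(S)}\gamma(a)\prod_{b\in A^c_x(S)}\frac{1-\gamma(b)}{|S|-1}$, with an empty product equal to one. So $p^S(x,S,A\cup B)$ factors as $p^S(x,S,A)\,p^S(x,S,B)$.
\item Singleton attribute $\{a\}$ and $|S|\ge 2$: the $\succeq_a$-top element of $S$ gets $\gamma(a)$, and each of the other $|S|-1$ elements gets $(1-\gamma(a))/(|S|-1)$. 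Summing gives \textbf{FS}.
\item \textbf{UIA}: the non-top elements of $S$ are all equal, so any two of them receive the same probability.
\item \textbf{DOM}: the top element keeps probability $\gamma(a)$ in every menu in which it remains top.
\item \textbf{IS}: $p^S(x,\{x\},\{a\})=\gamma(a)$ for every $x$.
\item \textbf{BD}: in a binary menu the probabilities are $\gamma(a)$ and $1-\gamma(a)$, and these differ because $\gamma(a)>\tfrac12$.
\item The same computation shows that binary choice under $\{a\}$ reveals $\succeq_a$: $p(x,\{x,y\},\{a\})>p(y,\{x,y\},\{a\})$ iff $x\succ_a y$. Transitivity of $\succeq_a$ then yields \textbf{ST}.
\end{itemize}

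For sufficiency, let $p$ satisfy the seven axioms.

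First, define the primitives. By \textbf{IS}, $\gamma(a):=p(x,\{x\},\{a\})$ does not depend on $x$. Define $x\succ_a y$ iff $p(x,\{x,y\},\{a\})>p(y,\{x,y\},\{a\})$. \textbf{BD} makes this relation complete and asymmetric on distinct pairs, and \textbf{ST} makes it transitive, so $\succeq_a$ is a linear order.

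Second, compute the singleton-attribute probabilities. Take $S$ with $|S|\ge 2$ and let $x$ be its $\succeq_a$-top element. The hypothesis of \textbf{DOM} holds for $x$ in $S$ and is inherited by every subset of $S$ containing $x$. Applying \textbf{DOM} repeatedly, removing one dominated element at a time, gives $p(x,S,\{a\})=p(x,\{x,y\},\{a\})=p(x,\{x\},\{a\})=\gamma(a)$. In the binary menu, \textbf{FS} gives $p(y,\{x,y\},\{a\})=1-\gamma(a)$. Since $x\succ_a y$, this yields $\gamma(a)>1-\gamma(a)$, so $\gamma(a)\in(\tfrac12,1]$.

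Third, handle the non-top elements of $S$. Each is beaten by $x$ in binary choice under $a$, so \textbf{UIA} makes all their probabilities equal (this is vacuous when $|S|=2$). \textbf{FS} then forces each to equal $(1-\gamma(a))/(|S|-1)$. Hence $p(\cdot,S,\{a\})$ coincides with MAR for singleton attribute sets. For $|S|=1$ this is immediate from the definition of $\gamma$, since $A_x(\{x\})=A$.

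Finally, extend to general attribute sets. Write $A=\{a_1,\dots,a_k\}$ and apply \textbf{IN} inductively to the disjoint sets $\{a_1,\dots,a_{k-1}\}$ and $\{a_k\}$. This gives $p(x,S,A)=\prod_{a\in A}p(x,S,\{a\})$. Grouping the factors by whether $a\in A_x(S)$ or $a\in A_x^c(S)$ yields exactly the three cases of the MAR formula.

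The main obstacle I expect is the second step. The iterated use of \textbf{DOM} must reduce all the way down to the singleton menu so that it links with the \textbf{IS}-defined $\gamma(a)$. This requires checking that the dominance hypothesis survives each removal, which holds because binary comparisons with $x$ do not depend on the menu. It also relies on \textbf{FS} to pin down the binary probability $1-\gamma(a)$, which is what delivers $\gamma(a)>\tfrac12$.
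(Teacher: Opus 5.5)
Your proposal is correct and follows essentially the same route as the paper. The paper recovers $\succeq_a$ from \textbf{BD} and \textbf{ST}, defines $\gamma(a)$ via \textbf{IS}, uses iterated \textbf{DOM} for the top element and \textbf{FS} with \textbf{UIA} for the others, and factors across attributes with \textbf{IN}. You are slightly more careful than the paper on two points: you use the \textbf{DOM} step to identify the binary winner's probability with $\gamma(a)$ when deriving $\gamma(a)>\tfrac12$, and you note that this step requires $|\mathbf{X}|\ge 2$.
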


The proof works from the bottom up. We first use BD and ST on binary comparisons under a single attribute to construct the attribute-specific linear order $\succeq_a$, then use DOM to show that an alternative's choice probability under a single attribute is unaffected by removing anything it beats pairwise, so it only matters whether $x$ is $\succeq_a$-top in $S$, not by how much. FS and UIA then pin down the exact value of that probability, first for the top-ranked alternative and then, by uniformity, for everyone below it. IN does the remaining work: once to split a probability over a union of attribute sets into a product across singletons, and once more to reassemble the pieces into the multiplicative form of Definition \ref{aardef}. 

\begin{proof} (\(\implies\)) Let \(p\) be an SCR that satisfies all the Axioms stated in the statement of Theorem \ref{thm1}. We define a binary relation on \(\mathbf X\) for each attribute \(a \in \mathbf A\) and denote it as \(\succeq_{a}\) as follows:
\begin{displaymath}
    x \succeq_{a} y \iff p(x,\{x,y\},\{a\}) \geq p(y,\{x,y\},\{a\}).
\end{displaymath}

We show that $\succeq_{a}$ is a well-defined linear ordering. We first show that $\succeq_{a}$ is complete. Take any $x,y\in \mathbf X$. If $x=y$, then  $p(x,\{x,y\},\{a\})=p(y,\{x,y\},\{a\})$. Therefore, $[x=y]\implies [x\succeq_{a}y \text{ and } y\succeq_{a} x]$. If $x\neq y$, by \textbf{BD}, $p(x,\{x,y\},\{a\})\neq p(y,\{x,y\},\{a\})$.  Therefore, either $p(x,\{x,y\},\{a\})> p((y,\{x,y\},\{a\})$ which implies $x\succeq_{a}y$ or $p(y,\{x,y\},\{a\})> p(x,\{x,y\},\{a\})$ which implies that $y\succeq_{a} x$. Therefore, $\succeq_{a}$ is complete.

We show that \(\succeq_{a}\) is antisymmetric. Let \(x\succeq_{a}y\) and \(y\succeq_{a}x\) for some $x,y\in \mathbf X$. By definition of $\succeq_{a}$, 
    \[\big[x\succeq_{a}y\big] \implies [p(x,\{x,y\},\{a\}) \geq p(y, \{x,y\}, \{a\})]\]
    \[[y\succeq_{a}x]\implies \big[p(y,\{x,y\},\{a\}) \geq p(x, \{x,y\}, \{a\})\big]\]
    
    Above two statements imply that $p(x, \{x,y\}, \{a\})=p(y, \{x,y\},\{a\})$. However, by \textbf{BD} this can only be true if $x=y$. Hence, \(\succeq_{a}\) is antisymmetric. The 
strict component of $\succeq_{a}$ can be derived as follows: $x\succ_{a}y\iff p(x,\{x,y\}, \{a\}) > p(y,\{x,y\},\{a\})$ for any $x,y\in \mathbf X$ and $a\in A$.

For transitivity, let \(x \succeq_{a}y\) and \(y \succeq_{a} z\) for some $x,y,z\in \mathbf X$. Therefore,  
    $p(x,\{x,y\},\{a\}) \geq p(y,\{x,y\}, \{a\}) $ $\text{ and } p(y,\{y,z\},\{a\}) \geq p(z,\{y,z\}, \{a\}).$ As $x,y,z$ are distinct, by \textbf{BD}, the two aforementioned inequalities are strict. By \textbf{ST},
    
\begin{displaymath}
    \Big[p(x,\{x,y\},\{a\}) > p(y,\{x,y\}, \{a\}) \text{ and } p(y,\{y,z\},\{a\}) > p(z,\{y,z\}, \{a\})\Big]
\end{displaymath} 
\begin{displaymath}
    \implies  \big[p(x,\{x,z\},\{a\}) > p(z,\{x,z\}, \{a\})\big].
\end{displaymath} Hence, \(x \succeq_{a} z\) which establishes that \(\succeq_{a}\) is transitive. Thus, \(\succeq_{a}\) is a linear order.

Using the above binary relation $\succeq_a$, for any \(S \in \mathcal{X}\), \(A \in \mathcal{A}\), and  \(x \in S\), we define 
\[
A_{x}(S):=\{a \in A: x\succeq_{a}y, \forall y \in S\}
\]
 and 
\[
A_{x}^{c}(S):=\{a \in A: \exists y \in S \backslash\{x\} \text{ such that } y \succeq_{a}x\}.
\]
Using \textbf{BD}, it can be verified that $A_{x}(S)\cap A_{x}^{c}(S)=\emptyset$. Since $A= A_{x}(S)\cup A_{x}^{c}(S)$, by \textbf{IN}, 
we have 
\begin{displaymath}
    p(x,S,A)=p(x,S,A_{x}(S))p(x,S,A_{x}^{c}(S)) \text{ for all } x\in S.
\end{displaymath} 

We now decompose the terms, $p(x, S, A_{x}(S))$ and $p(x, S, A_{x}^{c}(S))$ separately. By repeatedly applying  \textbf{IN}, \begin{displaymath}
    p(x,S,A_{x}(S))=\prod_{a \in A_{x}(S)}p(x,S,\{a\}) \text{ and } 
    p(x,S,A_{x}^{c}(S))=\prod_{b \in A_{x}^{c}(S)}p(x,S,\{b\}).
\end{displaymath} 

Therefore, 
\begin{displaymath}
    p(x,S,A)=\prod_{a \in A_{x}(S)}p(x,S,\{a\})\prod_{b \in A_{x}^{c}(S)}p(x,S,\{b\}).
\end{displaymath}

 Let $S\backslash\{x\}=\{y_{1},y_{2},\cdots,y_{|S|-1}\}$. Suppose $a\in A_{x}(S)$. By definition of $A_{x}(S)$, $p(x, \{x,y_{k}\} , \{a\})\geq p(y_{k}, \{x,y_{k}\},  \{a\})$ for all $k\in\{1,2,\cdots, |S|-1\}$. Moreover, since $x\notin \{y_{1},\cdots,y_{|S|-1}\}$, by \textbf{BD}, $p(x, \{x,y_{k}\} , \{a\})> p(y_{k}, \{x,y_{k}\},  \{a\})$ for all $k\in\{1,2,\cdots, |S|-1\}$. Therefore, $x\succ_{a}y_{k}$ for all $k\in \{1,2,\cdots,|S|-1\}$. By repeatedly applying \textbf{DOM},
\[p(x,S,\{a\})=p(x,S\backslash\{y_{1}\},\{a\})= p(x, S\backslash\{y_{1},y_{2}\},\{a\})=\cdots =p(x,\{x\},\{a\}).\;\; (\ast)\]

By \textbf{IS}, for any $a\in A$, we can define $\gamma(a)$ as $\gamma(a):=p(w,\{w\},\{a\})$ for any $w\in \mathbf X$. Therefore, by \textbf{IN} and Equation $(\ast)$,
\[p(x, S, A_{x}(S))=\prod_{a\in A_{x}(S)}p(x, \{x\}, \{a\})=\prod_{a\in A_{x}(S)}\gamma(a).\]

Suppose $A_{x}^{c}(S)=\{a_{1},a_{2},\cdots,a_{|A_{x}^{c}(S)|}\}$. Let $y_{1}=\max(\succeq_{a_{1}}, S)$ be the most preferred alternative in $S$ according to $\succeq_{a_{1}}$, i.e., $y\succeq_{a_{1}} y'$ for all $y'\in S$. Similarly, let $y_{k}=\max(\succeq_{a_{k}}, S)$ for all $k\in\{1,2,\cdots,|A_{x}^{c}(S)|\}$. Applying \textbf{FS}, 

\begin{displaymath}
    p(y_{1}, S, \{a_{1}\})+\sum_{z \in S \backslash\{y_{1}\}}p(z,S,\{a_{1}\})=1
\end{displaymath}

 By \textbf{UIA}, for any $x,z\in S\setminus \{y_1\}$, $[y_{1}\succ_{a_{1}}x \text{ and }  y_{1}\succ_{a_{1}}z]\implies [p(x, S, \{a_{1}\})=p(z, S,  \{a_{1}\})]$. Therefore, 
 
 \begin{displaymath}
    p(y_{1}, S, \{a_{1}\})+(|S|-1)p(x,S,\{a_{1}\})=1.
\end{displaymath}

This implies that for $x\in S\setminus \{y_{1}\}$ such that $y_{1}\succ_{a_{1}}x$, we have 

\begin{displaymath}
    p(x,S,\{a_{1}\})= \dfrac{1-p(y_{1},S,\{a_{1}\})}{|S|-1}.
\end{displaymath}

Similarly, we can show that \begin{displaymath}
    p(x,S,\{a_{k}\})= \dfrac{1-p(y_{k},S,\{a_{k}\})}{|S|-1}  \;\;\;  \text{ for all } k\in \{1,2,\cdots,|A_{x}^{c}(S)|\}.
\end{displaymath} 
 Therefore, 

\begin{displaymath}
    p(x,S,\{a_{k}\})=\dfrac{1-\gamma(a_{k})}{(|S|-1)}, \text{ for all } a_{k} \in A_{x}^{c}(S).\;\;\;\;(\dagger) 
\end{displaymath}

By \textbf{IN} and the above equation $(\dagger)$,
\begin{displaymath}
    p(x,S,A_{x}^{c}(S))=\dfrac{\prod_{b \in A_{x}^{c}(S)}(1-\gamma(b))}{(|S|-1)^{|A_{x}^{c}(S)|}}.
\end{displaymath}
 
Therefore,
\begin{displaymath}
    p(x,S,A)=p(x, S, A_{x}(S)) p(x, S, A_{x}^{c}(S))=\prod_{a\in A_{x}(S)}\gamma(a)\dfrac{\prod_{b \in A_{x}^{c}(S)}(1-\gamma(b))}{(|S|-1)^{|A_{x}^{c}(S)|}}.
\end{displaymath}

If \(A=A_{x}(S)\), by \textbf{IN},
\begin{displaymath}
    p(x,S,A)=p(x,S,A_{x}(S))=\prod_{a\in A_{x}(S)}\gamma(a).
\end{displaymath}
Similarly, if \(A=A_{x}^{c}(S)\), by \textbf{IN},
\begin{displaymath}
    p(x,S,A)=p(x,S,A_{x}^{c}(S))=\dfrac{\prod_{b \in A_{x}^{c}(S)}(1-\gamma(b))}{(|S|-1)^{|A_{x}^{c}(S)|}}.
\end{displaymath}

We now show that $\gamma(a)>\dfrac{1}{2}$ for all $a\in \mathbf A$. By \textbf{BD}, for any $x,y\in \mathbf X$ with $x\neq y$, we have $p(x,\{x,y\},\{a\})\neq p(y,\{x,y\},\{a\})$. W.l.o.g. suppose $p(x, \{x,y\},\{a\})>p(y,\{x,y\},\{a\})$. By \textbf{FS}, $p(x, \{x,y\}, \{a\})=1-p(y,\{x,y\},\{a\})>\dfrac{1}{2}.$ Therefore, $\gamma(a)>\dfrac{1}{2}$ for all $a\in \mathbf A$. Moreover, we have shown that $p=p^{s}$ is a MAR with $\succeq_{a}$ and $\gamma(a)$ as parameters for any $a\in \mathbf A$. This completes the proof for the ``Only If" part of the theorem.

$(\impliedby)$ Now, we prove the ``If" part for each Axiom separately. We show that any SCR, $p^{s}$ which is a MAR satisfies all the Axioms mentioned in the statement of Theorem \ref{thm1}.

\begin{claim}
    MAR satisfies \textbf{IN}.
\end{claim}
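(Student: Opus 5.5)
The plan is to rewrite the three-case definition of MAR as one product over attributes; \textbf{IN} then follows because the index sets split cleanly over disjoint attribute sets.

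First, fix $S\in\mathcal X$ and $x\in S$. For each attribute $a\in\mathbf A$ define the per-attribute factor
\[
f_a(x,S)=\begin{cases}\gamma(a), & \text{if } x\succeq_a y \text{ for all } y\in S,\\[2pt] \dfrac{1-\gamma(a)}{|S|-1}, & \text{otherwise.}\end{cases}
\]
The second case only arises when $|S|\ge 2$, because for $S=\{x\}$ every attribute trivially ranks $x$ first. With the convention that an empty product equals $1$, the three cases in the definition of MAR combine into the single formula
\[
p^S(x,S,A)=\prod_{a\in A}f_a(x,S).
\]
To check this, note that $A_x(S)$ and $A_x^c(S)$ partition $A$: they are disjoint because $\succeq_a$ is antisymmetric, and together they exhaust $A$ because $\succeq_a$ is complete. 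The first case of the definition is the case where both parts are nonempty, and the other two cases are exactly the instances where one part is empty. The denominator $(|S|-1)^{|A_x^c(S)|}$ is simply the product of the $|A_x^c(S)|$ factors $\frac{1}{|S|-1}$.

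Second, the membership test in $f_a$ depends only on $a$, $x$ and $S$, not on the ambient attribute set. Hence for disjoint $A,B\in\mathcal A$ we have
\[
(A\cup B)_x(S)=A_x(S)\cup B_x(S)\quad\text{and}\quad (A\cup B)_x^c(S)=A_x^c(S)\cup B_x^c(S),
\]
and both unions are disjoint. Splitting the product over the disjoint union $A\cup B$ then gives
\[
p^S(x,S,A\cup B)=\prod_{a\in A}f_a(x,S)\prod_{b\in B}f_b(x,S)=p^S(x,S,A)\,p^S(x,S,B).
\]
For $x\notin S$ both sides equal $0$, so the identity holds there trivially.

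The only step needing care is the unification of the three cases of the definition into one product. Two points must be checked. The singleton menu $|S|=1$ must never produce a division by $|S|-1=0$; this is guaranteed because $A_x^c(\{x\})=\emptyset$. The degenerate cases $A_x(S)=\emptyset$ and $A_x^c(S)=\emptyset$ must also match the empty-product convention. Once both are verified, \textbf{IN} is immediate.
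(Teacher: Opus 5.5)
Your proof is correct and follows the same route as the paper. Both arguments split $(A\cup B)_x(S)$ and $(A\cup B)_x^c(S)$ into their disjoint $A$- and $B$-parts and then regroup the product. Your step of first collapsing the three cases of MAR into the single product $\prod_{a\in A} f_a(x,S)$ is a tidier packaging of that argument. It also handles the degenerate cases and the singleton menu uniformly, which the paper treats only in its main case and leaves as ``easy to verify.''
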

\medskip
\noindent
{\bf Proof.} Let $p^{s}$ be a MAR. Suppose \(A,B \in \mathcal{A}\) and \(A\cap B= \emptyset\). Assuming $A_x(S)\neq \emptyset$ and $B_x(S)\neq \emptyset$,
for any $x\in S$, we have 

\begin{displaymath}
        p^{s}(x,S, A \cup B)= \prod_{a \in (A \cup B)_{x}(S)} \gamma(a) \dfrac{\prod_{b \in (A \cup B)_{x}^{c}(S) }(1-\gamma(b))}{(|S|-1)^{|(A\cup B)_{x}^{c}(S)|}}
    \end{displaymath}
\begin{displaymath}
    = \prod_{a \in A_{x}(S)}\gamma(a)\prod_{a \in B_{x}(S)}\gamma(a)\left(\dfrac{\prod_{b \in A_{x}^{c}(S)}(1-\gamma(b))}{(|S|-1)^{|A_{x}^{c}(S)|}}\right)\left(\dfrac{\prod_{b \in B_{x}^{c}(S)}(1-\gamma(b))}{(|S|-1)^{|B_{x}^{c}(S)|}}\right)
\end{displaymath}
\begin{displaymath}
    =\left[\prod_{a \in A_{x}(S)}\gamma(a)\dfrac{\prod_{b \in A_{x}^{c}(S)}(1-\gamma(b))}{(|S|-1)^{|A_{x}^{c}(S)|}}\right]\left[\prod_{a \in B_{x}(S)}\gamma(a)\dfrac{\prod_{b \in B_{x}^{c}(S)}1-\gamma(b)}{(|S|-1)^{B_{x}^{c}(S)|}}\right]
\end{displaymath}
\begin{displaymath}
    =p^{s}(x,S,A) p^{s}(x,S,B)
\end{displaymath}
The other cases ($A_x(S)=A$ or $A_x^c(S)=A$) are much simpler and easy to verify. 
\begin{claim}
    MAR satisfies \textbf{FS}.
\end{claim}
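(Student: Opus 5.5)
We need to show that for every $S\in\mathcal X$ with $|S|\ge 2$ and every $a\in\mathbf A$, $\sum_{x\in S}p^{S}(x,S,\{a\})=1$. The plan is to evaluate the MAR formula directly for a singleton attribute set $A=\{a\}$.

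With one attribute, for each $x\in S$ exactly one of $A_x(S)=\{a\}$ or $A_x^c(S)=\{a\}$ holds. This is because $\succeq_a$ is a linear order, so $x$ either weakly beats every element of $S$ or is beaten by some $y\neq x$. By antisymmetry, these two cases cannot both occur when $|S|\ge2$. Hence the mixed first case of the definition never arises, and only the second and third cases of the MAR formula are used.

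Next, since $\succeq_a$ is a linear order and $S$ is finite and nonempty, $S$ has a unique $\succeq_a$-maximal element $x_a=\max(\succeq_a,S)$. For $x_a$ we have $A_{x_a}(S)=\{a\}$, so $p^S(x_a,S,\{a\})=\gamma(a)$. Each of the remaining $|S|-1$ alternatives $x$ satisfies $A_x^c(S)=\{a\}$, with $|A_x^c(S)|=1$, so
\[
p^S(x,S,\{a\})=\frac{1-\gamma(a)}{|S|-1}.
\]
Summing these gives $\gamma(a)+(|S|-1)\cdot\frac{1-\gamma(a)}{|S|-1}=1$, and the division is legitimate because $|S|\ge2$.

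The only point requiring care is the case bookkeeping: we must confirm that no alternative other than $x_a$ lands in $A_x(S)$, i.e. that the top element is unique. This follows from antisymmetry of the linear order, since two distinct maximal elements would each weakly dominate the other. There is no real obstacle; the verification is a short computation. It is also worth noting that each probability lies in $[0,1]$, since $\gamma(a)\in(\tfrac12,1]$, so $p^S$ is a valid SCR.
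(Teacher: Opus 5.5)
Your proposal is correct and is essentially the paper's own argument: with $A=\{a\}$, the unique $\succeq_a$-maximal element of $S$ receives $\gamma(a)$, each of the other $|S|-1$ alternatives receives $\frac{1-\gamma(a)}{|S|-1}$, and these sum to one. Your explicit antisymmetry check that the mixed case of the MAR formula never arises for a singleton attribute set is a detail the paper leaves implicit, but it does not change the route.
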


\medskip
\noindent
{\bf Proof.} Let $p^{s}$ be a MAR. Since $S$ has finitely many elements, we can list \(S\) as \(S=\{x_{1},\cdots,x_{|S|}\}\). 
As \(\succeq_{a}\) is a linear order over \(S\), w.l.o.g we can order \(S\) as \(x_{1} \succeq_{a}x_{2}\succeq_{a}
\cdots \succeq_{a}x_{|S|}\). Then, \(a \in A_{x_{1}}(S)\) and \(a \in A_{x_{i}}^{c}(S), \forall i \in \{2,\cdots,|S|\}\). By the definition of MAR, \begin{displaymath}
    p^{s}(x_{1},S,\{a\})=\gamma(a)
\end{displaymath}
and
\begin{displaymath}
    p^{s}(x_{i},S,\{a\})=\dfrac{1-\gamma(a)}{|S|-1}, \text{ for all } i \in \{2,\cdots, |S|\}.
\end{displaymath}

Therefore,
\begin{align*}
\sum_{x \in S}p^{s}(x,S,\{a\})&=p^{s}(x_{1},S,\{a\})+\sum_{i=2}^{|S|}p^{s}(x_{i},S,\{a\})\\
    &=\gamma(a)+\sum_{i=2}^{|S|}\dfrac{1-\gamma(a)}{|S|-1}\\
    &=\gamma(a)+(|S|-1)\dfrac{1-\gamma(a)}{(|S|-1)}\\
    &=\gamma(a)+ 1-\gamma(a)\\
    &=1
\end{align*}

\begin{claim} MAR satisfies \textbf{IS}.
\end{claim}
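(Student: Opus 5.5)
The plan is to evaluate the MAR formula directly on a singleton menu with a singleton attribute set and observe that the result depends only on the attribute.

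Fix any $a\in\mathbf A$ and any $x\in\mathbf X$, and take $S=\{x\}$, $A=\{a\}$.

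First compute the two attribute sets. Since $\succeq_a$ is a linear order, it is reflexive, so $x\succeq_a x$. As $x$ is the only element of $S$, we get $a\in A_x(\{x\})$. Next, $A_x^c(\{x\})$ requires some $y\in S\setminus\{x\}=\emptyset$, so $A_x^c(\{x\})=\emptyset$. Hence $A_x(\{x\})=A=\{a\}$, and we are in the second case of the definition of MAR, which gives
\[
p^{S}(x,\{x\},\{a\})=\prod_{b\in A_x(\{x\})}\gamma(b)=\gamma(a).
\]

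Repeating this for any $y\in\mathbf X$ gives $p^{S}(y,\{y\},\{a\})=\gamma(a)$. Since $\gamma(a)$ is a parameter of the attribute alone and does not depend on the alternative, it follows that $p^{S}(x,\{x\},\{a\})=p^{S}(y,\{y\},\{a\})$, which is \textbf{IS}.

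The only point that needs care is the case selection. When $|S|=1$, the first and third cases of the definition would involve the factor $(|S|-1)^{|A_x^c(S)|}$, which here is the ill-defined expression $0^{0}$. Those cases are excluded because $A_x^c(\{x\})=\emptyset$, so only the second case applies and the computation is immediate.

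The same argument works for any $A\in\mathcal A$: it gives $p^{S}(x,\{x\},A)=\prod_{b\in A}\gamma(b)$, which is also independent of $x$. This yields the stronger version of \textbf{IS} mentioned in the paper's footnote, though only the singleton-attribute case is needed here.
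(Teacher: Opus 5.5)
Your proof is correct and matches the paper's own argument: both note that reflexivity of $\succeq_a$ gives $A_x(\{x\})=\{a\}$, so the second case of the MAR definition applies and yields $p^{S}(x,\{x\},\{a\})=\gamma(a)$, which does not depend on $x$. Your additional remarks are accurate but not needed for the claim: $A_x^c(\{x\})=\emptyset$ excludes the other two cases, and the same computation gives $p^{S}(x,\{x\},A)=\prod_{b\in A}\gamma(b)$ for arbitrary $A$.
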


\medskip
\noindent
{\bf Proof.} Let $p^{s}$ be a MAR. Let $x,y\in \mathbf X$ and $a\in \mathbf A$. Letting $A=\{a\}$, we have $A_{x}(\{x \})=A$ since $x\succeq_{a}x$ and $A_{y}(\{y\})=A$ since $y\succeq_{a}y$. By the definition of MAR, 
\[
p^{s}(x,\{x\},\{a\})=\gamma(a)=p^{s}(y,\{y\},\{a\}).
\]

\begin{claim}
    MAR satisfies \textbf{UIA}.
\end{claim}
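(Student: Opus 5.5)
We verify \textbf{UIA} directly from the single-attribute form of MAR, reusing the computation from the proof that MAR satisfies \textbf{FS}. Fix $S\in\mathcal{X}$, $a\in\mathbf A$, and $x,y,z\in S$ with $p^{s}(x,\{x,y\},\{a\})>p^{s}(y,\{x,y\},\{a\})$ and $p^{s}(x,\{x,z\},\{a\})>p^{s}(z,\{x,z\},\{a\})$.

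The first step is to translate these binary inequalities into the underlying order. Suppose $x\neq y$. In the menu $\{x,y\}$ under $\{a\}$, the $\succeq_a$-top element receives $\gamma(a)$. The other element receives $(1-\gamma(a))/(2-1)=1-\gamma(a)$. Since $\gamma(a)>\tfrac12$, we have $\gamma(a)>1-\gamma(a)$, so the element with strictly higher binary probability is the $\succeq_a$-top one. Hence the premise gives $x\succ_a y$, and likewise $x\succ_a z$. If $x=y$, the premise would read $p^s(x,\{x\},\{a\})>p^s(x,\{x\},\{a\})$, which is impossible. So $y\neq x$ and $z\neq x$.

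The second step evaluates $p^{s}(y,S,\{a\})$ and $p^{s}(z,S,\{a\})$. Because $x\in S$ and $x\succ_a y$, the alternative $y$ is not $\succeq_a$-maximal in $S$. Therefore $A_y(S)=\emptyset$ and $A^c_y(S)=\{a\}$, and the third case of the MAR definition gives
\[p^{s}(y,S,\{a\})=\frac{1-\gamma(a)}{|S|-1}.\]
The same argument with $x\succ_a z$ gives the identical value for $z$. Note that $|S|\geq 2$ because $x\neq y$, so the denominator is well defined. This establishes $p^{s}(y,S,\{a\})=p^{s}(z,S,\{a\})$. The case $y=z$ is trivial.

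The only delicate point is the first step. It is exactly where the restriction $\gamma(a)\in(\tfrac12,1]$ is needed: it makes binary choice probabilities reveal $\succeq_a$. Without it, the premise of \textbf{UIA} could hold with $y$ actually ranked above $x$. The remaining steps are routine substitution into the definition.
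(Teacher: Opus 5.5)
Your proof is correct and follows the same route as the paper: use $\gamma(a)>\tfrac12$ to read $x\succ_a y$ and $x\succ_a z$ off the binary probabilities, then apply the non-top case of MAR to both $y$ and $z$ in $S$. Your common value $\frac{1-\gamma(a)}{|S|-1}$ is the right one; the paper's display writes $\frac{\gamma(a)}{|S|-1}$, a typo that does not affect the equality. Your treatment of the degenerate cases $x=y$ and $y=z$ is a small tidying that the paper omits.
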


\medskip
\noindent
{\bf Proof.} Consider any $S\in \mathcal{X}$ and $x,y,z\in S$ such that,  $$p^{s}(x,\{x,y\},\{a\}) > p^{s}(y,\{x,y\}, \{a\}) \text{ and } p^{s}(x,\{x,z\},\{a\}) > p^{s}(z,\{x,z\}, \{a\}). $$ 
By the definition of MAR, $p^{s}(x,V,\{a\})\in\{\gamma(a),1-\gamma(a)\}$, where $V \in \mathcal{X}$ and $|V|=2$. Since, $\gamma(a)>\dfrac{1}{2},\forall a \in \mathbf{A}$, $p^{s}(x,\{x,y\},\{a\})=\gamma(a)=p^{s}(x,\{x,z\},\{a\})$. Similarly, $p^{s}(y,\{x,y\},\{a\})=1-\gamma(a)=p^{s}(z,\{x,z\},\{a\})$.

This implies that $x\succ_{a}y$ and $x\succ_{a}z$. Therefore, by definition of MAR, $$p^{s}(y, S,\{a\})=\dfrac{\gamma(a)}{|S|-1}=p^{s}(z, S,\{a\}).$$

\begin{claim}
    MAR satisfies \textbf{DOM}.
\end{claim}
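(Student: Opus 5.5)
The plan is to verify \textbf{DOM} directly from the functional form of MAR with a single attribute. Fix $a\in\mathbf A$, $S\in\mathcal X$ and $x\in S$ satisfying the premise of \textbf{DOM}: $p^{s}(x,\{x,y\},\{a\})>p^{s}(y,\{x,y\},\{a\})$ for every $y\in S\setminus\{x\}$.

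The first step is to translate this premise into the underlying order $\succeq_a$. In a binary menu $\{x,y\}$ with the single attribute $a$, exactly one of the two alternatives is $\succeq_a$-top. By the definition of MAR, the top one receives $\gamma(a)$ and the other receives $\frac{1-\gamma(a)}{2-1}=1-\gamma(a)$. Since $\gamma(a)>\frac12$, we have $\gamma(a)>1-\gamma(a)$, exactly as in the proof of \textbf{UIA}. So the strict binary inequality forces $p^{s}(x,\{x,y\},\{a\})=\gamma(a)$, and hence $x\succ_a y$. This holds for every $y\in S\setminus\{x\}$, so $x$ is the $\succeq_a$-maximal element of $S$, that is, $A_x(S)=\{a\}=A$.

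The second step handles the reduced menu. For any $y\in S\setminus\{x\}$, the set $S\setminus\{y\}$ is nonempty because it contains $x$, and $x$ still beats every remaining element under $\succeq_a$. Therefore $A_x(S\setminus\{y\})=\{a\}$ as well. This includes the degenerate case $S\setminus\{y\}=\{x\}$, where $x\succeq_a x$ trivially. The case ``$A_x(S)=A$'' of the MAR definition then gives
\[
p^{s}(x,S,\{a\})=\gamma(a)=p^{s}(x,S\setminus\{y\},\{a\}).
\]
This is the conclusion of \textbf{DOM}. If $S=\{x\}$, the premise and the conclusion are both vacuous.

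The only point needing care is the first step. The strict binary inequality must pin down the order, and this relies essentially on $\gamma(a)>\frac12$, since otherwise the top and non-top binary probabilities could coincide or reverse. One also has to note that the top-alternative probability $\gamma(a)$ does not depend on $|S|$. That independence is exactly why removing dominated alternatives leaves $x$'s probability unchanged. Everything else is direct substitution into the definition.
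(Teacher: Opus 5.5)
Your proof is correct and follows the same route as the paper's: you convert the strict binary inequalities into $x\succ_a y$ for all $y\in S\setminus\{x\}$, then read off $p^{s}(x,S,\{a\})=\gamma(a)=p^{s}(x,S\setminus\{y\},\{a\})$ from the case $A_x(\cdot)=A$ of the MAR definition. The only difference is that you spell out why $\gamma(a)>\frac12$ pins down the order in the first step, whereas the paper leaves this implicit here and relies on the same observation made in its proofs of \textbf{UIA} and \textbf{BD}.
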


\medskip
\noindent
{\bf Proof.} Let $p^{s}$ be a MAR. Consider any $a\in \mathbf A$, $S\in\mathcal{X}$ and $x\in S$ such that 
$p^{s}(x,\{x,y\},\{a\})>p^{s}(y,\{x,y\},\{a\})$ for all $y\in S\setminus \{x\}$. This implies by MAR that $x\succ_{a}y$
for all $y\in S\setminus \{x\}$. Therefore, 
      \[p^{s}(x,S,\{a\})=\gamma(a)=p^{s}(x,S\backslash\{y\},\{a\}).\]

\begin{claim}
    MAR satisfies \textbf{BD}.
\end{claim}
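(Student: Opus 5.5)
The plan is to compute both binary choice probabilities directly from the definition of MAR and compare them using the restriction $\gamma(a)\in(\tfrac12,1]$. Fix $a\in\mathbf A$ and distinct $x,y\in\mathbf X$, and let $S=\{x,y\}$, $A=\{a\}$. Since $\succeq_a$ is a linear order, exactly one of $x\succ_a y$ or $y\succ_a x$ holds. Without loss of generality, suppose $x\succ_a y$.

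Then $A_x(S)=\{a\}=A$ and $A^c_x(S)=\emptyset$, so the second case of the definition of MAR gives $p^{s}(x,\{x,y\},\{a\})=\gamma(a)$. For $y$, we have $A_y(S)=\emptyset$ and $A^c_y(S)=\{a\}=A$, so the third case gives
\[
p^{s}(y,\{x,y\},\{a\})=\frac{1-\gamma(a)}{|S|-1}=1-\gamma(a).
\]
This is exactly the computation already carried out in the proof that MAR satisfies \textbf{FS}, specialised to $|S|=2$, and it can be cited from there.

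It then remains to show $\gamma(a)\neq 1-\gamma(a)$. This is equivalent to $\gamma(a)\neq\tfrac12$, which holds because the definition of MAR requires $\gamma(a)>\tfrac12$. In fact this gives the strict inequality $p^{s}(x,\{x,y\},\{a\})>p^{s}(y,\{x,y\},\{a\})$, consistent with the orientation of $\succ_a$. The case $y\succ_a x$ is symmetric, with the roles of $x$ and $y$ exchanged.

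There is no real obstacle here. The one point to state explicitly is that this is the only place where the lower bound $\gamma(a)>\tfrac12$ (rather than merely $\gamma(a)\ge 0$) is essential: with $\gamma(a)=\tfrac12$ the two binary probabilities would tie and \textbf{BD} would fail. One should also check that $A_x(S)$ and $A^c_x(S)$ are correctly identified, since for distinct elements of a linear order $y\succeq_a x$ forces $y\succ_a x$.
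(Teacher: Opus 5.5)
Your proof is correct and is essentially the paper's own argument: with $x\succ_a y$, MAR gives $p^{s}(x,\{x,y\},\{a\})=\gamma(a)$ and $p^{s}(y,\{x,y\},\{a\})=1-\gamma(a)$, and these differ because $\gamma(a)>\tfrac12$. Your closing aside is overstated if read globally: \textbf{BD} by itself only needs $\gamma(a)\neq\tfrac12$, whereas with $\gamma(a)<\tfrac12$ the binary winner would be the $\succeq_a$-worse alternative and MAR would then also violate \textbf{DOM}, so the bound $\gamma(a)>\tfrac12$ matters for more than \textbf{BD}.
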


\medskip
\noindent
{\bf Proof.} Let $p^{s}$ be a MAR. Consider any $a\in \mathbf A$ and $x,y\in \mathbf X$ with $x\neq y$. 
Since \(\succeq_{a}\) is a linear order over \(X\), either \(x \succ_{a} y\) or \(y \succ_{a} x\). W.l.o.g. suppose \(x \succ_{a} y\). By the definition of MAR,
    \begin{displaymath}
        p^{s}(x,\{x,y\},\{a\})=\gamma(a)>\dfrac{1-\gamma(a)}{2-1}=1-\gamma(a)=p^{s}(y,\{x,y\},\{a\}).
    \end{displaymath}
    
    Therefore, \(p^{s}(x,\{x,y\},\{a\}) \not = p^{s}(y,\{x,y\}, \{a\})\). 

\begin{claim}
    MAR satisfies \textbf{ST}.
\end{claim}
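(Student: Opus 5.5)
The plan is to show that under MAR, binary choice probabilities under a single attribute exactly reflect the underlying linear order $\succeq_a$. Transitivity of $\succeq_a$ then transfers directly to \textbf{ST}. Fix $a\in\mathbf A$ and distinct $x,y\in\mathbf X$; the case where some of $x,y,z$ coincide is handled at the end. Since $\succeq_a$ is a linear order, exactly one of $x\succ_a y$ or $y\succ_a x$ holds.

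The first step is the binary formula. If $x\succ_a y$, then $A_x(\{x,y\})=\{a\}$ and $A_y^c(\{x,y\})=\{a\}$. The definition of MAR, with $|S|-1=1$, gives
\[
p^{s}(x,\{x,y\},\{a\})=\gamma(a)\quad\text{and}\quad p^{s}(y,\{x,y\},\{a\})=1-\gamma(a).
\]
Since $\gamma(a)>\tfrac12$, it follows that $p^{s}(x,\{x,y\},\{a\})>p^{s}(y,\{x,y\},\{a\})$. Conversely, if $y\succ_a x$, the same computation gives the reverse strict inequality. This is exactly the computation already carried out in the proof that MAR satisfies \textbf{BD}. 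Hence, for distinct $x,y$,
\[
p^{s}(x,\{x,y\},\{a\})>p^{s}(y,\{x,y\},\{a\})\iff x\succ_a y .
\]

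The second step is the transfer. Suppose the premises of \textbf{ST} hold. By the equivalence, $x\succ_a y$ and $y\succ_a z$. Note that the premises force $x\neq y$ and $y\neq z$, since otherwise the two probabilities compared would be equal. Transitivity of the linear order gives $x\succeq_a z$. If $x=z$, antisymmetry together with $x\succ_a y\succ_a x$ yields a contradiction, so $x\neq z$ and therefore $x\succ_a z$. Applying the equivalence once more gives $p^{s}(x,\{x,z\},\{a\})>p^{s}(z,\{x,z\},\{a\})$, which is the conclusion of \textbf{ST}.

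There is no real obstacle here. The only care needed is in the degenerate cases where some of the alternatives coincide, and in the role of the assumption $\gamma(a)>\tfrac12$. That assumption is what makes binary choice probabilities reveal the strict ranking; without it, the strict inequalities in the equivalence could fail.
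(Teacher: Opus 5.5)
Your proof is correct and follows the paper's own argument. You use the binary computation from the \textbf{BD} claim (which relies on $\gamma(a)>\tfrac12$) to show that the strict binary inequality under $\{a\}$ holds exactly when $x\succ_a y$, and transitivity of $\succ_a$ then gives the conclusion; your explicit handling of the cases where some of $x,y,z$ coincide is a small addition the paper leaves implicit.
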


\medskip
\noindent
{\bf Proof.} Let $p^{s}$ be a MAR.  Suppose for some $x,y,z\in X$, 
$$p^{s}(x,\{x,y\},\{a\}) > p^{s}(y,\{x,y\},\{a\}) \text { and } p^{s}(y,\{y,z\},\{a\}) > p^{s}(y,\{y,z\},\{a\}).$$ 

Since $p^{s}$ is a MAR with respect to $\succeq_{a}$ for any $a\in A$, the above two strict inequalities imply (as shown earlier) that $x\succ_{a}y$ and $y\succ_{a}z$. Since $\succ_{a}$ is transitive, $x\succ_{a}z$ which implies that $p(x,\{x,z\},\{a\})>p(z,\{x,z\},\{a\})$. \end{proof}

\subsection{Uniqueness Of Parameters}\label{sec4.1}

We now establish the uniqueness of parameters in the model which concern the salience of attributes and the attribute-based set of preferences over the alternatives, namely, \((\gamma(a))_{a\in \mathbf A}\) and  \((\succeq_{a})_{a\in \mathbf A}\).

\begin{proposition}
    The parameters \(((\gamma(a))_{a\in \mathbf A},(\succeq_{a})_{a\in \mathbf A})\) for a MAR $p^{s}$ are unique\footnote{That is, if \(((\gamma(a))_{a\in \mathbf A},(\succeq_{a})_{a\in \mathbf A})\) and \(((\gamma^{'}(a))_{a\in 
\mathbf A},(\succeq^{'}_{a})_{a\in \mathbf A})\) both represent 
$p$ under MAR, then $\gamma(a)=\gamma^{'}(a)$ and $\succeq_a=\succeq^{'}_a,$ $\forall a \in \mathbf{A}$. } for any given choice data.
\end{proposition}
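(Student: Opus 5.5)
The plan is to recover both families of parameters directly from a small slice of the choice data, namely singleton menus and binary menus under singleton attribute sets. Since any two representations must generate the same $p$, each parameter will then be forced to coincide. Concretely, I would suppose that $((\gamma(a))_{a},(\succeq_a)_{a})$ and $((\gamma'(a))_{a},(\succeq'_a)_{a})$ both represent the same SCR $p$ as a MAR, and derive $\gamma=\gamma'$ first, then $\succeq_a=\succeq'_a$.

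\textbf{Step 1 (attention parameters).} Fix $a\in\mathbf A$ and any $w\in\mathbf X$, and look at the choice problem $(\{w\},\{a\})$. Here $A_w(\{w\})=\{a\}=A$, because $w\succeq_a w$ trivially and the same holds for $\succeq'_a$. The second case of the MAR definition then applies under both representations, giving
\[
\gamma(a)=p(w,\{w\},\{a\})=\gamma'(a).
\]
This step does not even use the orderings.

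\textbf{Step 2 (orderings).} Fix $a$ and distinct $x,y$, and suppose $x\succ_a y$. Under the first representation, the MAR formula with $|S|=2$ gives $p(x,\{x,y\},\{a\})=\gamma(a)$ and $p(y,\{x,y\},\{a\})=1-\gamma(a)$, exactly as in the verification of \textbf{BD} in Theorem~\ref{thm1}. Because $\gamma(a)>\tfrac12$, it follows that $p(x,\{x,y\},\{a\})>p(y,\{x,y\},\{a\})$.

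Now suppose, for contradiction, that $y\succ'_a x$. The same computation under the primed representation gives $p(y,\{x,y\},\{a\})=\gamma'(a)>\tfrac12>p(x,\{x,y\},\{a\})$, which contradicts the inequality above. Since $\succeq'_a$ is a linear order and $x\neq y$, we must have $x\succ'_a y$. Thus the strict parts agree on every pair, and by antisymmetry and reflexivity the linear orders coincide: $\succeq_a=\succeq'_a$.

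The only delicate point is Step 2. The argument needs the binary choice probabilities to separate the two alternatives, and this rests entirely on the restriction $\gamma(a)\in(\tfrac12,1]$. If $\gamma(a)=\tfrac12$ were allowed, binary choice would be a tie and the ordering would not be identified. I would therefore state explicitly that this is exactly where the parameter restriction, equivalently \textbf{BD}, does the work.

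Finally, I would remark that the ordering on all of $\mathbf X$ is pinned down pairwise, so uniqueness holds on the whole of $\mathbf X$ and not merely on the menus where it is used.
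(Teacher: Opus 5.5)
Your proof is correct and follows the paper's own argument: you identify $\gamma(a)$ from the singleton problem $(\{w\},\{a\})$, and you show that a disagreement between $\succeq_a$ and $\succeq'_a$ on a pair $\{x,y\}$ is incompatible with the binary probabilities $\gamma(a)$ and $1-\gamma(a)$ once $\gamma(a)>\tfrac12$. The paper reaches the same contradiction by writing $\gamma(a)=1-\gamma(a)$ directly, whereas you compare strict inequalities under each representation, which states the same step a little more cleanly.
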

\begin{proof} Let $p^{s}$ be a MAR that represents a choice data $p$. 
    Suppose the family of parameters \(((\gamma(a))_{a\in \mathbf A},(\succeq_{a})_{a\in \mathbf A})\) and \(((\gamma^{'}(a))_{a\in 
\mathbf A},(\succeq^{'}_{a})_{a\in \mathbf A})\) both represent 
$p$ under MAR $p^s$. We show that \(((\gamma(a))_{a\in \mathbf A},(\succeq_{a})_{a\in \mathbf A})\)\(=((\gamma^{'}(a))_{a\in \mathbf A},(\succeq^{'}_{a})_{a\in 
\mathbf A})\). For any $x\in 
\mathbf X$, by definition of MAR, we have  
\[
p(x,\{x\},\{a\})=\gamma(a)=\gamma^{'}(a)=p(x,\{x\},\{a\}), \forall \, a \in \mathbf A.
\] 
Thus, \((\gamma(a))_{a\in \mathbf A}=(\gamma^{'}(a))_{a\in \mathbf A}\). 
    
 Now, suppose that \((\succeq_{a})_{a\in \mathbf A}\not = (\succeq^{'}_{a})_{a\in \mathbf A} \). Then, \(\exists \, x,y \in X\) with $x\neq y$ and \(\exists \, a \in A\) such that \(x \succeq_{a}y\) and \(\ y\succeq^{'}_{a}x\). By MAR and by the fact that \(\gamma(a)\) is unique, we have 
\[
p(x,\{x,y\},\{a\})=\gamma(a)=\dfrac{1-\gamma(a)}{2-1}=1-\gamma(a).
\]
 This implies that \(\gamma(a)=\dfrac{1}{2}\). However, this is a contradiction to the definition of MAR since $\gamma(a)>\dfrac{1}{2}$ for all $a \in \mathbf A$. Hence, 
\((\succeq_{a})_{a\in \mathbf A}= (\succeq^{'}_{a})_{a\in {\mathbf A}}\). 
\end{proof}

\subsection{Regularity in Alternatives and in Attributes}\label{sec4.2}
We first show that MAR is regular in attributes.

\begin{definition}[Regular in Attributes]
    An SCR $p$ is said to be Regular in Attributes if \(p(x,S,A) \geq p(x,S,B)\), whenever \(A, B \in \mathcal{A}\) with $A\subseteq B$, and $x \in S$.
\end{definition}

An interpretation of regularity in attributes is as follows. As more attributes become relevant, the choices become more difficult and the probability that an alternative is chosen weakly decreases (hence the probability that the empty set or the default alternative is chosen weakly increases) when the set of alternatives is fixed. 

\begin{proposition}\label{reginatt}
    MAR satisfies Regularity in Attributes.
\end{proposition}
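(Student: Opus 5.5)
The plan is to use the product structure established in the proof of Theorem~\ref{thm1}, namely that MAR satisfies \textbf{IN}. Fix $S\in\mathcal X$, $x\in S$ and $A\subseteq B$ in $\mathcal A$. If $A=B$ there is nothing to show. Otherwise put $C=B\setminus A$, which is nonempty and disjoint from $A$, so $C\in\mathcal A$. By \textbf{IN} (Claim 1),
\[
p^{s}(x,S,B)=p^{s}(x,S,A)\,p^{s}(x,S,C).
\]
Since $p^{s}(x,S,A)\ge 0$, the inequality $p^{s}(x,S,A)\ge p^{s}(x,S,B)$ reduces to showing $p^{s}(x,S,C)\le 1$.

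For that bound, apply \textbf{IN} once more (or read it off the defining formula directly) to write
\[
p^{s}(x,S,C)=\prod_{c\in C}p^{s}(x,S,\{c\}).
\]
It then suffices to check that each single-attribute factor lies in $[0,1]$, and this follows from the explicit values. If $c\in C_x(S)$, the factor equals $\gamma(c)\in(\tfrac12,1]$. If $c\in C^{c}_x(S)$, then $|S|\ge 2$ and the factor equals $\frac{1-\gamma(c)}{|S|-1}\in[0,\tfrac12)$. A product of numbers in $[0,1]$ is again in $[0,1]$, which gives the claim.

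The only step needing care is the bookkeeping of the three cases in the definition of MAR: whether $A_x(S)$ or $A_x^c(S)$ is empty, and similarly for $B$ and $C$. One must also confirm that the sets $A_x(S)$ and $A_x^c(S)$ computed relative to $A$ are the restrictions of those computed relative to $B$. This holds because membership of an attribute in either set depends only on $\succeq_a$ restricted to $S$, not on the ambient attribute set. With that observation the three-case formula is simply the product over singletons in every case, so Claim~1 applies uniformly and no real obstacle remains.

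Finally, when $|S|=1$ every attribute lies in $A_x(S)$, so $p^{s}(x,S,A)=\prod_{a\in A}\gamma(a)$. Adding attributes multiplies this by further factors $\gamma\le 1$, so the inequality holds in this case as well.
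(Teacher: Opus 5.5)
Your proof is correct and follows essentially the same route as the paper: both use the multiplicative factorization of MAR across attributes and note that each added single-attribute factor, $\gamma(c)$ or $\frac{1-\gamma(c)}{|S|-1}$, is at most $1$. The only difference is cosmetic: the paper adds the attributes of $B\setminus A$ one at a time and chains the resulting inequalities, whereas you split off the whole block $C=B\setminus A$ at once via \textbf{IN}.
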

\begin{proof} Suppose $p^{s}$ is a MAR and let $S\in\mathcal{X}$ and \(A, B \in \mathcal{A}\) with $A\subseteq B$. 
  By finiteness of \(B\), we can write \(B= A \cup \{b_{1},\cdots,b_{k}\}\) for some \(k \in \mathbf{N}\). Denote \(B_{1}=A \cup \{b_{1}\}\). Therefore, 
\begin{displaymath}
        p^{s}(x,S,B_{1})= \begin{cases}
            \ \gamma(b_{1})p^{s}(x,S,A) &\quad \text{ if } b_{1} \in  B_{1_{x}}(S); \\
            \ \left(\dfrac{1-\gamma(b_{1})}{|S|-1}\right)p^{s}(x,S,A) &\quad \text{ if } b_{1} \in B_{1_{x}}^{c}(S).
        \end{cases}
    \end{displaymath}
Since \(\gamma(b_{1}) \in (\dfrac{1}{2},1]\), we have \(p^{s}(x,S,B_{1}) \leq p^{s}(x,S,A)\). Similarly, one can show that 
\[
p(x,S,B_{k})\leq \cdots \leq p(x,S,B_{2}) \leq p(x,S,B_{1}) \leq p^{s}(x,S,B_{0})= p^{s}(x,S,A),
\] 
where \(B_{0}=A\) and \(B_{i}=B_{i-1} \cup \{b_{i}\}, \forall i \in \{1,\cdots,k\}\). Thus, 
\[
B_{k}=B_{k-1} \cup \{b_{k}\}=A \cup \{b_{1},\cdots,b_{k-1}\} \cup \{b_{k}\}=B.
\]
Consequently, \(p^{s}(x,S,A) \geq p^{s}(x,S,B)\). This establishes the result.
\end{proof}

 \begin{definition}[Regularity in Alternatives]
     An SCR $p$ satisfies Regularity in Alternatives if $p(x,S, A) \geq p(x,T, A)$, for all $A\in\mathcal{A}$ and for all $x \in S \subseteq T$, where $S, T \in \mathcal{X}$.
 \end{definition}
This is the standard Regularity postulate which states that the choice probability of an alternative is weakly greater for a subset of a set when the set of attributes is fixed.

\begin{proposition}
    MAR satisfies Regularity in alternatives.
\end{proposition}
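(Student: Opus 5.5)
The plan is to exploit the product structure of MAR, established in the proof of Theorem \ref{thm1} via \textbf{IN}: for fixed $A$ and any menu $V\ni x$,
\[
p^{s}(x,V,A)=\prod_{a\in A}p^{s}(x,V,\{a\}).
\]
It then suffices to prove the single-attribute inequality $p^{s}(x,S,\{a\})\ge p^{s}(x,T,\{a\})$ for every $a\in A$ and $x\in S\subseteq T$. All factors are nonnegative, so multiplying these $|A|$ inequalities gives $p^{s}(x,S,A)\ge p^{s}(x,T,A)$. This reduction avoids working with the three-case formula directly, and it also disposes of the boundary cases $A_x(S)=A$ and $A_x^c(S)=A$ automatically.

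For the single-attribute comparison, recall that
\[
p^{s}(x,V,\{a\})=\gamma(a) \text{ if } x \text{ is } \succeq_a\text{-top in } V, \qquad p^{s}(x,V,\{a\})=\frac{1-\gamma(a)}{|V|-1} \text{ otherwise.}
\]
Since $S\subseteq T$, if $x$ is top in $T$ it is top in $S$. That leaves three cases.

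\textbf{Case 1: $x$ is top in both $S$ and $T$.} Both probabilities equal $\gamma(a)$, so the inequality holds with equality.

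\textbf{Case 2: $x$ is top in $S$ but not in $T$.} We need $\gamma(a)\ge \frac{1-\gamma(a)}{|T|-1}$. Since $\gamma(a)>\tfrac12$, we have $1-\gamma(a)<\tfrac12<\gamma(a)$, and $|T|-1\ge 1$, so the inequality holds.

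\textbf{Case 3: $x$ is top in neither.} Then $|S|\ge 2$, and $|S|\le |T|$ gives $\frac{1-\gamma(a)}{|S|-1}\ge \frac{1-\gamma(a)}{|T|-1}$, using $1-\gamma(a)\ge 0$.

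The degenerate case $S=\{x\}$ is covered by Cases 1 and 2, because $x$ is trivially top in a singleton.

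The only delicate point is justifying the product decomposition for arbitrary menus. I would verify it directly from the definition of MAR rather than by invoking \textbf{IN} on $p$. For a single attribute $a$, membership $a\in A_x(V)$ or $a\in A_x^c(V)$ determines the factor $\gamma(a)$ or $\frac{1-\gamma(a)}{|V|-1}$. The MAR formula is exactly the product of these factors over $a\in A$, since the denominator $(|V|-1)^{|A_x^c(V)|}$ splits one factor of $|V|-1$ per dominated attribute. Beyond this, the argument is routine. The key role of the assumption $\gamma(a)>\tfrac12$ is in Case 2, where $x$ loses top status on enlarging the menu.
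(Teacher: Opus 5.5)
Your proof is correct and follows the paper's argument in substance. Both compare the MAR expression attribute by attribute, using three facts: $x$ can lose but never gain $\succeq_a$-top status as the menu grows; $\gamma(a)>\tfrac12$ gives $\gamma(a)>\tfrac{1-\gamma(a)}{n}$ for every $n\ge 1$; and $\tfrac{1-\gamma(a)}{|V|-1}$ is decreasing in $|V|$. The differences are only in presentation. The paper adds the elements of $T\setminus S$ one at a time and writes out only the case $A_x(S)\neq\emptyset\neq A_x^c(S)$. Your explicit single-attribute factorization instead compares $S$ and $T$ directly and covers the boundary cases, including $S=\{x\}$, uniformly.
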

\begin{proof} Let \(S, T \in \mathcal{X}\) with $S\subseteq T$. Assuming $A_x(S)\neq \emptyset$ and $A_{x}^{c}(S)\neq \emptyset$,
for any $x\in S$, by MAR, we have  
\[
p^{s}(x,S,A)=\prod_{a \in A_{x}(S)}\gamma(a)\dfrac{\prod_{a \in A^{c}_{x}(S)}(1-\gamma(a))}{(|S|-1)^{|A_{x}^{c}(S)|}}
\]
 for any $A\in \mathcal{A}$. Since, \(T\) is finite, we can write \(T= S \cup \{x_{1},\cdots,x_{k}\}\) for some \(k \in \mathbf{N}\). Denote \(T_{1}=S \cup \{x_{1}\}\). By the construction of the sets \(A_{x}(\cdot)\) and \(A_{x}^{c}(\cdot)\), for any attribute \(a \in A\) it is possible that \(a \in A_{x}(S)\) and \(a \in A_{x}^{c}(S \cup \{x_{1}\})\) (for instance if \(\exists \; a \in A\) such that \(x \succeq_{a} y, \forall y \in S\) but \(x_{1} \succeq_{a} x\)). However, it is not possible that \(a \in A_{x}^{c}(S)\) and \(a \in A_{x}(S \cup \{x_{1}\})\) (even if \(x \succeq_{a} x_{1}, \forall a \in A\)). Combining the above with the fact that 
\[
\gamma(a) > \dfrac{1-\gamma(a)}{|S|-1} \geq \dfrac{1-\gamma(a)}{|S|}=\dfrac{1-\gamma(a)}{|S\cup \{x_{1}\}|-1}, \forall a \in A\]
 (the inequality is strict for \(\gamma(a) \in [ \dfrac{1}{2},1)\)) we have \(p^{s}(x,S,A) \geq p^{s}(x,S \cup \{x_{1}\}, A)\). Note that $p^{s}(x,S,A) = p^{s}(x,S \cup \{x_{1}\}, A)$ holds only either when \(x \succeq_{a} x_{1}, \forall a \in A\) and \(A_{x}^{c}(S)=\emptyset\)\footnote{Since, \[
\dfrac{1-\gamma(a)}{|S|-1} \geq \dfrac{1-\gamma(a)}{|S \cup \{x_{1}\}|-1}
 \text{ (equality only when } \gamma(a)=1).\] } or when $\exists b\in A_{x}^c(S) \text{ such that } \gamma(b)=1 \text{ (then } p^{s}(x,S,A) = p^{s}(x,S \cup \{x_{1}\}, A)=0)$.

Thus, $p^{s}(x,S,A) \geq p^{s}(x,S \cup \{x_{1}\}, A)$.
 Similarly, one can show that 
\[
p^{s}(x,T,A)=p^{s}(x,T_{k},A)\leq \cdots \leq p^{s}(x,T_{1},A) \leq p^{s}(x,T_{0},A)= p^{s}(x,S,A),
\] 
where \(T_{0}=S\) and \(T_{i}=T_{i-1} \cup \{x_{i}\}, \forall i \in \{1,\cdots,k\}\). Thus, 
\[
T_{k}=T_{k-1} \cup \{x_{k}\}=S \cup \{x_{1},\cdots,x_{k-1}\} \cup \{x_{k}\}=T.
\] 
Therefore, \(p^{s}(x,S,A) \geq p^{s}(x,T,A)\) whenever \(S \subseteq T \in \mathcal{X}\).
\end{proof}

 MAR does not satisfy a stronger notion of regularity, which can be defined as follows.
 \begin{definition}[Regularity in Alternatives*] An SCR $p$ satisfies Regularity in Alternatives* if $p(x,S,A)\geq p(x,T,B)$ for all $x\in S\subseteq T\in\mathcal{X}$ and $A,B\in \mathcal{A}$.\end{definition}
 
 Therefore, this version of regularity allows the attribute sets to be different. We show an example to show that MAR violates this stronger version of regularity in alternatives. Let \(x \succeq_{a} z  \succeq_{a}y\),  \(y \succeq_{b} x  \succeq_{b}z\), \(x \succeq_{c} y  \succeq_{c}z\) and \(\gamma(a)=0.6,\gamma(b)=0.7,\gamma(c)=0.9\). Then by MAR, \(p^{s}(x,\{x,y\},\{a,b\})=0.18 <p^{s}(x, \{x,y,z\}, \{a,c\})=0.54\).

In the above example, we can make another observation that, \textit{when the attribute set is varied}, MAR can explain Asymmetric Dominance (or Attraction Effect) (see \cite{huber1982adding}) which is described as follows: when the asymmetrically dominated element \(z\) is added to \(\{x,y\}\), the choice probability of dominant element \(x\) is the highest in \(\{x,y,z\}\) and in particular there is a reversal in choice probability ordering between \(x\) and \(y\) as the following calculations show: 
\begin{displaymath}
    p^{s}(x,\{x,y\},\{a,b\})=0.18 < p^{s}(y, \{x,y\}, \{a,b\})=0.28 \end{displaymath} and \begin{displaymath} p^{s}(x, \{x,y,z\}, \{a,c\})=0.54 > p^{s}(y, \{x, y, z\}, \{a,c\})=0.01.
\end{displaymath}
\begin{remark}
    Note that in our example, it is crucial that the attributes sets are varied across contexts when the asymmetrically dominated alternative is added to the initial menu. We view this as a limitation and this motivates our next rule.
\end{remark}


\section{The Additive Attention Rule}\label{secAAR}

MAR aggregates attention multiplicatively across attributes and, within a single attribute, is tops-only: the $\succeq_a$-maximal alternative receives $\gamma(a)$ and every other alternative receives $\frac{1-\gamma(a)}{|S|-1}$, uniformly. As a consequence, MAR is regular in alternatives for a fixed attribute set, and it reproduces the Attraction Effect only when the attribute set is allowed to vary across menus, as in the example of Section \ref{sec4.2}.

In this section we isolate the additive counterpart of this mechanism. We introduce the Additive Attention Rule (AAR), which averages attention across attributes and, within an attribute, weights an alternative by a normalized measure of its rank rather than by just whether it is top-ranked. This single change of aggregation is exactly what lets the rule generate the Attraction Effect with the attribute set held fixed. We characterize the rule with six axioms,  three of which (IS, BD, ST) are inherited from Theorem \ref{thm1}, and three of which (AVG, MA, BC) are new, and show that, while it accommodates the Attraction Effect, it provably cannot accommodate the Compromise Effect.

\setcounter{definition}{5}
\subsection{The rule}\label{secaar5.1}

For an attribute $a$ with linear order $\succeq_a$ and a menu $S$, let
\[
V(x,a,S)=\bigl|\{y\in S: x\succeq_a y\}\bigr|
\]
denote the (weak) rank\footnote{We could also use the term `score' instead of `rank'.} of $x$ in $S$ under $a$, so that the $\succeq_a$-top alternative has rank $|S|$, the $\succeq_a$-bottom alternative has rank $1$, and $V(x,a,\{x\})=1$. Write
\[
N(S):=\sum_{x\in S}V(x,a,S)=\frac{|S|(|S|+1)}{2}
\]
for the rank sum of a menu of size $|S|$ --- this value is the same for every attribute $a$, since $V(\cdot,a,S)$ is a bijection from $S$ onto $\{1,\dots,|S|\}$ for each $a$.

\begin{definition}\label{aardef}[Additive Attention Rule (AAR)]
An SCR $p$ is an Additive Attention Rule if there exist, for each $a\in \mathbf A$, a linear order $\succeq_a$ on $\mathbf{X}$ and an attention parameter $\gamma(a)\in(0,1]$ such that for all $S\in \mathcal{X}$, $A\in\mathcal{A}$, and $x\in S$,
\[
p(x,S,A)=\frac{1}{N(S)\,|A|}\sum_{a\in A}\gamma(a)\,V(x,a,S)=\frac{2}{|S|(|S|+1)\,|A|}\sum_{a\in A}\gamma(a)\,V(x,a,S).
\]
\end{definition}

In words, $x$'s choice probability is the attention-weighted average, across attributes, of $x$'s own rank within the menu, normalized by the rank sum $N(S)$. An alternative earns credit for \emph{every} attribute under which it is not at the bottom, in proportion to how far up that attribute's ranking it sits and how much attention the attribute draws --- not only for the attributes under which it happens to be ranked first. An alternative that is a strong second choice everywhere can therefore out-poll one that is first under a single, weakly attended attribute, a possibility MAR's tops-only aggregation rules out by construction. Unlike MAR, the probability tracks the entire rank profile, not only the identity of the top alternative; this is the source of the behaviour below.

\begin{example}\label{egAdditive}
 Take the menu and orderings of Example \ref{egA} --- $S=\{x,y,z\}$, $x\succ_{a_1}y\succ_{a_1}z$, $y\succ_{a_2}x\succ_{a_2}z$, $\gamma(a_1)=0.6$, $\gamma(a_2)=0.9$ --- and evaluate them under the Additive Attention Rule instead of MAR. Here $N(S)=6$, and the ranks are $V(x,a_1,S)=3$, $V(y,a_1,S)=2$, $V(z,a_1,S)=1$ and $V(y,a_2,S)=3$, $V(x,a_2,S)=2$, $V(z,a_2,S)=1$, so
\[
p(x,S,A)=\tfrac1{12}\bigl[0.6(3)+0.9(2)\bigr]=0.3,\qquad p(y,S,A)=\tfrac1{12}\bigl[0.6(2)+0.9(3)\bigr]=0.325,
\]
\[
p(z,S,A)=\tfrac1{12}\bigl[0.6(1)+0.9(1)\bigr]=0.125,\qquad p(x^*,S,A)=1-0.75=0.25.
\]
\end{example}

The contrast with Example \ref{egA} is instructive. There, under MAR, $y$'s probability (0.18) was six times $x$'s (0.03) and eighteen times $z$'s (0.01), and the default absorbed 0.78 of the mass, because $x$ and $z$ earned nothing for their standing under $a_1$ once $y$ won under the more-attended $a_2$: only the top-ranked alternative under each attribute is ever rewarded. Under the Additive Attention Rule, by contrast, $x$'s strong showing under $a_1$ --- where it is top-ranked and attention is a non-trivial $0.6$ --- still earns it substantial credit even though $y$ wins under $a_2$: $y$ edges out $x$ only narrowly ($0.325$ against $0.3$), $z$ --- last under both attributes --- still receives a modest $0.125$, and the default's share falls to a quarter rather than three-quarters. Being well-ranked, rather than merely being on top, is what keeps the middle and bottom of the menu in contention.

\subsection{The Attraction Effect}\label{secATTAAR}

The Attraction Effect (\cite{huber1982adding}) is the reversal of a binary choice ranking once a decoy is added to the menu: an alternative that is dominated by one of the two original options but not by the other, so that it is asymmetrically, rather than uniformly, inferior. Section \ref{sec4.2} showed that MAR can reproduce this only by letting the relevant attribute set itself change as the decoy is added. We now show that AAR reproduces it with the attribute set held fixed throughout.\footnote{It is obvious that AAR can also reproduce this even when the relevant attribute set can change as the decoy is added.}

The argument has two parts. We first show that under a fixed pair of attributes, whichever of $x$ and $y$ is favoured by the more heavily attended attribute wins the binary comparison $\{x,y\}$, this is unsurprising and holds for MAR too. The content is in the second part: introducing a decoy $z$ that $x$ dominates under both attributes, but that itself beats $y$ under one of them, changes the \emph{ranks} $x$ and $y$ receive under both attributes, not just their pairwise standing against $z$. Because AAR credits full rank rather than only top position, this rank shift is enough to overturn the binary ranking of $x$ against $y$, with the same two attributes in play throughout.

Fix the attribute set $A=\{a,b\}$ and suppose
\[
x\succ_a z\succ_a y \qquad\text{and}\qquad y\succ_b x\succ_b z.
\]
In the binary menu $\{x,y\}$, $N(\{x,y\})=3$; under $a$ the ranks are $V(x,a,\{x,y\})=2$, $V(y,a,\{x,y\})=1$, and under $b$ they are $V(x,b,\{x,y\})=1$, $V(y,b,\{x,y\})=2$, so
\[
p(x,\{x,y\},A)=\tfrac16\bigl(2\gamma(a)+\gamma(b)\bigr),\qquad p(y,\{x,y\},A)=\tfrac16\bigl(\gamma(a)+2\gamma(b)\bigr),
\]
and hence
\[
p(x,\{x,y\},A)<p(y,\{x,y\},A)\iff\gamma(a)<\gamma(b).
\]
With attention ordered $\gamma(a)<\gamma(b)$, alternative $y$ is chosen more often than $x$ in the binary menu.

Now introduce the decoy $z$, dominated by $x$ under both attributes ($x\succ_az$ and $x\succ_bz$) but not by $y$ (indeed $z\succ_ay$). In $\{x,y,z\}$, $N(\{x,y,z\})=6$, and the ranks become $V(x,a,\{x,y,z\})=3$, $V(y,a,\{x,y,z\})=1$, $V(x,b,\{x,y,z\})=2$, $V(y,b,\{x,y,z\})=3$, so
\[
p(x,\{x,y,z\},A)=\tfrac1{12}\bigl(3\gamma(a)+2\gamma(b)\bigr),\qquad p(y,\{x,y,z\},A)=\tfrac1{12}\bigl(\gamma(a)+3\gamma(b)\bigr),
\]
and therefore
\[
p(x,\{x,y,z\},A)>p(y,\{x,y,z\},A)\iff\gamma(a)>\tfrac{\gamma(b)}{2}.
\]
Combining the two displays, the choice ranking of $x$ and $y$ reverses upon adding the dominated decoy --- the AAR exhibits the Attraction Effect (\cite{huber1982adding}) --- precisely when
\[
\frac{\gamma(b)}{2}<\gamma(a)<\gamma(b),
\]
with the attribute set held fixed at $\{a,b\}$ throughout. This is the qualitative departure from MAR, whose regularity in alternatives forces any such reversal to be driven instead by a change in the relevant attributes.

\setcounter{axiom}{7}
\subsection{Axiomatic characterization}\label{sec.AARaxiomaticchar}

The AAR retains IS, BD, and ST from Section 3 and replaces the remaining MAR axioms with three. The first, AVG, is the additive analogue of IN: attribute aggregation is by averaging rather than by product. The second and third, MA and BC, encode the within-attribute rank structure and play the role that FS, UIA, and DOM play for MAR.

\begin{axiom}[Attribute Average (AVG)]
An SCR $p$ satisfies AVG if for all $x\in S\in\mathcal{X}$ and all $A\in\mathcal{A}$,
\[
p(x,S,A)=\frac{\sum_{a\in A}p(x,S,\{a\})}{|A|}.
\]
\end{axiom}

\begin{axiom}[Menu Additivity (MA)]
An SCR $p$ satisfies MA if for all $a\in \mathbf A$ and all $S,T\in\mathcal{X}$ with $S\cap T=\{x\}$ for some $x\in \mathbf{X}$,
\[
N(S\cup T)\,p(x,S\cup T,\{a\})=N(S)\,p(x,S,\{a\})+N(T)\,p(x,T,\{a\})-p(x,\{x\},\{a\}),
\]
where $N(S)=|S|(|S|+1)/2$.
\end{axiom}

To understand MA, take a singleton attribute set $\{a\}$ and a menu $S$ containing $x$, together with a second menu $T$ that also contains $x$ but otherwise shares nothing with $S$. MA states that $x$'s weighted choice probability in the merged menu $S\cup T$ is the sum of its weighted choice probabilities in $S$ and $T$ separately, less one copy of $p(x,\{x\},\{a\})$ to avoid double-counting; the weight attached to a menu of size $n$ is the rank sum $N(n)=n(n+1)/2$.

\begin{axiom}[Binary Calibration (BC)]
An SCR $p$ satisfies BC if for all $a\in \mathbf A$ and all $x,y\in \mathbf{X}$ with $x\ne y$, whenever $p(x,\{x,y\},\{a\})\ge p(y,\{x,y\},\{a\})$:
\[
p(x,\{x,y\},\{a\})=\tfrac23\,p(x,\{x\},\{a\}),\qquad p(y,\{x,y\},\{a\})=\tfrac13\,p(y,\{y\},\{a\}).
\]
\end{axiom}

BC pins the exact exchange rate between binary and singleton data: whichever alternative (weakly) wins the pairwise comparison under $a$ receives two-thirds of its own singleton value, and the other receives exactly one-third of its own singleton value.

\setcounter{theorem}{1}
\begin{theorem}\label{thmAAR}
An SCR is an Additive Attention Rule if and only if it satisfies AVG, MA, BC, BD, IS, and ST.
\end{theorem}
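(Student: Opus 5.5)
The plan is to prove the two directions separately. Sufficiency reuses the construction of $\succeq_a$ from the proof of Theorem \ref{thm1}, and the new work is an induction on menu size driven by MA and BC.

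\emph{Necessity.} Let $p$ be an AAR with parameters $(\gamma(a),\succeq_a)_{a\in\mathbf A}$.
\begin{itemize}
\item For a singleton attribute set, $N(S)\,p(x,S,\{a\})=\gamma(a)V(x,a,S)$. AVG is then immediate from Definition \ref{aardef}, because the formula for $A$ is the average of the singleton-attribute formulas.
\item For MA, note that when $S\cap T=\{x\}$, the alternatives below $x$ in $S\cup T$ are the disjoint union of those below $x$ in $S$ and in $T$, with $x$ counted once. Hence $V(x,a,S\cup T)=V(x,a,S)+V(x,a,T)-1$. Multiplying by $\gamma(a)$ and using $p(x,\{x\},\{a\})=\gamma(a)$ gives MA.
\item IS holds because $p(w,\{w\},\{a\})=\gamma(a)$ for every $w$.
\item In a binary menu the $\succeq_a$-winner gets $\tfrac{2}{3}\gamma(a)$ and the loser gets $\tfrac13\gamma(a)$. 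This gives BC, and also BD, since $\gamma(a)>0$.
\item ST follows because the binary comparison reproduces the transitive order $\succ_a$.
\end{itemize}

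\emph{Sufficiency.} Assume AVG, MA, BC, BD, IS, ST.
\begin{itemize}
\item Define $x\succeq_a y$ iff $p(x,\{x,y\},\{a\})\ge p(y,\{x,y\},\{a\})$. Exactly as in the proof of Theorem \ref{thm1}, BD and ST make this a linear order.
\item By IS, define $\gamma(a):=p(w,\{w\},\{a\})$, which does not depend on $w$.
\item To get $\gamma(a)>0$: if $\gamma(a)=0$, BC would force both binary probabilities to be $0$, contradicting BD. Also $\gamma(a)\le 1$ since $p$ is a probability.
\item BC then says that for $x\ne y$, $3\,p(x,\{x,y\},\{a\})=\gamma(a)V(x,a,\{x,y\})$. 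The weak winner has rank $2$ and gets $\tfrac23\gamma(a)$; the loser has rank $1$ and gets $\tfrac13\gamma(a)$; BD removes any ambiguity about who wins.
\end{itemize}

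The core step is the singleton-attribute identity $N(S)\,p(x,S,\{a\})=\gamma(a)V(x,a,S)$ for all $x\in S$, proved by induction on $|S|$.
\begin{itemize}
\item The base case $|S|=1$ is $N=1$, $V=1$, $p=\gamma(a)$.
\item The case $|S|=2$ is the BC identity above.
\item For $|S|\ge 3$, pick any $y\in S\setminus\{x\}$ and set $S'=S\setminus\{y\}$ and $T=\{x,y\}$, so that $S'\cap T=\{x\}$ and $S'\cup T=S$.
\item MA gives $N(S)p(x,S,\{a\})=N(S')p(x,S',\{a\})+3p(x,\{x,y\},\{a\})-\gamma(a)$.
\item By the induction hypothesis and the binary identity, the right-hand side equals $\gamma(a)\bigl(V(x,a,S')+V(x,a,\{x,y\})-1\bigr)=\gamma(a)V(x,a,S)$, which is the rank-additivity used in the necessity part.
\end{itemize}

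Finally, AVG extends the identity to arbitrary $A$, producing exactly the formula of Definition \ref{aardef}. The induction step is the main point to get right. In particular, MA has to be applied with the binary menu as one piece, so that it only ever interacts with data already pinned down by BC. It also has to be checked that the resulting values are consistent whichever $y$ is removed; this holds automatically because the right-hand side depends only on the rank of $x$.
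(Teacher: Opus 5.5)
Your proposal is correct and follows essentially the same route as the paper. The paper also recovers $\succeq_a$ from BD and ST and sets $\gamma(a)$ via IS. It then builds up $S$ one alternative at a time, applying MA with the binary menu $\{x,y_k\}$ as the second piece and using BC to evaluate each increment; this is exactly your induction on $|S|$, written as a telescoping sum. Your explicit derivation of $\gamma(a)>0$ from BC and BD fills a step the paper's sufficiency argument leaves implicit (it only appears in the independence example AAR-4), so it is worth keeping.
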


The necessity direction is immediate substitution into the AAR formula. Sufficiency is more involved: as in Theorem \ref{thm1}, BD and ST first recover the attribute-specific order $\succeq_a$ from binary data, and IS pins down $\gamma(a)$ from singletons. What differs is how we extend from binary menus to general ones. Rather than DOM's alternative-removal argument, MA lets us telescope outward one alternative at a time, adding each new $y_k$ to a growing menu and using BC to check, at each step, whether $y_k$ sits above or below $x$ under $\succeq_a$, and hence how much it contributes to $x$'s weighted choice probability. Summing these contributions across all of $S\setminus\{x\}$ recovers exactly the rank $V(x,a,S)$, which is where the additive, full-rank structure of AAR comes from.

\begin{proof}
\emph{Necessity.} Let $p$ be an AAR with linear orders $(\succeq_a)_{a\in A}$ and weights $\gamma(a)\in(0,1]$; recall $p(x,S,\{a\})=\gamma(a)V(x,a,S)/N(S)$. IS holds since $p(x,\{x\},\{a\})=\gamma(a)V(x,a,\{x\})/N(\{x\})=\gamma(a)\cdot1/1=\gamma(a)$, independent of $x$. BD holds since, for $x\ne y$, $\{V(x,a,\{x,y\}),V(y,a,\{x,y\})\}=\{1,2\}$ and $N(\{x,y\})=3$, so $p(x,\{x,y\},\{a\})-p(y,\{x,y\},\{a\})=\pm\gamma(a)/3\ne0$. ST holds since $p(x,\{x,y\},\{a\})>p(y,\{x,y\},\{a\})\iff x\succ_a y$, and $\succeq_a$ is transitive. AVG holds since \[\frac1{|A|}\sum_{a\in A}p(x,S,\{a\})\]\[=\frac1{|A|}\sum_{a\in A}\gamma(a)V(x,a,S)/N(S)=p(x,S,A)\].

For MA, note that whenever $S\cap T=\{x\}$, the sets $S\setminus\{x\}$ and $T\setminus\{x\}$ are disjoint, so
\begin{align*}
V(x,a,S\cup T)&=\bigl|\{y\in S\cup T:x\succeq_ay\}\bigr|\\
&=\bigl|\{y\in S:x\succeq_ay\}\bigr|+\bigl|\{y\in T:x\succeq_ay\}\bigr|-1\\
&=V(x,a,S)+V(x,a,T)-1,
\end{align*}
the sets on the right overlapping only in $x$ itself. Hence
\begin{align*}
N(S\cup T)p(x,S\cup T,\{a\})&=\gamma(a)V(x,a,S\cup T)=\gamma(a)\bigl[V(x,a,S)+V(x,a,T)-1\bigr]\\
&=N(S)p(x,S,\{a\})+N(T)p(x,T,\{a\})-\gamma(a),
\end{align*}
and since $\gamma(a)=p(x,\{x\},\{a\})$, as desired.

For BC, note that $p(x,\{x,y\},\{a\})\ge p(y,\{x,y\},\{a\})\iff x\succeq_ay\iff V(x,a,\{x,y\})=2$, giving $p(x,\{x,y\},\{a\})=\tfrac23\gamma(a)=\tfrac23p(x,\{x\},\{a\})$; and $V(y,a,\{x,y\})=1$ gives $p(y,\{x,y\},\{a\})=\tfrac13\gamma(a)=\tfrac13p(y,\{y\},\{a\})$, as required.

\emph{Sufficiency.} Let $p$ satisfy the six axioms. By AVG, $p(x,S,A)=\tfrac1{|A|}\sum_{a\in A}p(x,S,\{a\})$, so it suffices to pin down $p(\cdot,\cdot,\{a\})$ for each $a$. For each $a$ define
\[
x\succeq_ay\iff p(x,\{x,y\},\{a\})\ge p(y,\{x,y\},\{a\}).
\]
Completeness is immediate from totality of $\ge$ on $\mathbb R$; antisymmetry follows from BD (no binary ties); transitivity follows from ST. Hence $\succeq_a$ is a linear order, with strict part $\succ_a$. By IS, $\gamma(a):=p(x,\{x\},\{a\})$ is well defined independently of $x\in \mathbf{X}$. For $x\in S\in\mathcal{X}$ define
\[
K(x,S,a):=\bigl|\{y\in S: p(x,\{x,y\},\{a\})\ge p(y,\{x,y\},\{a\})\}\bigr|=\bigl|\{y\in S:x\succeq_ay\}\bigr|=V(x,a,S),
\]
and let $D_x(a):=\{x\}\cup\{y\in \mathbf{X}\setminus\{x\}: p(x,\{x,y\},\{a\})\ge p(y,\{x,y\},\{a\})\}$, so that $K(x,S,a)=|S\cap D_x(a)|$ for every $S\in\mathcal{X}$ with $x\in S$.

Fix $a\in A$ and $S=\{x,y_1,\dots,y_n\}\in\mathcal{X}$. Set $S_0=\{x\}$ and $S_k=S_{k-1}\cup\{x,y_k\}$ for $k=1,\dots,n$. Since the $y_k$'s are distinct, $S_{k-1}\cap\{x,y_k\}=\{x\}$, so MA applies at each step:
\[
N(S_k)p(x,S_k,\{a\})=N(S_{k-1})p(x,S_{k-1},\{a\})+N(\{x,y_k\})p(x,\{x,y_k\},\{a\})-\gamma(a).
\]
By BD, either $p(x,\{x,y_k\},\{a\})> p(y_k,\{x,y_k\},\{a\})$ or the reverse strict inequality holds. In the first case, BC gives $p(x,\{x,y_k\},\{a\})=\tfrac23\gamma(a)$, so $N(\{x,y_k\})p(x,\{x,y_k\},\{a\})-\gamma(a)=3\cdot\tfrac23\gamma(a)-\gamma(a)=\gamma(a)$, and $y_k\in D_x(a)$. In the second, BC applied to the pair $(y_k,x)$ gives $p(x,\{x,y_k\},\{a\})=\tfrac13\gamma(a)$, so $N(\{x,y_k\})p(x,\{x,y_k\},\{a\})-\gamma(a)=3\cdot\tfrac13\gamma(a)-\gamma(a)=0$, and $y_k\notin D_x(a)$. Hence
\[
N(S_k)p(x,S_k,\{a\})-N(S_{k-1})p(x,S_{k-1},\{a\})=\gamma(a)\cdot\mathbf{1}[y_k\in D_x(a)].
\]
Summing over $k=1,\dots,n$, and noting $N(S_0)p(x,S_0,\{a\})=N(\{x\})\gamma(a)=\gamma(a)$ and $|S\cap D_x(a)|=1+\sum_{k=1}^n\mathbf 1[y_k\in D_x(a)]=K(x,S,a)=V(x,a,S)$:
\[
N(S)p(x,S,\{a\})=\gamma(a)+\gamma(a)\sum_{k=1}^n\mathbf1[y_k\in D_x(a)]=\gamma(a)\,V(x,a,S).
\]
Hence $p(x,S,\{a\})=\gamma(a)V(x,a,S)/N(S)$ for every $S\in\mathcal{X}$, and averaging over $a\in A$ via AVG gives
\[
p(x,S,A)=\frac1{|A|}\sum_{a\in A}\frac{\gamma(a)}{N(S)}V(x,a,S)=\frac{2}{|S|(|S|+1)|A|}\sum_{a\in A}\gamma(a)\,V(x,a,S),
\]
which is the Additive Attention Rule.
\end{proof}

\subsubsection*{Uniqueness of Parameters}

As with MAR, the parameters governing the Additive Attention Rule --- the attention probabilities $(\gamma(a))_{a\in \mathbf A}$ and the attribute-based preferences $(\succeq_a)_{a\in \mathbf A}$ --- are uniquely recovered from choice data.

\begin{proposition}
The parameters $\bigl((\gamma(a))_{a\in \mathbf A},(\succeq_a)_{a\in \mathbf  A}\bigr)$ for an Additive Attention Rule $p$ are unique for any given choice data.
\end{proposition}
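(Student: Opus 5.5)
The plan mirrors the uniqueness argument for MAR in Section \ref{sec4.1}: suppose two parameter families $\bigl((\gamma(a))_{a\in\mathbf A},(\succeq_a)_{a\in\mathbf A}\bigr)$ and $\bigl((\gamma'(a))_{a\in\mathbf A},(\succeq'_a)_{a\in\mathbf A}\bigr)$ both represent the same SCR $p$ as an Additive Attention Rule (Definition \ref{aardef}). We show that they coincide, first the attention weights and then the orders. Only the slice of data flagged in the introduction is needed: singleton menus and binary menus, each under singleton attribute sets.

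\textbf{Attention weights.} Fix $a\in\mathbf A$ and any $x\in\mathbf X$, and evaluate the AAR formula at $(\{x\},\{a\})$. Since $N(\{x\})=1$ and $V(x,a,\{x\})=V'(x,a,\{x\})=1$ under any linear order, we get
\[
\gamma(a)=p(x,\{x\},\{a\})=\gamma'(a).
\]
This holds for every $a$, so $(\gamma(a))_{a\in\mathbf A}=(\gamma'(a))_{a\in\mathbf A}$.

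\textbf{Orders.} Suppose, for contradiction, that $\succeq_a\neq\succeq'_a$ for some $a$. Because both are linear orders, there are distinct $x,y$ with $x\succ_a y$ but $y\succ'_a x$. Evaluate $p$ at $(\{x,y\},\{a\})$, where $N(\{x,y\})=3$:
\begin{itemize}
\item the first representation gives $p(x,\{x,y\},\{a\})=\tfrac23\gamma(a)$;
\item the second gives $p(x,\{x,y\},\{a\})=\tfrac13\gamma'(a)=\tfrac13\gamma(a)$.
\end{itemize}
Equating the two yields $\tfrac13\gamma(a)=0$, i.e.\ $\gamma(a)=0$. This contradicts $\gamma(a)\in(0,1]$, so $\succeq_a=\succeq'_a$ for all $a$.

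\textbf{Main point of care.} The argument has no genuine obstacle; the one thing to check is that this contradiction needs only $\gamma(a)>0$. For MAR the corresponding step needed $\gamma(a)>\tfrac12$, because there the binary probabilities were $\gamma$ and $1-\gamma$ and could only coincide at $\gamma=\tfrac12$. Under AAR the two binary values $\tfrac23\gamma$ and $\tfrac13\gamma$ are distinct whenever $\gamma>0$. This is why the weaker range $\gamma(a)\in(0,1]$ in Definition \ref{aardef} suffices here. It also means uniqueness could equally be read off the sufficiency proof of Theorem \ref{thmAAR}, which constructs $\succeq_a$ and $\gamma(a)$ directly from the data.
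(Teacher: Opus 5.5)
Your proof is correct and is essentially the paper's own argument. The singleton data $p(x,\{x\},\{a\})$ pin down $\gamma(a)$, and a disagreement $x\succ_a y$, $y\succ'_a x$ forces $\tfrac23\gamma(a)=\tfrac13\gamma(a)$ on $(\{x,y\},\{a\})$, contradicting $\gamma(a)>0$; your remark that only positivity, rather than MAR's $\gamma(a)>\tfrac12$, is needed is accurate.
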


\begin{proof}
Suppose $\bigl((\gamma(a))_{a\in \mathbf A},(\succeq_a)_{a\in \mathbf A}\bigr)$ and $\bigl((\gamma'(a))_{a\in \mathbf A},(\succeq'_a)_{a\in \mathbf A}\bigr)$ both represent the same choice data $p$ as an Additive Attention Rule. For any $x\in \mathbf X$ and $a\in \mathbf A$, $N(\{x\})=1$ and $V(x,a,\{x\})=1$, so
\[
\gamma(a)=p(x,\{x\},\{a\})=\gamma'(a),
\]
giving $(\gamma(a))_{a\in \mathbf A}=(\gamma'(a))_{a\in \mathbf A}$; write $\gamma(a)$ for this common value.

Now suppose $(\succeq_a)_{a\in \mathbf A}\ne(\succeq'_a)_{a\in \mathbf A}$: there exist $a\in \mathbf A$ and $x\ne y$ with $x\succ_ay$ and $y\succ'_ax$. Since $N(\{x,y\})=3$, evaluating $p(x,\{x,y\},\{a\})$ under the first representation --- where $x$ is top-ranked, so $V(x,a,\{x,y\})=2$ --- gives
\[
p(x,\{x,y\},\{a\})=\frac{\gamma(a)}{3}\cdot2=\frac{2\gamma(a)}{3},
\]
while evaluating the very same quantity under the second representation --- where $x$ is bottom-ranked, so $V(x,a,\{x,y\})=1$ --- gives
\[
p(x,\{x,y\},\{a\})=\frac{\gamma(a)}{3}\cdot1=\frac{\gamma(a)}{3}.
\]
Since both expressions represent the same choice probability, $\tfrac23\gamma(a)=\tfrac13\gamma(a)$, so $\gamma(a)=0$, contradicting $\gamma(a)\in(0,1]$. Hence $(\succeq_a)_{a\in \mathbf A}=(\succeq'_a)_{a\in \mathbf A}$.
\end{proof}

\subsection{Failure of the Compromise Effect}\label{secAARcomp}

Having shown that AAR can reproduce the Attraction Effect, it is natural to ask whether it can also generate the Compromise Effect (\cite{simonson1989choice}): the tendency for a middle-of-the-road alternative to become the modal choice once it is flanked by two more extreme options. We show that it cannot, and that the reason is instructive in its own right.

The obstruction is linearity, not the specific numbers. A compromise alternative $z$ needs to beat two rivals that are each extreme in a different direction, best in one attribute, worst in the other. Under additive rank-weighting, beating $x$ (who is top in $a$) requires attention to $b$ to outweigh attention to $a$ by enough, while beating $y$ (top in $b$) requires the reverse. No single pair of weights can satisfy both simultaneously. The following derivation confirms this.

The Compromise Effect (\cite{simonson1989choice}) requires a ``middle'' alternative to be chosen most often. Suppose
\[
x\succ_az\succ_ay\qquad\text{and}\qquad y\succ_bz\succ_bx,
\]
so that $z$ is the compromise, ranked second under both attributes. The effect would require, with $S=\{x,y,z\}$ and $A=\{a,b\}$,
\[
p(z,S,A)>\max\{p(x,S,A),p(y,S,A)\}.
\]
Here $N(S)=6$, and the ranks are
\[
V(z,a,S)=V(z,b,S)=2,\qquad (V(x,a,S),V(x,b,S))=(3,1),\qquad (V(y,a,S),V(y,b,S))=(1,3),
\]
so
\[
p(z,S,A)=\tfrac16\bigl(\gamma(a)+\gamma(b)\bigr),\qquad p(x,S,A)=\tfrac1{12}\bigl(3\gamma(a)+\gamma(b)\bigr),\qquad p(y,S,A)=\tfrac1{12}\bigl(\gamma(a)+3\gamma(b)\bigr).
\]
Then $p(z,S,A)>p(x,S,A)\iff\gamma(b)>\gamma(a)$, whereas $p(z,S,A)>p(y,S,A)\iff\gamma(a)>\gamma(b)$. The two conditions are mutually exclusive, so no attention weights make $z$ the modal choice: the AAR cannot exhibit the Compromise Effect.

The contrast with the Attraction Effect is worth noting. Attraction is driven by a dominated decoy, which raises the rank of the target under one attribute without raising its rival's; the rule converts this rank gain directly into a score gain. Compromise, by contrast, asks a rank-two alternative to beat two extreme (having been rank-one in one attribute and being rank-three in the other attribute) alternatives simultaneously, which linear, additive aggregation of ranks cannot deliver. Accommodating the Compromise Effect would require a non-linear (e.g.\ concave) transformation of ranks, which lies outside the additive class characterized here.
\begin{remark}
    MAR cannot explain the Compromise Effect in cases where the set of attributes is either fixed or variable. The reason being that MAR is a stochastic version of a tops-only rule (see Appendix \ref{sec7}) and a Compromise alternative is never at the top of any attribute. 
\end{remark}

\subsection{A Simpler Normalization and Why It Fails}\label{secNormfail}

It is natural to ask whether the rank sum $N(S)$ in Definition \ref{aardef} is doing essential work, or whether the more familiar normalization by menu size alone would serve just as well. Consider the rule
\[
p^\circ(x,S,A):=\frac{1}{|S|\,|A|}\sum_{a\in A}\gamma(a)\,V(x,a,S),\qquad\gamma(a)>0,
\]
which differs from the Additive Attention Rule only in dividing by $|S|$ rather than by $N(S)=|S|(|S|+1)/2$. This is, in an obvious sense, the simpler of the two rules: no auxiliary rank-sum quantity needs to be computed, and $p^\circ$ inherits both defining features of the Additive Attention Rule --- averaging across attributes, and crediting an alternative by its full rank rather than only by whether it is on top --- at the lightest possible bookkeeping cost.

The rule $p^\circ$ admits essentially the same axiomatic treatment. It retains IS, BD, ST, and AVG unchanged, and satisfies natural counterparts of MA and BC in which $N(S)$ is everywhere replaced by $|S|$ and BC's $\tfrac23$-$\tfrac13$ exchange rate is replaced by a full-half split:
\begin{itemize}
    \item Counterpart of MA: \[
|S\cup T|\,p^\circ(x,S\cup T,\{a\})=|S|\,p^\circ(x,S,\{a\})+|T|\,p^\circ(x,T,\{a\})-p^\circ(x,\{x\},\{a\}),
\]
whenever $S\cap T=\{x\}$.
\item  Counterpart of BC:\begin{align*}
p^\circ(x,\{x,y\},\{a\})&=p^\circ(x,\{x\},\{a\}),\qquad p^\circ(y,\{x,y\},\{a\})=\tfrac12\,p^\circ(y,\{y\},\{a\})\\
&\text{whenever }p^\circ(x,\{x,y\},\{a\})\ge p^\circ(y,\{x,y\},\{a\}).
\end{align*}
\end{itemize}

One can check --- we omit the proof, which follows the induction used to prove Theorem \ref{thmAAR} verbatim, with $|S|$ in place of $N(S)$ and the full-half split in place of BC's $\tfrac23$-$\tfrac13$ split at each step --- that $p^\circ$ satisfies AVG, BD, IS, ST, and these two conditions if and only if it takes the displayed form. The rule also reproduces the same behavioural results as the Additive Attention Rule. With $x\succ_az\succ_ay$ and $y\succ_bx\succ_bz$ as in Section \ref{secATTAAR}, the same calculation with $|S|$ in place of $N(S)$ shows that the choice ranking of $x$ and $y$ reverses upon adding the decoy $z$ exactly when $\gamma(b)/2<\gamma(a)<\gamma(b)$ --- the identical condition obtained for the Additive Attention Rule. And with $x\succ_az\succ_ay$, $y\succ_bz\succ_bx$ as in Section \ref{secAARcomp}, the same argument shows that $p^\circ(z,S,A)>p^\circ(x,S,A)$ and $p^\circ(z,S,A)>p^\circ(y,S,A)$ require the mutually exclusive conditions $\gamma(b)>\gamma(a)$ and $\gamma(a)>\gamma(b)$, so $p^\circ$ cannot exhibit the Compromise Effect either.

Despite this, $p^\circ$ is not a genuine stochastic choice rule. Since $\sum_{x\in S}V(x,a,S)=N(S)=|S|(|S|+1)/2$ for every attribute $a$, and by rearranging terms, we get,
\[
\sum_{x\in S}p^\circ(x,S,A)=\frac{1}{|S||A|}\sum_{a\in A}\gamma(a)\cdot\frac{|S|(|S|+1)}{2}=\frac{|S|+1}{2|A|}\sum_{a\in A}\gamma(a),
\]
which grows without bound in $|S|$ for any fixed $A$ and fixed positive weights, and so exceeds one on a large enough menu, whatever the weights are. For a concrete illustration, take the Attraction-Effect configuration above with $\gamma(a)=0.4$, $\gamma(b)=0.5$ (so $\gamma(b)/2<\gamma(a)<\gamma(b)$, as the reversal requires). On $S=\{x,y,z\}$ the total is $\frac44(0.9)=0.9\le1$, giving no hint of a problem; but simply adding a fourth alternative $w$ to the menu (with $\succeq_a,\succeq_b$ extended to rank $w$ anywhere) gives
\[
\sum_{v\in\{x,y,z,w\}}p^\circ(v,\{x,y,z,w\},A)=\frac54(0.4+0.5)=1.125>1,
\]
so that $p^\circ(x^*,\{x,y,z,w\},A)=1-1.125=-0.125<0$ --- not a valid probability. The rank sum $N(S)$ in Definition \ref{aardef} is exactly what repairs this, at the cost of the slightly heavier bookkeeping described above.

\section{Conclusion}\label{sec5}

We provide a model of individual stochastic choice where a decision maker is attention-biased towards different attributes where the attributes are considered with independent probabilities. For future work, it would be interesting to explore a framework where the attention probabilities are dependent. For example, it is possible that if attention paid to an attribute is high then the attention paid to another complementary attribute is also high. This can happen when two or more attributes are related, for example, while buying a box of breakfast cereal, the `the image of the company'(if the company markets itself as a health conscious company or if the company focuses on the taste of the product) as an attribute may be linked to another attribute, `the sugar content' of the cereal. 

Beyond the multiplicative rule, we have isolated an additive variant---the Additive Attention Rule---that trades product aggregation for averaging and tops-only weighting for rank weighting. This single change lets the rule reproduce the Attraction Effect with a \emph{fixed} set of attributes, whereas MAR requires the relevant attributes to vary; the same additive linearity, however, precludes the Compromise Effect. Characterizing the choice rules that accommodate both effects within an attribute-based model of limited attention---plausibly through a non-linear aggregation of ranks---is a natural next step.
\section*{Acknowledgements and funding}
We are grateful to Bhavook Bhardwaj, Aditya Chowdhury, Sambuddha Ghosh, Ojasvi Khare, Puja Mukherjee,
Saptarshi Mukherjee, Sayan Mukherjee, Santanu Roy, Nirmalee Sahu, and
Ruhi Sonal for helpful comments. No additional funding was granted for this research.
\section*{Conflict of Interests Statement}
We do not have any conflict of interests to report.

\section*{Declaration of generative AI and AI-assisted technologies in the writing process}

During the preparation of this work the authors used Claude for
copy-editing and language refinement. After using these tools, the authors reviewed and edited the
content as needed and take full responsibility for the content of the published
article.

\bibliography{references.bib}

@article{Fishburn1973,
  author  = {Fishburn, Peter C.},
  title   = {Binary Choice Probabilities: On the Varieties of Stochastic Transitivity},
  journal = {Journal of Mathematical Psychology},
  year    = {1973},
  volume  = {10},
  number  = {4},
  pages   = {327--352}
}

@article{aguiar2019adaptive,
  author  = {Aguiar, Victor H. and Kimya, Mert},
  title   = {Adaptive Stochastic Search},
  journal = {Journal of Mathematical Economics},
  year    = {2019},
  volume  = {81},
  pages   = {74--83}
}

@article{ahumada2018luce,
  author  = {Ahumada, Alonso and {\"U}lk{\"u}, Levent},
  title   = {Luce Rule with Limited Consideration},
  journal = {Mathematical Social Sciences},
  year    = {2018},
  volume  = {93},
  pages   = {52--56}
}

@article{arrow1959rational,
  author  = {Arrow, Kenneth J.},
  title   = {Rational Choice Functions and Orderings},
  journal = {Economica},
  year    = {1959},
  volume  = {26},
  number  = {102},
  pages   = {121--127}
}

@incollection{barbera2004ranking,
  author    = {Barber{\`a}, Salvador and Bossert, Walter and Pattanaik, Prasanta K.},
  title     = {Ranking Sets of Objects},
  booktitle = {Handbook of Utility Theory},
  editor    = {Barber{\`a}, Salvador and Hammond, Peter J. and Seidl, Christian},
  volume    = {2},
  pages     = {893--977},
  publisher = {Kluwer Academic Publishers},
  address   = {Boston},
  year      = {2004}
}

@article{bergus1995clinical,
  author  = {Bergus, George R. and Chapman, Gretchen B. and Gjerde, Craig and Elstein, Arthur S.},
  title   = {Clinical Reasoning about New Symptoms Despite Preexisting Disease: Sources of Error and Order Effects},
  journal = {Family Medicine},
  year    = {1995},
  volume  = {27},
  number  = {5},
  pages   = {314--320}
}

@article{bhattacharya2021frame,
  author  = {Bhattacharya, Mihir and Mukherjee, Saptarshi and Sonal, Ruhi},
  title   = {Frame-Based Stochastic Choice Rule},
  journal = {Journal of Mathematical Economics},
  year    = {2021},
  volume  = {97},
  pages   = {102553}
}

@article{bialkova2010determines,
  author  = {Bialkova, Svetlana and van Trijp, Hans},
  title   = {What Determines Consumer Attention to Nutrition Labels?},
  journal = {Food Quality and Preference},
  year    = {2010},
  volume  = {21},
  number  = {8},
  pages   = {1042--1051}
}

@article{bordalo2012salience,
  author  = {Bordalo, Pedro and Gennaioli, Nicola and Shleifer, Andrei},
  title   = {Salience Theory of Choice under Risk},
  journal = {The Quarterly Journal of Economics},
  year    = {2012},
  volume  = {127},
  number  = {3},
  pages   = {1243--1285}
}

@article{bordalo2013salience,
  author  = {Bordalo, Pedro and Gennaioli, Nicola and Shleifer, Andrei},
  title   = {Salience and Consumer Choice},
  journal = {Journal of Political Economy},
  year    = {2013},
  volume  = {121},
  number  = {5},
  pages   = {803--843}
}

@article{bordalo2020memory,
  author  = {Bordalo, Pedro and Gennaioli, Nicola and Shleifer, Andrei},
  title   = {Memory, Attention, and Choice},
  journal = {The Quarterly Journal of Economics},
  year    = {2020},
  volume  = {135},
  number  = {3},
  pages   = {1399--1442}
}

@article{cattaneo2020random,
  author  = {Cattaneo, Matias D. and Ma, Xinwei and Masatlioglu, Yusufcan and Suleymanov, Elchin},
  title   = {A Random Attention Model},
  journal = {Journal of Political Economy},
  year    = {2020},
  volume  = {128},
  number  = {7},
  pages   = {2796--2836}
}

@incollection{chapman2000cognitive,
  author    = {Chapman, Gretchen B. and Elstein, Arthur S.},
  title     = {Cognitive Processes and Biases in Medical Decision Making},
  booktitle = {Decision Making in Health Care: Theory, Psychology, and Applications},
  editor    = {Chapman, Gretchen B. and Sonnenberg, Frank A.},
  pages     = {183--210},
  publisher = {Cambridge University Press},
  address   = {Cambridge},
  year      = {2000}
}

@article{cherepanov2013rationalization,
  author  = {Cherepanov, Vadim and Feddersen, Timothy and Sandroni, Alvaro},
  title   = {Rationalization},
  journal = {Theoretical Economics},
  year    = {2013},
  volume  = {8},
  number  = {3},
  pages   = {775--800}
}

@article{chernoff1954rational,
  author  = {Chernoff, Herman},
  title   = {Rational Selection of Decision Functions},
  journal = {Econometrica},
  year    = {1954},
  volume  = {22},
  number  = {4},
  pages   = {422--443}
}

@article{cunnington1997effect,
  author  = {Cunnington, J. P. W. and Turnbull, J. M. and Regehr, G. and Marriott, M. and Norman, G. R.},
  title   = {The Effect of Presentation Order in Clinical Decision Making},
  journal = {Academic Medicine},
  year    = {1997},
  volume  = {72},
  number  = {10},
  pages   = {S40--S42}
}

@article{dhar1997consumer,
  author  = {Dhar, Ravi},
  title   = {Consumer Preference for a No-Choice Option},
  journal = {Journal of Consumer Research},
  year    = {1997},
  volume  = {24},
  number  = {2},
  pages   = {215--231}
}

@article{fudenberg2015stochastic,
  author  = {Fudenberg, Drew and Iijima, Ryota and Strzalecki, Tomasz},
  title   = {Stochastic Choice and Revealed Perturbed Utility},
  journal = {Econometrica},
  year    = {2015},
  volume  = {83},
  number  = {6},
  pages   = {2371--2409}
}

@unpublished{galperti2024listing,
  author = {Galperti, Simone},
  title  = {Listing Specs: The Effect of Framing Attributes on Choice},
  note   = {Working paper},
  year   = {2024}
}

@article{gerasimou2018indecisiveness,
  author  = {Gerasimou, Georgios},
  title   = {Indecisiveness, Undesirability and Overload Revealed through Rational Choice Deferral},
  journal = {The Economic Journal},
  year    = {2018},
  volume  = {128},
  number  = {614},
  pages   = {2450--2479}
}

@article{gul2014random,
  author  = {Gul, Faruk and Natenzon, Paulo and Pesendorfer, Wolfgang},
  title   = {Random Choice as Behavioral Optimization},
  journal = {Econometrica},
  year    = {2014},
  volume  = {82},
  number  = {5},
  pages   = {1873--1912}
}

@article{hara2019coalitional,
  author  = {Hara, Kazuhiro and Ok, Efe A. and Riella, Gil},
  title   = {Coalitional Expected Multi-Utility Theory},
  journal = {Econometrica},
  year    = {2019},
  volume  = {87},
  number  = {3},
  pages   = {933--980}
}

@article{honda2021categorical,
  author  = {Honda, Edward},
  title   = {Categorical Consideration and Perception Complementarity},
  journal = {Economic Theory},
  year    = {2021},
  volume  = {71},
  number  = {2},
  pages   = {693--716}
}

@article{huber1982adding,
  author  = {Huber, Joel and Payne, John W. and Puto, Christopher},
  title   = {Adding Asymmetrically Dominated Alternatives: Violations of Regularity and the Similarity Hypothesis},
  journal = {Journal of Consumer Research},
  year    = {1982},
  volume  = {9},
  number  = {1},
  pages   = {90--98}
}

@article{iyengar2000choice,
  author  = {Iyengar, Sheena S. and Lepper, Mark R.},
  title   = {When Choice Is Demotivating: Can One Desire Too Much of a Good Thing?},
  journal = {Journal of Personality and Social Psychology},
  year    = {2000},
  volume  = {79},
  number  = {6},
  pages   = {995--1006}
}

@article{kalai2002rationalizing,
  author  = {Kalai, Gil and Rubinstein, Ariel and Spiegler, Ran},
  title   = {Rationalizing Choice Functions by Multiple Rationales},
  journal = {Econometrica},
  year    = {2002},
  volume  = {70},
  number  = {6},
  pages   = {2481--2488}
}

@article{kimya2018choice,
  author  = {Kimya, Mert},
  title   = {Choice, Consideration Sets, and Attribute Filters},
  journal = {American Economic Journal: Microeconomics},
  year    = {2018},
  volume  = {10},
  number  = {4},
  pages   = {223--247}
}

@article{kovach2022focal,
  author  = {Kovach, Matthew and Tserenjigmid, Gerelt},
  title   = {The Focal Luce Model},
  journal = {American Economic Journal: Microeconomics},
  year    = {2022},
  volume  = {14},
  number  = {3},
  pages   = {378--413}
}

@article{kovach2023reference,
  author  = {Kovach, Matthew and Suleymanov, Elchin},
  title   = {Reference Dependence and Random Attention},
  journal = {Journal of Economic Behavior \& Organization},
  year    = {2023},
  volume  = {215},
  pages   = {421--441}
}

@article{lancaster1966new,
  author  = {Lancaster, Kelvin J.},
  title   = {A New Approach to Consumer Theory},
  journal = {Journal of Political Economy},
  year    = {1966},
  volume  = {74},
  number  = {2},
  pages   = {132--157}
}

@article{lleras2017more,
  author  = {Lleras, Juan Sebasti{\'a}n and Masatlioglu, Yusufcan and Nakajima, Daisuke and Ozbay, Erkut Y.},
  title   = {When More Is Less: Limited Consideration},
  journal = {Journal of Economic Theory},
  year    = {2017},
  volume  = {170},
  pages   = {70--85}
}

@book{luce,
  author    = {Luce, R. Duncan},
  title     = {Individual Choice Behavior: A Theoretical Analysis},
  publisher = {Wiley},
  address   = {New York},
  year      = {1959}
}

@article{manzini2007sequentially,
  author  = {Manzini, Paola and Mariotti, Marco},
  title   = {Sequentially Rationalizable Choice},
  journal = {American Economic Review},
  year    = {2007},
  volume  = {97},
  number  = {5},
  pages   = {1824--1839}
}

@article{manzini2012categorize,
  author  = {Manzini, Paola and Mariotti, Marco},
  title   = {Categorize Then Choose: Boundedly Rational Choice and Welfare},
  journal = {Journal of the European Economic Association},
  year    = {2012},
  volume  = {10},
  number  = {5},
  pages   = {1141--1165}
}

@article{manzini2014stochastic,
  author  = {Manzini, Paola and Mariotti, Marco},
  title   = {Stochastic Choice and Consideration Sets},
  journal = {Econometrica},
  year    = {2014},
  volume  = {82},
  number  = {3},
  pages   = {1153--1176}
}

@article{masatlioglu2012revealed,
  author  = {Masatlioglu, Yusufcan and Nakajima, Daisuke and Ozbay, Erkut Y.},
  title   = {Revealed Attention},
  journal = {American Economic Review},
  year    = {2012},
  volume  = {102},
  number  = {5},
  pages   = {2183--2205}
}

@article{masatlioglusqb,
  author  = {Masatlioglu, Yusufcan and Ok, Efe A.},
  title   = {Rational Choice with Status Quo Bias},
  journal = {Journal of Economic Theory},
  year    = {2005},
  volume  = {121},
  number  = {1},
  pages   = {1--29}
}

@incollection{mcfadden1990stochastic,
  author    = {McFadden, Daniel and Richter, Marcel K.},
  title     = {Stochastic Rationality and Revealed Stochastic Preference},
  booktitle = {Preferences, Uncertainty, and Optimality: Essays in Honor of Leo Hurwicz},
  editor    = {Chipman, John S. and McFadden, Daniel and Richter, Marcel K.},
  pages     = {161--186},
  publisher = {Westview Press},
  address   = {Boulder, CO},
  year      = {1990}
}

@article{ok2002utility,
  author  = {Ok, Efe A.},
  title   = {Utility Representation of an Incomplete Preference Relation},
  journal = {Journal of Economic Theory},
  year    = {2002},
  volume  = {104},
  number  = {2},
  pages   = {429--449}
}

@article{plott,
  author  = {Plott, Charles R.},
  title   = {Path Independence, Rationality, and Social Choice},
  journal = {Econometrica},
  year    = {1973},
  volume  = {41},
  number  = {6},
  pages   = {1075--1091}
}

@article{puppe1996axiomatic,
  author  = {Puppe, Clemens},
  title   = {An Axiomatic Approach to ``Preference for Freedom of Choice''},
  journal = {Journal of Economic Theory},
  year    = {1996},
  volume  = {68},
  number  = {1},
  pages   = {174--199}
}

@article{salant2008f,
  author  = {Salant, Yuval and Rubinstein, Ariel},
  title   = {{(A, f)}: Choice with Frames},
  journal = {The Review of Economic Studies},
  year    = {2008},
  volume  = {75},
  number  = {4},
  pages   = {1287--1296}
}

@article{simonson1989choice,
  author  = {Simonson, Itamar},
  title   = {Choice Based on Reasons: The Case of Attraction and Compromise Effects},
  journal = {Journal of Consumer Research},
  year    = {1989},
  volume  = {16},
  number  = {2},
  pages   = {158--174}
}

\section{Appendix A}\label{sec7}
\subsection{Pareto Dominance and MAR: Some Motivational Examples}\label{motivation}

\par As a justification for why MAR is a behaviourally plausible rule, we connect this to a very well-known choice procedure in the choice theory literature given our context: \textit{Pareto dominance}. Pareto dominance states that an alternative $x$ is chosen if and only if $x$ is preferred to all other alternatives in the feasible menu of alternatives and in all relevant attributes; if such an $x$ exists, it is said to be a Pareto dominant alternative in that choice problem\footnote{Formally, a choice problem is an ordered pair, $(S,A)$, where $S$ is the menu of alternatives and $A$ is the set of relevant attributes.}. If such an $x$ does not exist, there is no unique top alternative and the DM then prefers to exercise choice deferral (the act of currently not choosing anything from the feasible menu of alternatives or sticking to the default alternative). \cite{ok2002utility} makes a case for this. For experimental evidence on choice deferral with multidimensional alternatives (the dimensions could be attributes or states of the world) when no single alternative has a decisive advantage, see \cite{dhar1997consumer}. 

As a natural example where Pareto dominance is useful, consider the case where a family is making an expensive purchase like that of a house. Usually each family member's opinion (the opinions of the different family members can be thought of as attributes) is considered to be important for making a decision.

 Pareto dominance captures the idea of unanimity at the top in such cases: a house is bought if and only if each family member has that house at the top of her/his/their preference. So, if there is no such alternative, the family sticks to their default alternative of living in their existing house. Hence, each family member has a veto power. For MAR, each attribute is important (in terms of attention, as each attribute is more likely to be considered than not, and also in terms of preference,  as each attribute has a `veto power', hence if there is no such Pareto dominant alternative, the DM gets conflicted and sticks to the default. However, MAR also allows the DM to choose feasible alternatives from the menu with non-zero probability even in the absence of a Pareto dominant alternative (see Example \ref{egA}). This is due to the limited attention paid to each attribute. However, even in such cases the default alternative gets the highest choice probability.  We show the following in Appendix \ref{whenindoubt}: If there is no Pareto  dominant alternative, then, according to MAR, the default alternative gets chosen more often than not (with probability greater than $\dfrac{1}{2}$), and hence gets the highest choice probability\footnote{For now, consider the the simple case of two-alternatives $\{x,y\}$ and two-attributes $\{a,b\}$ with $x \succ_ay$ and $y \succ_bx$. The sum $p^S(x, \{a,b\})+p^S(y,\{a,b\})=\gamma(a).(1-\gamma(b))+\gamma(b).(1-\gamma(a))$ is maximized at the boundary points $(\gamma(a)=0.5,\gamma(b)=0.5)$, $(\gamma(a)=1,\gamma(b)=0.5)$, and $(\gamma(a)=0.5,\gamma(b)=1)$ with the maximized value of the sum being $0.5$. However, $\gamma(a),\gamma(b)\in (0.5,1]$. Hence, for any value of $\gamma(a),\gamma(b)\in (0.5,1]$, the value of the sum $\gamma(a).(1-\gamma(b))+\gamma(b).(1-\gamma(a))$ is less than $0.5$. Thus, $p^S(x, \{a,b\})+p^S(y,\{a,b\})<\dfrac{1}{2}$. Hence, $p^S(x^*,\{x,y\},\{a,b\})>\dfrac{1}{2}$. This observation is in agreement with the Pareto dominance principle: if there is no Pareto dominant alternative, then the DM sticks to her default. In the stochastic case, this is weakened to the point that the DM in such cases sticks to her default more often than not, that is, with probability greater than $\dfrac{1}{2}$. The proof of the general case is given in Appendix \ref{whenindoubt}. \label{twoatt}} (however, the converse is not true, see Remark \ref{rempareto1} for more details). Also, the default alternative, in addition to capturing residual choice probabilities (as is true in any of the stochastic choice rules which are in multiplicative/product form), serves a behavioural purpose of sticking to a status quo: where if a Pareto dominant alternative does not exist or there are too many attributes to focus on then a DM would want to stick to the default alternative, i.e., the status quo. However, this differs from the notion of status quo bias as in \cite{masatlioglusqb} where the status quo alternative is actively compared with the other alternatives in the menu of feasible alternatives. For an illuminating discussion on default alternatives and empty choice sets, see \cite{gerasimou2018indecisiveness} and the references therein.

\par We formally show in Appendix \ref{sec4.3} that given full attention (each attribute is considered with probability 1), a deterministic choice function based on MAR satisfies Pareto dominance.

We now provide a few stylized examples that highlight when an alternative gets the highest choice probability in a choice problem and when the default alternative gets the highest choice probability in a choice problem with a Pareto dominant alternative\footnote{The case without a Pareto dominant alternative has been covered in Example \ref{egA}.}: 

\begin{example}\rm{}\label{egB} Suppose $S=\{x,y,z\}$ and $A=\{a_{1},a_{2}\}$. Let the two orderings with respect to the attributes $a_{1}$ and $a_{2}$ be $x\succ_{a_{1}}y\succ_{a_{1}}z$ and $x\succ_{a_{2}}z\succ_{a_{2}}y$ respectively. Let $\gamma(a_{1})=0.9$ and $\gamma(a_{2})=0.85$ be the two attention probabilities of the attributes. The choice probabilities of the three alternatives from the set $S=\{x,y,z\}$ according to MAR, $p^{S}$ are as follows: 

\begin{itemize}
    \item [(i)] $p^{S}(x,S,A)= \gamma(a_{1})\gamma(a_2)=(0.9).(0.85)=0.765$.

    \item [(ii)] $p^{S}(y,S,A)= \left(\dfrac{1-\gamma(a_{1})}{|S|-1}\right)\left(\dfrac{1-\gamma(a_{2})}{|S|-1}\right)=\left(\dfrac{1-0.9}{3-1}\right)\left(\dfrac{1-0.85}{3-1}\right)=0.05(0.075)=0.00375$

     \item [(iii)] $p^{S}(z,S,A)=\left(\dfrac{1-\gamma(a_{1})}{|S|-1}\right)\left(\dfrac{1-\gamma(a_{2})}{|S|-1}\right)=\left(\dfrac{1-0.9}{3-1}\right)\left(\dfrac{1-0.85}{3-1}\right)=0.05(0.075)=0.00375$.
\end{itemize}

\vspace{0.12in}
Here $x$ is a Pareto dominant alternative in $(S,A)$ which gets the highest choice probability in comparison with the rest of the alternatives in $S$ ($x$ procures a choice probability that is 204 times that of $y$ and $z$). Interestingly, $p^S(x,S,A)=0.765>0.2275=p^S(x^*,S,A)$. This is because $x$ is Pareto dominant \textit{and} there are \textit{few} relevant attributes and relatively \textit{high} attention probabilities of the two attributes. We show in the next two examples if either of the above two are absent then even in presence of a Pareto dominant alternative, the default alternative gets the highest choice probability. 
\end{example}

\begin{example}\rm{}\label{eg4} Suppose $S=\{x,y,z\}$ and $A=\{a_{1},a_{2}\}$. Let the two orderings with respect to the attributes $a_{1}$ and $a_{2}$ be $x\succ_{a_{1}}y\succ_{a_{1}}z$ and $x\succ_{a_{2}}z\succ_{a_{2}}y$ respectively. Let $\gamma(a_{1})=0.6$ and $\gamma(a_{2})=0.6$ be the two attention probabilities of the attributes. The choice probabilities of the three alternatives from the set $S=\{x,y,z\}$ according to MAR, $p^{S}$ are as follows: 

\begin{itemize}
    \item [(i)] $p^{S}(x,S,A)= \gamma(a_{1})\gamma(a_2)=(0.6).(0.6)=0.36$.

    \item [(ii)] $p^{S}(y,S,A)= \left(\dfrac{1-\gamma(a_{1})}{|S|-1}\right)\left(\dfrac{1-\gamma(a_{2})}{|S|-1}\right)=\left(\dfrac{1-0.6}{3-1}\right)\left(\dfrac{1-0.6}{3-1}\right)=0.2(0.2)=0.04$

     \item [(iii)] $p^{S}(z,S,A)=\left(\dfrac{1-\gamma(a_{1})}{|S|-1}\right)\left(\dfrac{1-\gamma(a_{2})}{|S|-1}\right)=\left(\dfrac{1-0.6}{3-1}\right)\left(\dfrac{1-0.6}{3-1}\right)=0.2(0.2)=0.04$.
\end{itemize}

\vspace{0.12in}
Here $x$ is a Pareto dominant alternative in $(S,A)$ which gets the highest choice probability in comparison with the rest of the alternatives in $S$ ($x$ procures a choice probability that is 9 times that of $y$ and $z$). However, now $p^S(x,S,A)=0.36<0.56=p^S(x^*,S,A)$. This is due to somewhat lower consideration probabilities of the two attributes in comparison with the previous example. Thus, even in presence of a Pareto dominant element, it is possible that the default alternative gets a higher choice probability.

\end{example}
\begin{example}\rm{}\label{egC} Suppose $S=\{x,y,z\}$ and $A=\{a_{1},a_{2}, a_3,a_4,a_5\}$. Let the five orderings with respect to the attributes $a_{1},a_2,a_3,a_4, \text{ and }a_5$ be $x\succ_{a_{1}}y\succ_{a_{1}}z$, $x\succ_{a_{2}}z\succ_{a_{2}}y$, $x\succ_{a_{3}}y\succ_{a_{3}}z$, $x\succ_{a_{4}}z\succ_{a_{4}}y$, and $x\succ_{a_{5}}y\succ_{a_{5}}z$, respectively. Let $\gamma(a_{1})=0.9$, $\gamma(a_{2})=0.85$, $\gamma(a_{3})=0.85$, $\gamma(a_{4})=0.85$, and $\gamma(a_{5})=0.85$ be the five attention probabilities of the attributes. The choice probabilities of the three alternatives from the set $S=\{x,y,z\}$ according to MAR, $p^{S}$ are as follows: 

\begin{itemize}
    \item [(i)] $p^{S}(x,S,A)= \gamma(a_{1})\gamma(a_2)\gamma(a_3)\gamma(a_4)\gamma(a_5)=(0.9).(0.85).(0.85).(0.85).(0.85)=0.4698$.

    \item [(ii)] $p^{S}(y,S,A)= \left(\dfrac{1-\gamma(a_{1})}{|S|-1}\right)\left(\dfrac{1-\gamma(a_{2})}{|S|-1}\right)\left(\dfrac{1-\gamma(a_{3})}{|S|-1}\right)\left(\dfrac{1-\gamma(a_{4})}{|S|-1}\right)\left(\dfrac{1-\gamma(a_{5})}{|S|-1}\right)=\left(\dfrac{1-0.9}{3-1}\right)\left(\dfrac{1-0.85}{3-1}\right)^4=0.05(0.075)^4=0.000002$

     \item [(iii)] $p^{S}(z,S,A)=\left(\dfrac{1-\gamma(a_{1})}{|S|-1}\right)\left(\dfrac{1-\gamma(a_{2})}{|S|-1}\right)\left(\dfrac{1-\gamma(a_{3})}{|S|-1}\right)\left(\dfrac{1-\gamma(a_{4})}{|S|-1}\right)\left(\dfrac{1-\gamma(a_{5})}{|S|-1}\right)=\left(\dfrac{1-0.9}{3-1}\right)\left(\dfrac{1-0.85}{3-1}\right)^4=0.05(0.075)^4=0.000002$.
\end{itemize}

\vspace{0.12in}
Here $x$ is a Pareto dominant alternative in $(S,A)$ which gets the highest choice probability in comparison with the rest of the alternatives in $S$ ($x$ procures a choice probability that is 234900 times that of $y$ and $z$). However, $p^S(x,S,A)=0.4698<0.530196=p^S(x^*,S,A)$. In spite of relatively higher attention probabilities of attributes and the presence of a Pareto dominant element, the default alternative gets the highest choice probability. Here, due to a large number of relevant attributes the DM gets overwhelmed and is more likely to choose a default alternative. This shows that \textit{both} limited attention to attributes and high number of attributes can generate choice overload in our model even in presence of a Pareto dominant alternative. However, if we let $\gamma(a_1)=\gamma(a_2)=\gamma(a_3)=\gamma(a_4)=\gamma(a_5)=0.9$, we get $p^S(x,S,A)=0.59049$, and the Pareto dominant element $x$ now gets the highest choice probability ($x$ procures a choice probability that is 1889568 times that of $y$ and $z$), in particular higher than that of the default alternative.

\begin{remark}\label{rempareto1}
    Thus, in the above example we note that even in presence of a Pareto dominant alternative, if there are too many attributes or if the attention parameters are not high enough, the default alternative can be chosen with the highest probability. This is because the large number of attributes itself exerts high cognitive load on the DM and she resorts to her default alternative more often than not. Also, relatively less attention being paid to attributes when the number of attributes is not that high can also result in the default alternative being chosen more often than not (as in Example \ref{eg4}). Formally, it can be shown that a sufficient condition for the default alternative to have the highest choice probability, \textit{even in the presence of a Pareto dominant alternative}, is: \[\prod_{a\in A}\gamma(a)+\bigg(\dfrac{1}{|S|-1}\bigg)^{|A|-1}.\prod_{a\in A}(1-\gamma(a)) <\dfrac{1}{2},\] where there menu of feasible alternatives is $S$ and the menu of relevant attributes is $A$.
\end{remark}
 Thus, MAR is capable of explaining a wide array of choice behaviours. A high choice probability of the default alternative is also motivated by the fact that a DM maybe bombarded by advertisements related to various products all over the Internet during a browsing session and in many cases she does not buy anything, i.e., she sticks to her default. This happens more so when there are too many attributes to focus on and/or the consideration probability of the attributes are not high enough. Also, DMs may choose to window shop (the act of just browsing various products) more often than she actually buys a product\footnote{This can be related to the DM wanting to browse products online while procrastinating.}. For advertising purposes, a firm should produce a Pareto dominant element and then the firm can negotiate, with the platform (like Amazon) which controls what filters a consumer is allowed to focus on, i.e., the relevant attributes, to lower the relevant number of attributes and advertise well enough on the low number of allowed attributes so that the consideration probability of the attributes will be high enough. This would be an ironclad way to ensure that the alternative produced by the firm will get a high choice probability. 
    
\end{example}

We provide one final example to show that our rule is robust (in terms of relative choice probabilities) to minor perturbations in the presence of a large number of attributes:
\begin{example}\rm{}\label{egRob}
Suppose that $S=\{x,y\}$ and $x \succ_a y, \forall a \in A \text{ such that } |A|=99$. Let $\gamma(a)=0.9, \forall a \in A$. Then, according to MAR, $p^S(x,S,A)=(0.9)^{99}=0.0000295$ and $p^S(y,S,A)=(\dfrac{1-0.9}{2-1})^{99}\approx 0$. Hence, $x$ is chosen with a significantly higher probability than $y$ in $(S,A)$. Now, if an attribute $b\not \in A$ is added to the mix such that $y\succ_b x$, and $\gamma(b)=0.9$, then according to MAR, $p^S(x,S,A\cup \{b\})=(0.9)^{99}.(\dfrac{1-0.9}{2-1})=0.00000295$ and $p^S(y,S,A\cup \{b\})=(0.9).(\dfrac{1-0.9}{2-1})^{99}\approx 0$. Thus, even in $(S,A \cup \{b\})$, $x$ is chosen with a significantly higher probability than $y$. Therefore, even though the addition of $b$ to $A$ lowers the choice probability of $x$ in absolute terms (and hence increases the choice probability of the default alternative), the relative choice probabilities (of $x$ and $y$) are still robust. 
\end{example}

\begin{remark}
    As the above examples show, when a Pareto dominant alternative exists, even when the choice probability of the default alternative is the highest, the Pareto dominant alternative has a significantly higher choice probability than the other feasible alternatives (see Example \ref{egC}, in particular). Thus, MAR can incorporate choice deferral and limited attention, and at the same time, it respects unanimity at the top of the attributes-based preferences.
    \end{remark}
    \subsection{When in Doubt, Probably Stay out: The High Probability of the Default Alternative in MAR}\label{whenindoubt}
In this appendix, we show that if there is no Pareto dominant alternative in a choice problem, then according to MAR, the default alternative is chosen more often than not. As discussed in the Introduction and in Appendix \ref{motivation}, in MAR, each attribute is important and if there is no unique top alternative  (i.e, a Pareto dominant alternative, which is preferred to all the other feasible alternatives in a menu via all the relevant attributes), the DM gets conflicted and sticks to her default/status quo alternative more often than not. This is in agreement with the deterministic choice procedure (see Appendix \ref{sec4.3} for more details) associated with Pareto dominance (where the DM \textit{always} sticks to her default alternative in such cases, with probability equal to $1$) and in MAR, this is weakened to the DM sticking to her default alternative \textit{more often than not}, that is, with probability greater than $\dfrac{1}{2}$.\footnote{Even in absence of a Pareto dominant alternative, some alternative maybe chosen with a significantly higher probability than the other feasible alternatives (but less than $\dfrac{1}{2}$, hence less than the choice probability of the default alternative). For example, let the set of alternatives be $S=\{x,y\}$ and the set of attributes be $A=\{a,b\}$. Let $x \succ_a y, y\succ_bx$ and $\gamma(a)=0.95, \gamma(b)=0.55$. Then according to MAR, $p(x,S,A)=0.95\times0.45=0.4275$, $p(y,S,A)=0.55\times 0.05=0.025$, and $p(x^*,S,A)=1-0.4275-0.0275=0.545$. Thus, even though $x$ is not a Pareto dominant element, it still procures a moderately high choice probability and while $y$ gets a very low choice probability. This is because the attribute in which $x$ is at the top grabs relatively high attention ($\gamma(a)=0.95$) while the attribute in which $y$ is at the top grabs relatively low attention ($\gamma(b)=0.55$).} The above is consistent with the experimental evidence provided by \cite{dhar1997consumer}, where the DM prefers to exercise choice deferral when no single alternative has a decisive advantage over the other alternatives. A stylized way of saying this is: \textit{when in doubt, probably stay out (stick to your default alternative more often than not).}
\begin{proposition}
    Let $p$ satisfy MAR. If in a choice problem $(S,A)$, there is no Pareto dominant alternative, then $p(x^*,S,A)>\dfrac{1}{2}$.
\end{proposition}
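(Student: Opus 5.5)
The plan is to write the MAR probability in a per-attribute product form, reduce the total mass on $S$ to a two-attribute bound, and then bound that two-attribute quantity directly. Throughout, let $n=|S|$. For each $a\in A$ and $x\in S$, set $f_a(x)=\gamma(a)$ if $a\in A_x(S)$ and $f_a(x)=\frac{1-\gamma(a)}{n-1}$ if $a\in A^c_x(S)$. Inspecting the three cases in the definition of MAR shows that $p(x,S,A)=\prod_{a\in A}f_a(x)$ in every case. Every factor lies in $[0,1]$ because $\gamma(a)\in(\frac12,1]$. Since $p(x^*,S,A)=1-\sum_{x\in S}p(x,S,A)$, it suffices to show $\sum_{x\in S}\prod_{a\in A}f_a(x)<\frac12$.

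First I use the hypothesis. If $n=1$, the single alternative is trivially Pareto dominant. So absence of a Pareto dominant alternative forces $n\ge 2$. Each $\succeq_a$ restricted to $S$ has a unique top element $t_a$. If all the $t_a$ coincided, that common element would be Pareto dominant. Hence there exist $a,b\in A$ with $t_a\neq t_b$, and in particular $|A|\ge 2$. Because every other factor $f_c(x)$, $c\in A\setminus\{a,b\}$, is at most $1$, we get
\[
\sum_{x\in S}p(x,S,A)\le \sum_{x\in S} f_a(x)f_b(x).
\]

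Next I compute the right-hand side with $\alpha=\gamma(a)$ and $\beta=\gamma(b)$. The term for $t_a$ is $\alpha\frac{1-\beta}{n-1}$. The term for $t_b$ is $\beta\frac{1-\alpha}{n-1}$. Each of the remaining $n-2$ alternatives contributes $\frac{(1-\alpha)(1-\beta)}{(n-1)^2}$. The key inequality is
\[
\alpha(1-\beta)+\beta(1-\alpha)<\tfrac12 .
\]
It is equivalent to $(2\alpha-1)(2\beta-1)>0$, which holds since $\alpha,\beta>\frac12$; this matches the computation in the footnote of Appendix \ref{motivation}.

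For $n=2$ there are no remaining alternatives, so the bound is exactly this inequality. For $n\ge 3$, I use $\frac1{n-1}\le\frac12$ and $\frac{n-2}{(n-1)^2}\le\frac14$. This gives a total strictly less than $\frac12\cdot\frac12+\frac14(1-\alpha)(1-\beta)<\frac14+\frac1{16}<\frac12$, using $(1-\alpha)(1-\beta)<\frac14$.

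The main obstacle is bookkeeping rather than depth. First, the product form must hold uniformly across the three cases of the MAR definition, including the degenerate cases $A_x(S)=A$ and $A^c_x(S)=A$. Second, the reduction from ``no Pareto dominant alternative'' to ``two attributes with distinct tops'' must correctly handle the trivial cases $n=1$ and $|A|=1$, both of which are excluded by the hypothesis. Once those are in place, the estimate is the elementary two-attribute bound above.
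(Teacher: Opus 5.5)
Your proof is correct and follows the paper's route. Like the paper, you pick two attributes $a,b$ with distinct $S$-tops, discard the remaining attributes, and bound the explicit two-attribute sum $\frac{\gamma(a)(1-\gamma(b))+\gamma(b)(1-\gamma(a))}{|S|-1}+(|S|-2)\frac{(1-\gamma(a))(1-\gamma(b))}{(|S|-1)^2}$; the paper discards the extra attributes by citing Regularity in Attributes, which amounts to your observation that each extra factor is at most $1$.

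The only difference is the final estimate. The paper asserts a maximum of $1/|S|$ by evaluating this bilinear expression at points such as $\gamma(a)=\gamma(b)=1/|S|$, which lie outside $(\frac12,1]^2$, and it handles $|S|=2$ in a footnote. Your argument instead uses $(2\alpha-1)(2\beta-1)>0$ together with $\frac{1}{n-1}\le\frac12$ and $\frac{n-2}{(n-1)^2}\le\frac14$; this is fully rigorous and also treats $n=1,2$ explicitly.
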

\begin{proof}
   Let $p$ satisfy MAR. Take any $S \in \mathcal{X}$ and any $A \in \mathcal{A}$. Suppose that there is no Pareto dominant alternative in the choice problem $(S,A)$. This implies for each $w \in S$, $\exists d \in A$ and $\exists r \in S\setminus\{w\}$ such that $r \succ_d w$. Also, it is obvious that $|A|\geq 2$. We need to show that $p(x^*,S,A)>\dfrac{1}{2}$ (i.e., the default alternative is chosen more often than not). Consider the case of $(S, \{a,b\})$. Suppose that $|S| \geq 3$. This is without loss of generality because the case where $|S|=2$ has already been shown in footnote \ref{twoatt} in Appendix \ref{motivation}. Now consider any two distinct attributes $a \text{ and } b$ such that $x\succeq_a z, \forall z \in S \setminus \{x\}$ and $y\succeq_b z, \forall z \in S \setminus \{y\}$, where $x \not =y$ and $x,y \in S$. Note that such $a,b,x, \text{ and }y$ exist as there is no Pareto dominant alternative in $(S,A)$. We analyse $(S,\{a,b\})$. By the definition of MAR, we get $p(x,S, \{a,b\})=\dfrac{\gamma(a)(1- \gamma(b))}{|S|-1}$, $p(y,S, \{a,b\})=\dfrac{\gamma(b)(1- \gamma(a))}{|S|-1}$, and $p(t,S, \{a,b\})=\dfrac{(1-\gamma(a))(1- \gamma(b))}{(|S|-1)^2}, \forall t \in S \setminus\{x,y\}$. We show that $\sum_{z \in S}p(z,S, \{a,b\}) <\dfrac{1}{2}$. Then, by \textit{Regularity in Attributes} (see Proposition \ref{reginatt}), we will get $\sum_{z \in S}p(z,S, A) <\dfrac{1}{2} \text{ as } \{a,b\}\subseteq A$. This, in turn, implies $p(x^*, S,A)= 1-\sum_{z \in S}p(z,S, A) >\dfrac{1}{2}$, which will establish our result. Thus, it is sufficient to show that $\sum_{z \in S}p(z,S, \{a,b\}) <\dfrac{1}{2}$. Given our observations above, we have
   \[\sum_{z \in S}p(z,S, \{a,b\}) =p(x,S, \{a,b\})+p(y,S, \{a,b\})+\sum_{t \in S \setminus\{x,y\}}p(t,S, \{a,b\})\]
   \[=\dfrac{\gamma(a)(1- \gamma(b))}{|S|-1}+\dfrac{\gamma(b)(1- \gamma(a))}{|S|-1}+(|S|-2).\dfrac{(1-\gamma(a))(1- \gamma(b))}{(|S|-1)^2}\]
   \[=\dfrac{\gamma(a)+\gamma(b)-|S|\gamma(a)\gamma(b)+|S|-2}{(|S|-1)^2}.\]
   It is straightforward to show, by using optimization techniques, that the above expression is maximized at $(\gamma(a)=\dfrac{1}{|S|},\gamma(b)=\dfrac{1}{|S|})$, $(\gamma(a)=1, \gamma(b)=\dfrac{1}{|S|})$ and $(\gamma(a)=\dfrac{1}{|S|},\gamma(b)=1)\text{ as } |S| < \infty$.\footnote{Note that the maximum we get from these values will be lesser than the actual maximum that can be attained in our model because $\dfrac{1}{|S|}\not \in (\dfrac{1}{2},1], \text{ as } |S| \geq 3$. So, it is sufficient to show that the maximum value that we get from using $(\gamma(a)=\dfrac{1}{|S|},\gamma(b)=\dfrac{1}{|S|})$, $(\gamma(a)=1, \gamma(b)=\dfrac{1}{|S|})$ and $(\gamma(a)=\dfrac{1}{|S|},\gamma(b)=1)$, is less than $ \dfrac{1}{2}$ for our purposes.} We can plug any of these values in the aforementioned expression to check for its maximum value. We plug in $(\gamma(a)=\dfrac{1}{|S|},\gamma(b)=\dfrac{1}{|S|})$, to get
   \[\max \sum_{z \in S}p(z,S, \{a,b\})=\dfrac{\dfrac{1}{|S|}+\dfrac{1}{|S|}-|S|.\dfrac{1}{|S|}.\dfrac{1}{|S|}+|S|-2}{(|S|-1)^2}\]
   \[=\dfrac{\dfrac{1}{|S|}+|S|-2}{(|S|-1)^2}=\dfrac{\dfrac{(|S|-1)^2}{|S|}}{(|S|-1)^2}=\dfrac{1}{|S|}\leq \dfrac{1}{3}<\dfrac{1}{2} \text{, as } |S| \geq 3.\] Thus, as argued above, $p(x^*,S,A) > \dfrac{1}{2}, \forall A \in \mathcal{A}, \forall S \in \mathcal{X}$, when there is no Pareto dominant alternative in $(S,A)$. This concludes the proof of our result.
\end{proof}

\subsection{Full Attention and Pareto Dominance: Deterministic Foundations of MAR}\label{sec4.3}

We now seek to establish that the deterministic variant of MAR satisfies Pareto dominance. The motivational connections between MAR and Pareto dominance have been discussed in detail in Appendix \ref{motivation}, and here we establish some formal results along those lines.\footnote{Another formal result was shown in Appendix \ref{whenindoubt}. } We first define Pareto dominance for deterministic choice functions which is borrowed from the literature on choice functions (\cite{arrow1959rational} and \cite{chernoff1954rational}).

Let $c:\mathcal{X}\times \mathcal{A}\to \mathbf X$ be a deterministic choice function such that $c(S, A)\in S$ for any $S\in\mathcal{X}$ and $A\in\mathcal{A}$. Let $\succeq_{a}$ be the preference ordering according to an attribute $a\in \mathbf A$. \begin{definition}
\textbf{Full attention} (FA):  MAR satisfies \textbf{FA} if the DM pays full attention to all the attributes, i.e., $\gamma(a)=1$ for all $a\in \mathbf{A}$.
    
\end{definition}

\begin{definition}{\textbf{Pareto Dominance:}}\label{Pareto}
   A deterministic choice function $c$ satisfies Pareto dominance if for any $S\in \mathcal{X}$ and $A\in\mathcal{A}$ the following holds:
   \begin{itemize}
   \item[\emph{(i)}] $[c(S,A)=\{x\}] \iff [x\succ_{a} y \;\;\forall\, a \in A, \forall\, y \in S\backslash\{x\}]$; and 
   \item[\emph{(ii)}] $[c(S,A)=\{x^{*}\}] \iff [\forall\, x\in S \;\exists\, a\in A , y\in S\backslash\{x\}\mbox{ such that } y\succ_{a}x]$.
\end{itemize}
\end{definition}
In agreement with what was discussed in Appendix \ref{motivation} and Appendix \ref{whenindoubt}, the above states that Pareto dominance requires that a choice is rationalizable by it if and only if the chosen alternative dominates \textit{all} the other alternatives in the menu through \textit{ all} the relevant attributes. If there is no Pareto dominant element, then the choice set is empty (or the default alternative is chosen). In terms of stochastic choice, the above translates into the fact that an alternative is chosen with probability 1 if and only if it dominates all other alternatives in the menu in all relevant attributes and given that $\gamma(a)=1$ for all $a\in A$ (by FA). If there is no Pareto dominant element, the default alternative is chosen with probability 1.

Let $c^{s}(S, A)$ denote the choice function derived from MAR as follows: (i) for any $x\in \mathbf X$, $c^{s}(S,A)=\{x\}\text{ if and only if } p^{s}(x, S, A)=1$ (ii) otherwise $c^{s}(S,A)=\{x^{*}\}$, with $p^s(x^*,S,A)=1$. Note that by the definition of MAR and FA, $p(x, S, A)=1$ if and only if $x\succ_{a}y$ for all $a\in A$ for all $y\in \mathbf X \setminus \{x\}$. Consistent with the above, for any $a\in A$ we can define the attribute orderings $\succ_{a}$ over alternatives in revealed preference terms as: $x \succ_{a} y \iff \{x\}=c^{s}(\{x,y\},\{a\}) \iff p^s(x,\{x,y\},  \{a\})=1, \forall x,y \in \mathbf{X}$.

\begin{proposition} Suppose MAR satisfies FA. The deterministic choice function $c^{s}$, based on MAR, satisfies Pareto Dominance.
\end{proposition}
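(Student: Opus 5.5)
The plan is to substitute $\gamma(a)=1$ for all $a\in\mathbf A$ (FA) into the MAR formula and read off the choice probabilities. We then check the two biconditionals in the definition of Pareto Dominance. Everything reduces to a case analysis on the sets $A_x(S)$ and $A_x^c(S)$.

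\emph{Step 1 (MAR under FA).} Fix $S\in\mathcal X$, $A\in\mathcal A$ and $x\in S$. If $A_x(S)=A$, the second branch of the MAR formula gives $p^s(x,S,A)=\prod_{a\in A}\gamma(a)=1$. If $A_x^c(S)\neq\emptyset$, both the first and third branches contain the factor $\prod_{b\in A_x^c(S)}(1-\gamma(b))=0$, so $p^s(x,S,A)=0$. The singleton menu $S=\{x\}$ falls under the first case, since then $A_x(S)=A$. Hence, under FA,
\[
p^s(x,S,A)=1 \iff A_x(S)=A \iff x\succeq_a y\ \text{ for all } a\in A \text{ and all } y\in S,
\]
and otherwise $p^s(x,S,A)=0$. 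Since each $\succeq_a$ is a linear order, for $y\neq x$ the relation $x\succeq_a y$ is equivalent to $x\succ_a y$. So $p^s(x,S,A)=1$ exactly when $x\succ_a y$ for all $a\in A$ and all $y\in S\setminus\{x\}$. I also check that this matches the revealed definition of $\succ_a$ via $c^s(\{x,y\},\{a\})$: with $\gamma(a)=1$, $p^s(x,\{x,y\},\{a\})=1$ iff $x\succ_a y$, so the two orderings coincide.

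\emph{Step 2 (part (i)).} By construction, $c^s(S,A)=\{x\}$ iff $p^s(x,S,A)=1$. By Step 1 this holds iff $x\succ_a y$ for all $a\in A$ and $y\in S\setminus\{x\}$, which is exactly (i). I also note that at most one $x$ can have probability $1$: two such alternatives would have to beat each other under every $a$, contradicting antisymmetry. So $c^s$ is well defined.

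\emph{Step 3 (part (ii)).} By definition, $c^s(S,A)=\{x^*\}$ iff no $x\in S$ has $p^s(x,S,A)=1$. By Step 1 this holds iff for every $x\in S$ we have $A_x^c(S)\neq\emptyset$. That means there exist $a\in A$ and $y\in S\setminus\{x\}$ with $y\succeq_a x$, equivalently $y\succ_a x$ since $y\neq x$. This is exactly condition (ii). In this case every $p^s(x,S,A)=0$, so $p^s(x^*,S,A)=1$, consistent with the construction.

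The main obstacle is bookkeeping rather than substance. The step needing most care is Step 1: I have to make sure every branch of the MAR definition is covered, including the overlap of $A_x(S)=A$ with singleton menus. I also have to state carefully the passage from weak to strict preference for $y\neq x$, which relies on the linearity of $\succeq_a$.
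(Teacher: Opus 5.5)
Your proposal is correct and follows essentially the paper's route. You substitute $\gamma(a)=1$ into the MAR formula, observe that $p^{s}(x,S,A)=1$ when $A_x(S)=A$ and $p^{s}(x,S,A)=0$ whenever $A_x^c(S)\neq\emptyset$, and read off both clauses of Pareto Dominance; packaging this as a single dichotomy in Step 1, with the extra check that $c^{s}$ is well defined, is only a tidier organisation of the argument the paper gives clause by clause.
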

\begin{proof} Take any $S\in \mathcal X$ and $A\in \mathcal A$. We first verify Condition (i) of Definition \ref{Pareto}. To this end, suppose $x = c^{s}(S,A)$. By definition of MAR, it must be that $p^{s}(x,S,A)=1$. However this is only possible if $x\succ_{a}y$ for all $a\in A$ and all $y\in S\backslash\{x\}$ and $\gamma(a)=1$ for all $a\in A$. Therefore, $A_x(S)=A$. By definition of MAR, \[p^s(x,S,A)=\prod_{a \in A}\gamma(a)= 1,\]
as $\gamma(a)=1$ for all $a\in A$. 

Similarly, one can show that \[[x\succ_{a}y\; \forall\;a\in A \; \forall\; y\in S\backslash \{x\}]\implies [p^{s}(x,S,A)=1]\implies [c^{s}(S,A)=x]\] where the first implication holds under FA.

We now verify Condition (ii) of Definition \ref{Pareto}. Suppose $c^{s}(S,A)=\{x^{*}\}$ for some $S\in\mathcal{X}$ and $A\in\mathcal{A}$. By definition of MAR, $c^{s}$ and FA, we have 
$p^s(x,S,A)=0$ for all $x\in S$. This implies that for each $x\in S$, $\exists\; a\in A$, $\exists\; y\in S\backslash\{x\}$ such that $y\succ_{a}x$.
Conversely, assume that \(\forall x \in S, \exists\; a \in A,  \exists\; y \in S \backslash \{x\}\) such that \(y \succ_{a} x\). This implies that \(A_{x}^{c}(S) \not = \emptyset, \forall x \in S\). By MAR and FA, 

\begin{align*} p^{s}(x,S,A) &=\prod_{a \in A_{x}(S)}\gamma(a)\dfrac{\prod_{a \in A^{c}_{x}(S)}(1-\gamma(a))}{(|S|-1)^{|A_{x}^{c}(S)|}}\\ &=\prod_{a \in A_{x}(S)} \gamma(a) \times \dfrac{0}{(|S|-1)^{|A_{x}^{c}(S)|}}\\
\\&=0.\end{align*}
    
By the definition of SCR, 
\[
\sum_{x \in S}p^{s}(x,S,A)+p^{s}(x^{*},S,A)=1.
\] 
Therefore, \(p^{s}(x^{*},S,A)=1\) which implies that $c^{s}(S,A)=\{x^{*}\}$. 
\end{proof}

\section{Appendix B}\label{sec6}

\subsection{Independence of the Axioms for MAR}

In this appendix, we prove the independence of the Axioms. To show this, we provide an example of an SCR for each Axiom listed in the statement of the Theorem \ref{thm1} which satisfies all Axioms except that one. 

\begin{claim} The following SCR-1 satisfies all the Axioms except \textbf{IN}.
    \begin{definition}[SCR-1] An SCR, $p^{1}$ is a SCR-1 if for all $a\in \mathbf A$ there exist an ordering $\succeq_{a}$, an attention parameter $\gamma(a)\in (\dfrac{1}{2},1]$ and an ordering $\curlyeqsucc$ on $\mathbf A$ such that for any $S\in\mathcal{X}$, $x\in S$ and $A\in \mathcal A$,   \begin{displaymath}
p^{1}(x,S,A)=\begin{cases} \gamma(a^{*}) &\quad \text{ if } a^{*}\in {A}_{x}(S) \text{ where } a^{*}=\max(\curlyeqsucc, A); \\
\dfrac{1-\gamma(a^{*})}{|S|-1} &\quad \text{ otherwise.}
    \end{cases}
    \end{displaymath}
    \end{definition}

   
   {\rm For an example, consider ${A}=\{a_{1},a_{2}\}$ with $x\succeq_{a_{1}}y\succeq_{a_{1}}z$, $y\succeq_{a_{2}}x\succeq_{a_{2}}z$, and $a_{1}\curlyeqsucc a_{2}$, then for $S=\{x,y,z\}$,
  \[p^{1}(x,S, A)=\gamma(a_{1}),\; p^{1}(y, S, A)=p^{1}(z,S, A)=\dfrac{1-\gamma(a_{1})}{2} \]
  while, 
  \[p^{1}(x,\{x,y\}, \{a_{1}\})=p^{1}(x,\{x\},\{a_{1}\})=\gamma(a_{1})\]
  and 
\[p^{1}(x,\{x,y\}, \{a_{2}\})=1-\gamma(a_{2}), \; p^{1}(y,\{y\},\{a_{2}\})=p(x,\{x\},a_{2})=\gamma(a_{2}).\]

It is easy to verify that the rule does not satisfy \text{IN} since 
\[
p^{1}(x,S,A)=\gamma(a_{1})\neq p^{1}(x,S, \{a_{1}\}) p^{1}(x,S, \{a_{2}\})=\gamma(a_{1})\left(\dfrac{1-\gamma(a_{2})}{|S|-1}\right).
\]
 The rule satisfies all the other Axioms.}
  
\end{claim}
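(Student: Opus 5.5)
The plan is to exploit one structural fact: every axiom other than \textbf{IN} in Theorem \ref{thm1} (namely \textbf{FS}, \textbf{IS}, \textbf{UIA}, \textbf{DOM}, \textbf{BD} and \textbf{ST}) refers only to choice problems with a \emph{singleton} attribute set $\{a\}$. When $A=\{a\}$ we have $a^{*}=\max(\curlyeqsucc,\{a\})=a$. Hence $p^{1}(x,S,\{a\})=\gamma(a)$ if $x$ is $\succeq_a$-top in $S$, and $p^{1}(x,S,\{a\})=\frac{1-\gamma(a)}{|S|-1}$ otherwise. This is exactly the value the MAR $p^{s}$ with the same $(\succeq_a,\gamma(a))$ assigns to $(x,S,\{a\})$. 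So $p^{1}$ and $p^{s}$ agree on all singleton-attribute data.

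First I will check that $p^{1}$ is a legitimate SCR. For any $A$ and $S$ with $|S|\ge 2$, the unique $\succeq_{a^*}$-top element of $S$ receives $\gamma(a^*)$. The remaining $|S|-1$ elements each receive $\frac{1-\gamma(a^*)}{|S|-1}$, so the total is exactly $1$. For $|S|=1$, the single element lies in $A_x(S)$ and receives $\gamma(a^*)\le 1$. In both cases the default probability is nonnegative, and the ``otherwise'' branch never divides by zero.

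Second, I will invoke the sufficiency half of Theorem \ref{thm1}: its claims show that any MAR satisfies \textbf{FS}, \textbf{IS}, \textbf{UIA}, \textbf{DOM}, \textbf{BD} and \textbf{ST}. Each of these axioms is a statement about the numbers $p(\cdot,\cdot,\{a\})$ alone. Since $p^{1}$ coincides with $p^{s}$ on those numbers, $p^{1}$ inherits all six axioms verbatim. If a self-contained check is preferred, each one can be redone in a line. \textbf{IS} holds because $p^1(x,\{x\},\{a\})=\gamma(a)$. \textbf{BD} holds because $\gamma(a)>\tfrac12>1-\gamma(a)$. \textbf{ST} follows from transitivity of $\succeq_a$. \textbf{UIA} holds because non-top elements all get $\frac{1-\gamma(a)}{|S|-1}$. \textbf{DOM} holds because the top element gets $\gamma(a)$ independently of $S$.

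Third, I will confirm the failure of \textbf{IN} using the example already given, with $A=\{a_1,a_2\}$, $a_1\curlyeqsucc a_2$ and $S=\{x,y,z\}$. There $p^{1}(x,S,A)=\gamma(a_1)$, while $p^{1}(x,S,\{a_1\})\,p^{1}(x,S,\{a_2\})=\gamma(a_1)\frac{1-\gamma(a_2)}{2}$. Because $\gamma(a_1)>0$ and $\frac{1-\gamma(a_2)}{2}\le\frac14<1$, the two quantities differ for every admissible choice of parameters. Thus the violation is robust and not an artifact of particular numbers.

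No step here is a real obstacle. The only point needing care is the observation that the other six axioms never involve non-singleton attribute sets, which is what allows the reduction to MAR on singleton-attribute data.
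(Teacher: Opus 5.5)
Your proof is correct and follows the paper's approach: IN fails on the same example ($a_1\curlyeqsucc a_2$, $S=\{x,y,z\}$), and the paper simply asserts that the remaining axioms hold. Your observation that FS, IS, UIA, DOM, BD and ST constrain only singleton-attribute data supplies the justification the paper omits, since on such data $p^1$ coincides with the MAR having the same $(\succeq_a,\gamma(a))$ (the same device the paper uses for AAR-1), and your check that $p^1$ is a well-defined SCR is a useful addition.
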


\begin{claim} The following SCR-2 satisfies all the Axioms except \textbf{FS}.
   \begin{definition}[SCR-2] An SCR, $p^{2}$ is SCR-2 if for all $a\in \mathbf A$ there exists an ordering $\succeq_{a}$ and an attention parameters $\gamma(a)\in (\dfrac{1}{2},1]$ such that for any $S\in\mathcal{X}$, $x\in S$ and $A\in \mathcal A$,   \begin{displaymath}
p^{2}(x,S,A)=\begin{cases}
\prod_{a\in A_{x}(S)}\gamma(a)\dfrac{\prod_{b\in A_{x}^{c}(S)}(1-\gamma(b))}{2(|S|-1)^{|A_{x}^{c}(S)|}}, &\quad \text{ if }  A_{x}(S) \not = \emptyset,  A_{x}^{c}(S) \not = 
\emptyset; \\
\ \prod_{a \in A_{x}(S)}\gamma(a), & \quad \text{if } A_{x}(S)=A; \\
 \dfrac{\prod_{b \in  A_{x}^{c}(S) }(1-\gamma(b))}{2(|S|-1)^{|A_{x}^{c}(S)|}}, &\quad \text{ if } A_{x}^{c}(S)=A. \\   
    \end{cases}
    \end{displaymath}
\end{definition}
\end{claim}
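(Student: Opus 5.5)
The plan is to check each of the seven axioms of Theorem \ref{thm1} directly against the formula for $p^{2}$, treating the singleton-attribute axioms first and \textbf{IN} last. All axioms other than \textbf{IN} involve only singleton attribute sets $\{a\}$. On a singleton attribute set, SCR-2 has the following explicit form:
\begin{itemize}
\item the $\succeq_a$-top alternative of $S$ gets $\gamma(a)$;
\item every other alternative gets $\frac{1-\gamma(a)}{2(|S|-1)}$;
\item every singleton menu gets $p^{2}(w,\{w\},\{a\})=\gamma(a)$.
\end{itemize}

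From this form the singleton-attribute axioms follow quickly.
\begin{itemize}
\item \textbf{IS} is immediate.
\item For \textbf{BD}, in a binary menu the two probabilities are $\gamma(a)$ and $\frac{1-\gamma(a)}{2}$. These are distinct because $\gamma(a)>\frac12>\frac{1-\gamma(a)}{2}$.
\item The same computation shows that $p^{2}(x,\{x,y\},\{a\})>p^{2}(y,\{x,y\},\{a\})$ iff $x\succ_a y$. So the revealed binary relation coincides with $\succeq_a$.
\item \textbf{ST} then follows from transitivity of $\succ_a$.
\item For \textbf{UIA}, if $x\succ_a y$ and $x\succ_a z$ with $x,y,z\in S$, then neither $y$ nor $z$ is $\succeq_a$-top in $S$. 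Both therefore receive $\frac{1-\gamma(a)}{2(|S|-1)}$.
\item For \textbf{DOM}, if $x$ beats every $y\in S\setminus\{x\}$ pairwise, then $x$ is $\succeq_a$-top in $S$ and in every $S\setminus\{y\}$. Hence $x$ receives $\gamma(a)$ in all of them, including the singleton case.
\end{itemize}

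\textbf{FS} fails by a one-line computation. Pick any attribute with $\gamma(a)<1$, which is allowed since $\gamma(a)\in(\frac12,1]$, and any $S$ with $|S|\ge 2$. Then
\[
\sum_{x\in S}p^{2}(x,S,\{a\})=\gamma(a)+\tfrac{1-\gamma(a)}{2}<1.
\]

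The step I expect to be the real obstacle is \textbf{IN}. I would split into cases according to whether $A_x(S)$, $A_x^c(S)$, $B_x(S)$, $B_x^c(S)$ are empty, using $(A\cup B)_x(S)=A_x(S)\cup B_x(S)$ and $(A\cup B)^c_x(S)=A^c_x(S)\cup B^c_x(S)$ for disjoint $A,B$. The products of $\gamma$'s, of $(1-\gamma)$'s and of the $(|S|-1)$-powers factor exactly as in the proof that MAR satisfies \textbf{IN}. The only issue is the constant $\frac12$, which is attached once to the whole expression rather than once per dominated attribute.

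The cases close cleanly when at most one of $A^c_x(S)$, $B^c_x(S)$ is nonempty. When both are nonempty, the left side of \textbf{IN} carries a single $\frac12$ while the right side carries $\frac14$. I would first check whether this case can be excluded, or is vacuous, given how the rule is intended to be read. If it is not vacuous, the factor $2$ must be interpreted as accompanying each power of $(|S|-1)$, that is, as $(2(|S|-1))^{|A^c_x(S)|}$. Under that reading the case analysis goes through exactly as for MAR, and the singleton computations above are unchanged, since for a singleton attribute set the two readings coincide.
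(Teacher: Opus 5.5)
Your checks of \textbf{IS}, \textbf{BD}, \textbf{ST}, \textbf{UIA}, \textbf{DOM} and of the failure of \textbf{FS} are correct. They are exactly what the paper leaves implicit: its argument is only the \textbf{FS} computation on $\{x,y\}$, which, like yours, tacitly needs $\gamma(a)<1$, followed by the assertion that the other axioms are ``easy to verify.'' Your worry about \textbf{IN} is also well founded, but you should resolve it rather than hedge, because the offending case is not vacuous. Take $S=\{x,y\}$ and distinct attributes $a,b$ with $y\succ_a x$, $y\succ_b x$ and $\gamma(a),\gamma(b)\in(\frac12,1)$. Then $A^c_x(S)=\{a,b\}$, and
\[
p^{2}(x,S,\{a,b\})=\frac{(1-\gamma(a))(1-\gamma(b))}{2}\neq\frac{(1-\gamma(a))(1-\gamma(b))}{4}=p^{2}(x,S,\{a\})\,p^{2}(x,S,\{b\}).
\]
So SCR-2 as literally defined violates \textbf{IN} for generic parameters. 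The paper's ``easy to verify'' is false, and the claim cannot be proved in that generality.

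Your repair is valid, but it is a correction of the example rather than a reading of it: $2(|S|-1)^{|A^c_x(S)|}$ unambiguously carries a single factor $2$. Attach $\frac{1-\gamma(b)}{2(|S|-1)}$ to each $b\in A^c_x(S)$ instead. The rule then becomes $p(x,S,A)=\prod_{a\in A}p(x,S,\{a\})$, so \textbf{IN} holds by construction. All singleton-attribute values are unchanged, so your other checks and the \textbf{FS} failure carry over. Its values are bounded by those of the MAR with the same parameters, so menu totals still never exceed one.

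Alternatively, you can keep the paper's formula and restrict the parameters. Let $\gamma(a_0)\in(\frac12,1)$ for one attribute $a_0$ and $\gamma(b)=1$ for every $b\neq a_0$. Suppose $A\cap B=\emptyset$ and both $A^c_x(S)$ and $B^c_x(S)$ are nonempty. These disjoint sets cannot both contain $a_0$, so one of them contains some $b\neq a_0$ with $1-\gamma(b)=0$, and both sides of \textbf{IN} vanish. The remaining cases close as you showed, and \textbf{FS} still fails at $a_0$. Taking $\mathbf A=\{a_0\}$, which makes \textbf{IN} vacuous, also works.

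Either modification rescues the independence of \textbf{FS}, but your write-up should state explicitly that one of them is needed.
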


Consider $S=\{x,y\}$ and $A=\{a\}$ such that $x\succ_{a}y$. This rule does not satisfy \textbf{FS} since 
\[p^{2}(x,\{x,y\},\{a\})=\gamma(a), \; p^{2}(y,\{x,y\},\{a\})=\dfrac{1-\gamma(a)}{2}, \text{ and } p^{2}(x^{*},\{x,y\},\{a\})=\dfrac{1-\gamma(a)}{2}.\]

It is easy to verify that SCR-2 satisfies all the other Axioms.

\begin{claim} The following SCR-3 satisfies all the Axioms except \textbf{IS}.
   \begin{definition}[SCR-3] An SCR, $p^{3}$ is SCR-3 if there exists an ordering $\succeq_{a}$ and attention parameters $\gamma(a)\in (\dfrac{1}{2},1]$ for all $a\in \mathbf A$ such that for any $S\in\mathcal{X}$, $x\in S$, and $A\in\mathcal{A}$,
   \begin{displaymath}
p^{3}(x,S,A)=\begin{cases}
p^{s}(x,S,A), &\quad \text{ if } |S| \geq 2; \\
\ \prod_{a \in A_{x}(\mathbf{X})}\gamma(a), & \quad \text{if } A_{x}(\mathbf{X})=A,|S|=1; \\   
\ \prod_{a \in A_{x}^{c}(\mathbf{X})}(1-\gamma(a)), &\quad \text{ if } A_{x}^{c}(\mathbf{X})=A, |S|= 1; \\ 
\prod_{a\in A_{x}(\mathbf{X})}\gamma(a)\prod_{b\in A_{x}^{c}(\mathbf{X})}(1-\gamma(b)), &\quad \text{ if }  A_{x}(\mathbf{X}) \not = \emptyset,  A_{x}^{c}(\mathbf{X}) \not = 
\emptyset, |S|= 1. \\
    \end{cases}
    \end{displaymath}
\end{definition}
\end{claim}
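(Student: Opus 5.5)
The plan is to verify the claim axiom by axiom. The guiding observation is that SCR-3 coincides with the MAR $p^{s}$ on every menu with $|S|\geq 2$, and departs from it only on singleton menus. There, the value of $p^{3}(x,\{x\},A)$ depends on $x$'s position in the whole of $\mathbf X$: each attribute $a\in A$ contributes $\gamma(a)$ if $x$ is $\succeq_a$-maximal in $\mathbf X$, and $1-\gamma(a)$ otherwise. Throughout I will assume that each $\succeq_a$ has a maximum in $\mathbf X$ (for instance, $\mathbf X$ finite) and that $|\mathbf X|\geq 2$, so that the sets $A_x(\mathbf X)$ are meaningful.

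\emph{Failure of \textbf{IS}.} Fix $a$, let $x=\max(\succeq_a,\mathbf X)$, and take any $y\neq x$. Then $p^{3}(x,\{x\},\{a\})=\gamma(a)$, while $p^{3}(y,\{y\},\{a\})=1-\gamma(a)$. These differ because $\gamma(a)>\tfrac12$.

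\emph{Axioms that only involve menus of size at least two.} \textbf{FS}, \textbf{BD}, \textbf{ST} and \textbf{UIA} only evaluate $p$ on menus with $|S|\geq 2$; for \textbf{UIA}, $S\supseteq\{x,y,z\}$ with $x\neq y,z$. On such menus $p^{3}=p^{s}$, so these axioms are inherited directly from the corresponding Claims in the proof of Theorem \ref{thm1}. (If $y=z$ in \textbf{UIA}, the conclusion is trivial.)

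\emph{\textbf{IN}.} For $|S|\geq 2$ this is the MAR Claim. For a singleton menu, the four singleton cases collapse to the single formula
\[
p^{3}(x,\{x\},A)=\prod_{a\in A_x(\mathbf X)}\gamma(a)\prod_{b\in A_x^{c}(\mathbf X)}(1-\gamma(b)),
\]
a product of per-attribute factors. Membership of $a$ in $A_x(\mathbf X)$ does not depend on which other attributes lie in $A$, so the product splits over disjoint $A$ and $B$. This gives $p^{3}(x,\{x\},A\cup B)=p^{3}(x,\{x\},A)\,p^{3}(x,\{x\},B)$.

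\emph{Main obstacle: \textbf{DOM}.} Removing alternatives can shrink $S$ to the singleton $\{x\}$, and then SCR-3 no longer agrees with MAR. For $|S|\geq 3$, every menu $S\setminus\{y\}$ has size at least two, and the MAR Claim applies. The delicate case is $S=\{x,y\}$ with $x\succ_a y$. Here $p^{3}(x,\{x,y\},\{a\})=\gamma(a)$, while
\[
p^{3}(x,\{x\},\{a\})=\begin{cases}\gamma(a), & \text{if } x \text{ is } \succeq_a\text{-maximal in } \mathbf X,\\ 1-\gamma(a), & \text{otherwise.}\end{cases}
\]
So \textbf{DOM} holds in this case only when $x$ is $\succeq_a$-top in $\mathbf X$. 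This step is where I expect the argument to break as literally stated. My first attempt would be to check whether \textbf{DOM} is meant to apply only when $S\setminus\{y\}$ still has at least two elements. If it is not, I would repair the example rather than the verification. A natural repair keeps IS-violating singleton values only for alternatives that cannot arise as a strict pairwise winner, or makes the singleton value depend on $x$ in a way that still equals $\gamma(a)$ whenever $x$ beats some alternative under $a$. The remaining axioms would then be re-verified exactly as above.
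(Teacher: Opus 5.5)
Each of your individual checks is sound, and your diagnosis of \textbf{DOM} is correct. Suppose $|\mathbf X|\geq 3$ and take $w\succ_a x\succ_a y$. Then \textbf{DOM} at $S=\{x,y\}$ requires $\gamma(a)=p^{3}(x,\{x,y\},\{a\})=p^{3}(x,\{x\},\{a\})=1-\gamma(a)$, which is impossible since $\gamma(a)>\tfrac12$. So SCR-3 satisfies \textbf{DOM} only on a universe with at most two alternatives. The paper's ``easy to verify'' is valid only because its proof fixes $\mathbf X=\{x,y\}$.

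The gap in your proposal is that you stop at ``I expect the argument to break'' and offer speculative fixes, rather than producing a working instance. The missing observation is that an independence example needs just one instance, and the two-element universe is one. There, every strict pairwise winner under $a$ is the $\succeq_a$-maximum of $\mathbf X$. So $A_x(\mathbf X)=\{a\}$ for $A=\{a\}$, and $p^{3}(x,\{x\},\{a\})=\gamma(a)=p^{3}(x,\{x,y\},\{a\})$. Hence \textbf{DOM} holds. Your arguments for \textbf{IN}, \textbf{FS}, \textbf{BD}, \textbf{ST} and \textbf{UIA} (the last two hold trivially here), and for the failure of \textbf{IS}, go through unchanged.

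SCR-3 on $\{x,y\}$ is in fact an instance of your first repair. The only alternative whose singleton value differs from $\gamma(a)$ is the $\succeq_a$-bottom, which never wins a pairwise comparison. That repair also works on an arbitrary finite universe: set the singleton factor to $1-\gamma(a)$ if $x$ is the $\succeq_a$-minimum of $\mathbf X$, and to $\gamma(a)$ otherwise.

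Your other escape route, reading \textbf{DOM} as applying only when $|S\setminus\{y\}|\geq 2$, is not available. The proof of Theorem~\ref{thm1} applies \textbf{DOM} all the way down to the singleton in the chain $(\ast)$. That step is precisely what identifies the top alternative's probability with $\gamma(a)=p(x,\{x\},\{a\})$, so restricting \textbf{DOM} would break the characterization itself.
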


Consider $\mathbf{X}=\{x,y\}$ and $A=\{a\}$ such that $x\succ_{a}y$. This rule does not satisfy \textbf{IS} since \[p^{3}(x,\{x\},\{a\})=\gamma(a), \; p^{3}(y,\{y\},\{a\})=1-\gamma(a).\]

It is easy to verify that SCR-3 satisfies all the other Axioms.

\begin{claim} The following SCR-4 satisfies all the Axioms except \textbf{UIA}.

  Let 
\[
A_{x}^{1}(S)=\{a\in A : x=\max(\succeq_{a}, S)\}
\] and 
\[
A_{x}^{2}(S)=\{a\in A : x=\max(\succeq_{a}, S\backslash\{x^{*}_{a}(S)\})\}.
\] 
where $x^{*}_{a}(S)=\max(\succeq_{a}, S)$.
   
   \begin{definition}[SCR-4] An SCR, $p^{4}$ is SCR-4 if for all $a\in \mathbf A$ there exists an ordering $\succeq_{a}$ and attention parameters $\gamma(a)\in (\dfrac{1}{2},1]$  such that for any $S\in\mathcal{X}$, $x\in S$ and $A\in \mathcal A$,   \begin{displaymath}
p^{4}(x,S,A)=\begin{cases}
\prod_{a\in A_{x}^{1}(S)}\gamma(a)\dfrac{\prod_{b\in A_{x}^{2}(S)}(1-\gamma(b))}{(|S|-1)^{|A_{x}^{2}(S)|}}, &\quad \text{ if }  A_{x}^{1}(S) \not = \emptyset,  A_{x}^{2}(S) \not = \emptyset; \\
\ \prod_{a \in A_{x}^{1}(S)}\gamma(a), & \quad \text{if } A_{x}^{1}(S)=A; \\
 \dfrac{\prod_{b \in  A_{x}^{2}(S) }(1-\gamma(b))}{(|S|-1)^{|A_{x}^{2}(S)|}}, &\quad \text{ if } A_{x}^{2}(S)=A; \\   
 0, &\quad \text{ otherwise.}
    \end{cases}
    \end{displaymath}
\end{definition}
\end{claim}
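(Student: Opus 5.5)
The plan is to treat SCR-4 attribute by attribute, exactly as the MAR proof does. Fix $S$ and $a\in A$. Under $\succeq_a$, an alternative $x\in S$ falls into one of three classes: it is the top of $S$ (so $a\in A^1_x(S)$), it is the second-ranked element (so $a\in A^2_x(S)$), or it is neither. I would first record the reading that makes the four cases of the definition consistent. Write
\[
f_a(x,S)=\gamma(a),\qquad f_a(x,S)=\frac{1-\gamma(a)}{|S|-1},\qquad f_a(x,S)=0
\]
in these three situations respectively. Then $p^4(x,S,A)=\prod_{a\in A}f_a(x,S)$, where the ``otherwise'' clause covers every $x$ that lies in neither class for some $a\in A$. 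When $|S|=1$ there is no second element, so $A^1_x(\{x\})=A$ always. With this product form in hand, most axioms follow by the same computations used for MAR in the proof of Theorem \ref{thm1}.

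\emph{Failure of UIA.} Take $x\succ_a y\succ_a z$ and $S=\{x,y,z\}$. The pairwise data give $p^4(x,\{x,y\},\{a\})=\gamma(a)>1-\gamma(a)=p^4(y,\{x,y\},\{a\})$, and likewise for the pair $\{x,z\}$. So the premise of UIA holds. However, $p^4(y,S,\{a\})=\tfrac{1-\gamma(a)}{2}$, which is positive whenever $\gamma(a)<1$, while $p^4(z,S,\{a\})=0$. Hence UIA fails.

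\emph{The remaining axioms.}
\begin{itemize}
\item IN is immediate from the product form, because $f_a(x,S)$ depends only on $a$ and $S$, not on the rest of the attribute set.
\item IS holds because $p^4(x,\{x\},\{a\})=\gamma(a)$ for every $x$.
\item BD and ST hold because on binary menus the top gets $\gamma(a)$ and the other element gets $1-\gamma(a)$. Since $\gamma(a)>\tfrac12$, these values differ, and the revealed binary relation is exactly $\succ_a$, which is transitive.
\item DOM holds because if $x$ beats every $y\in S\setminus\{x\}$ pairwise under $a$, then $x$ is the $\succeq_a$-top of both $S$ and $S\setminus\{y\}$. So it receives $\gamma(a)$ in both menus.
\end{itemize}

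\emph{Main obstacle: FS.} Under a single attribute and $|S|\ge 3$, the masses are $\gamma(a)$ for the top, $\tfrac{1-\gamma(a)}{|S|-1}$ for the second element, and $0$ for the rest. These sum to $1$ only when $|S|=2$ or $\gamma(a)=1$. So, read literally, SCR-4 appears to violate FS as well. I would first double-check the intended reading of $A^2_x(S)$. If the issue persists, I would repair the example by giving the second-ranked alternative the undivided weight $1-\gamma(b)$, i.e.\ dropping the factor $(|S|-1)^{-|A^2_x(S)|}$. This keeps IN, IS, BD, ST, DOM and the UIA violation intact, since binary menus are unchanged, and it restores $\sum_{x\in S}p(x,S,\{a\})=1$.
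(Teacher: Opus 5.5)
Your analysis is correct, and on the key point it is more accurate than the paper. The paper uses your witness $x\succ_a y\succ_a z$, $S=\{x,y,z\}$, but records $p^4(y,S,\{a\})=1-\gamma(a)$. The displayed definition actually gives $\frac{1-\gamma(a)}{2}$, since $A^2_y(S)=\{a\}=A$. The paper then declares the other axioms, FS included, ``easy to verify''.

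For the rule as typeset, FS does fail, and you can state this categorically. A UIA violation needs $y\neq z$ strictly below $x$, hence $|S|\ge 3$. It also needs $\gamma(a)<1$, since otherwise both inferior alternatives get $0$. Under exactly these conditions, $\gamma(a)+\frac{1-\gamma(a)}{|S|-1}<1$ at the same $(S,\{a\})$.

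More generally, while the denominator $|S|-1$ is present, FS forces every non-top alternative under a single attribute to receive $\frac{1-\gamma(a)}{|S|-1}$, which already delivers UIA. So no reinterpretation of $A^2_x(S)$ can help, and there is nothing to double-check there. The culprit is the factor $(|S|-1)^{-|A^2_x(S)|}$ inherited from MAR.

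Your repair, giving the second-ranked alternative the undivided weight $1-\gamma(b)$, is the right one. It reproduces exactly the values $\gamma(a)$, $1-\gamma(a)$, $0$ that the paper displays. Your checks of IN, FS, IS, DOM, BD and ST go through for it.

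One thing you should make explicit rather than fold into a ``reading'': your product form $\prod_{a\in A}f_a(x,S)$ also changes the first case of the definition. Literally, that case applies whenever $A^1_x(S)$ and $A^2_x(S)$ are both nonempty, even if some other attribute in $A$ ranks $x$ third or lower. In that situation IN fails. Take $S=\{x,y,z\}$ with $x$ first under $a$, second under $b$, and third under $c$. The literal formula gives $p^4(x,S,\{a,b,c\})=\gamma(a)\frac{1-\gamma(b)}{2}>0$ for $\gamma(b)<1$, while $p^4(x,S,\{a,b\})\,p^4(x,S,\{c\})=0$.

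So the product form is a second necessary correction, not just a restatement. You also need the easy bound $\sum_{x\in S}\prod_{a\in A}f_a(x,S)\le\sum_{x\in S}f_{a_0}(x,S)\le 1$ for any $a_0\in A$, so that the default probability is nonnegative. With both corrections and this bound in place, your argument establishes the (repaired) claim.
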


Consider $S=\{x,y,z\}$ and $A=\{a\}$ such that $x\succ_{a}y\succ_{a}z$. This rule does not satisfy \textbf{UIA} since

\[p^{4}(x,S,A)=\gamma(a), \; p^{4}(y,S, A)=1-\gamma(a) \text{ and } p(z,S, A)=0.\]

It is easy to verify that SCR-4 satisfies all the other Axioms.

\begin{claim} The following SCR-5 satisfies all the Axioms except \textbf{DOM}.    
\end{claim}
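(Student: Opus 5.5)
The plan is to exhibit SCR-5 by perturbing MAR in exactly the place that \textbf{DOM} controls. \textbf{DOM} is the only axiom that links a top alternative's probability across menus of different sizes. So I would let the weight given to the $\succeq_a$-top alternative depend on the menu size. Within each menu the tops-only, uniform-remainder structure and the multiplicative aggregation are left untouched.

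Concretely, fix linear orders $\succeq_a$ and $\gamma(a)\in(\frac12,1)$ for every $a$; the strict upper bound $\gamma(a)<1$ is what lets \textbf{DOM} fail. Define
\[
g(a,S)=\gamma(a)\ \text{ if } |S|\le 2, \qquad g(a,S)=\tfrac{1+\gamma(a)}{2}\ \text{ if } |S|\ge 3 .
\]
Then let $p^{5}(x,S,A)$ be given by the MAR formula with $\gamma(a)$ replaced by $g(a,S)$ throughout:
\[
p^{5}(x,S,A)=\prod_{a\in A_x(S)}g(a,S)\prod_{b\in A_x^c(S)}\frac{1-g(b,S)}{|S|-1},
\]
with the empty-product conventions of the MAR definition. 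In every case $g(a,S)\in(\frac12,1]$.

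\textbf{Checking the remaining axioms.} I would verify each axiom directly, mirroring the ``if'' direction of Theorem \ref{thm1}.
\begin{itemize}
\item \textbf{IN}: the formula is a product over attributes with a fixed menu $S$, so the same factorisation as in the MAR claim applies.
\item \textbf{FS}: for a single attribute, the top alternative receives $g(a,S)$ and each of the other $|S|-1$ alternatives receives $\frac{1-g(a,S)}{|S|-1}$, which sums to $1$.
\item \textbf{IS}: on singletons $g(a,\{x\})=\gamma(a)$, independent of $x$.
\item \textbf{BD} and \textbf{ST}: binary menus use $g=\gamma(a)>\frac12$, so $p^5(x,\{x,y\},\{a\})>p^5(y,\{x,y\},\{a\})$ if and only if $x\succ_a y$. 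Both axioms then follow from $\succ_a$ being a linear order, exactly as for MAR.
\item \textbf{UIA}: the premise forces $x\succ_a y$ and $x\succ_a z$, so neither $y$ nor $z$ is top in any $S$ containing $x$. Both therefore receive $\frac{1-g(a,S)}{|S|-1}$.
\end{itemize}

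I also need $p^5$ to be a genuine SCR, i.e.\ $\sum_{x\in S}p^5(x,S,A)\le 1$. Each factor is at most $1$, so every $p^5(x,S,A)$ is bounded by its single-attribute value for any one $a\in A$. Those single-attribute values sum to $1$ by the \textbf{FS} computation.

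\textbf{Violation of DOM.} Take $S=\{x,y,z\}$ with $x\succ_a y\succ_a z$. The premise of \textbf{DOM} holds for $x$. However, $p^5(x,S,\{a\})=\frac{1+\gamma(a)}{2}$ while $p^5(x,S\setminus\{y\},\{a\})=\gamma(a)$, and these differ because $\gamma(a)<1$.

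The only delicate step is choosing $g$ so that \textbf{BD} survives and \textbf{IS} is kept intact while \textbf{DOM} breaks. Keeping $g=\gamma$ on menus of size at most two, and changing it only on larger menus, is designed to do exactly that.
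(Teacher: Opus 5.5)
Your construction is correct: for fixed $S$ the weights $g(\cdot,S)$ are just attribute-indexed parameters in $(\frac12,1)$, so \textbf{IN}, \textbf{FS}, \textbf{UIA} go through as for MAR. Menus of size at most two are exactly MAR, which gives \textbf{IS}, \textbf{BD}, \textbf{ST}. Your single-attribute bound correctly shows the rule is a genuine SCR. However, your rule is a different witness from the rule the claim refers to.

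The paper's SCR-5 keeps MAR on every menu with $|S|\ge 2$ and sets $p^5(x,\{x\},A)=1$ on singletons. \textbf{DOM} then fails at the smallest instance: $p^5(x,\{x,y\},\{a\})=\gamma(a)\neq 1=p^5(x,\{x\},\{a\})$ when $x\succ_a y$ and $\gamma(a)<1$. \textbf{FS}, \textbf{UIA}, \textbf{BD}, \textbf{ST} are inherited verbatim, since they only involve non-singleton menus. \textbf{IN} and \textbf{IS} hold at singletons because $1=1\cdot 1$ and $1=1$. So proving the statement literally means checking that rule, which is quicker than your check.

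Your perturbation runs the other way: singletons and pairs stay MAR, and only the top weight on menus of size at least three changes. This costs a little. You need $|\mathbf X|\ge 3$, since otherwise your rule is a MAR and satisfies \textbf{DOM}, whereas the paper needs only two alternatives. You also need the separate check that $\sum_{x\in S}p^5(x,S,A)\le 1$.

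In return it gives a sharper message. Your rule agrees with a MAR on exactly the data from which the paper identifies $(\gamma(a),\succeq_a)$: singleton and binary menus under single attributes. Yet it disagrees on larger menus. This isolates \textbf{DOM} as the axiom that carries the identified $\gamma(a)$ from singletons to the top alternative of an arbitrary menu, i.e.\ step $(\ast)$ in the proof of Theorem~\ref{thm1}. The paper's example instead breaks the link already between singletons and pairs, so it disagrees with every MAR on the identification data itself.
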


\begin{definition}[SCR-5] An SCR $p^{5}$ is SCR-5  if for all $a\in \mathbf A$ there exists an ordering $\succeq_{a}$ and attention parameters $\gamma(a)\in (\dfrac{1}{2},1]$  such that for any $S\in\mathcal{X}$, $x\in S$ and $A\in \mathcal A$,    \begin{displaymath}
p^{5}(x,S,A)=\begin{cases} p^{s}(x,S, A), &\quad \text{ if }  |S|\geq 2; \\
 1, &\quad \text{ if } |S|=1.\\
    \end{cases}
    \end{displaymath}
\end{definition}
SCR-5 is a MAR if $|S|\geq 2$ and chooses an alternative with probability one from any singleton set. 

Consider $S=\{x,y,z\}$ and $A=\{a\}$ such that $x\succ_{a}y\succ_{a}z$. This rule does not satisfy \textbf{DOM} since 

\[ p^{5}(x,\{x,y\}, \{a\})=\gamma(a) \not = p^{5}(x,\{x\},\{a\})=1.\]

It is easy to verify that SCR-5 satisfies all the other Axioms. 

\begin{claim} The following SCR-6 satisfies all the Axioms except \textbf{BD}.    
\end{claim}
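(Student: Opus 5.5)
To establish the claim I will exhibit a concrete SCR-6 and verify the axioms one at a time. The idea is to choose a rule whose binary comparisons under a single attribute are always ties. This makes \textbf{BD} fail. It also makes the premises of \textbf{UIA}, \textbf{DOM} and \textbf{ST} (all strict binary inequalities) never hold, so those three axioms are satisfied vacuously. The remaining axioms, \textbf{IN}, \textbf{FS} and \textbf{IS}, I will build into the functional form directly.

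\textbf{The rule.} Fix attention parameters $\gamma(a)\in(\tfrac12,1]$ for each $a\in\mathbf A$. For $x\in S$, define
\[
p^{6}(x,S,A)=\begin{cases}\prod_{a\in A}\gamma(a), & \text{if } |S|=1;\\[2pt] \left(\dfrac{1}{|S|}\right)^{|A|}, & \text{if } |S|\geq 2.\end{cases}
\]
This is a well-defined SCR. For $|S|\geq 2$, $\sum_{x\in S}p^6(x,S,A)=|S|^{1-|A|}\leq 1$. For singletons, $\prod_{a\in A}\gamma(a)\leq 1$.

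\textbf{Axioms that hold.}
\begin{itemize}
\item[] \textbf{IN.} For disjoint $A,B$ we have $|A\cup B|=|A|+|B|$ and $\prod_{a\in A\cup B}\gamma(a)=\prod_{a\in A}\gamma(a)\prod_{b\in B}\gamma(b)$, so both branches factor exactly.
\item[] \textbf{FS.} For $|S|\geq 2$, $\sum_{x\in S}1/|S|=1$.
\item[] \textbf{IS.} $p^6(x,\{x\},\{a\})=\gamma(a)$ does not depend on $x$.
\item[] \textbf{UIA}, \textbf{DOM}, \textbf{ST}. For any $x\neq y$, $p^6(x,\{x,y\},\{a\})=\tfrac12=p^6(y,\{x,y\},\{a\})$. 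No strict binary inequality ever occurs, so the antecedent of each axiom is never met and each holds vacuously.
\end{itemize}

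\textbf{The failure of BD.} The same computation gives $p^6(x,\{x,y\},\{a\})=p^6(y,\{x,y\},\{a\})$ for every pair $x\neq y$, which directly violates \textbf{BD}. This presumes $\mathbf X$ has at least two elements, which the independence exercise implicitly assumes throughout.

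\textbf{Main obstacle.} The only delicate point is checking that the vacuous satisfaction is legitimate for \textbf{DOM}. There the antecedent requires $x$ to strictly beat every $y\in S\setminus\{x\}$. When $S=\{x\}$ this condition is vacuously true, but the consequent quantifies over the same empty set and is therefore also vacuous. So \textbf{DOM} holds in that edge case as well.
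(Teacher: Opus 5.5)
Your construction is correct. With $p^{6}(x,S,A)=|S|^{-|A|}$ for $|S|\ge 2$ and $\prod_{a\in A}\gamma(a)$ on singletons, IN, FS and IS hold by direct computation. Every single-attribute binary comparison is a tie, so BD fails while the strict-inequality antecedents of UIA, DOM and ST are never met.

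The paper does not build a new rule. Its SCR-6 is the MAR formula itself evaluated at the boundary value $\gamma(a)=\tfrac12$. The displayed definition still writes $\gamma(a)\in(\tfrac12,1]$, but the text immediately fixes $\gamma\equiv\tfrac12$. The paper checks only the BD violation $p^{6}(x,\{x,y\},\{a\})=\tfrac12=\tfrac{1-1/2}{2-1}=p^{6}(y,\{x,y\},\{a\})$ and leaves the rest as ``easy to verify''. That omitted verification runs on exactly your mechanism: binary ties make UIA, DOM and ST vacuous. IN and FS come from the MAR product form and the identity $\gamma(a)+(|S|-1)\frac{1-\gamma(a)}{|S|-1}=1$, neither of which uses $\gamma(a)>\tfrac12$. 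So you have technically verified a different SCR than the one the claim names, but your argument transfers to it verbatim.

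The paper's choice buys interpretation. It shows that BD is what pushes attention strictly above $\tfrac12$, matching the step in Theorem \ref{thm1} where BD and FS give $\gamma(a)>\tfrac12$. Its rule also still carries genuine orderings on menus with $|S|\ge 3$: the top alternative gets $\tfrac12$ and every other one gets $\tfrac{1}{2(|S|-1)}$. The failure there is one of binary identification rather than an absence of preferences. Your choice buys a fully explicit, order-free verification, including the DOM edge case $S=\{x\}$. It also sidesteps the paper's tension of setting $\gamma(a)=\tfrac12$ in a definition that demands $\gamma(a)\in(\tfrac12,1]$.
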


\begin{definition}[SCR-6] An SCR $p^{6}$ is SCR-6  if for all $a\in \mathbf A$ there exists an ordering $\succeq_{a}$ and attention parameters $\gamma(a)\in (\dfrac{1}{2},1]$  such that for any $S\in\mathcal{X}$, $x\in S$ and $A\in \mathcal A$,     \begin{displaymath}
p^{6}(x,S,A)=\begin{cases}
\prod_{a\in A_{x}(S)}\gamma(a)\dfrac{\prod_{b\in A_{x}^{c}(S)}(1-\gamma(b))}{(|S|-1)^{|A_{x}^{c}(S)|}}, &\quad \text{ if }  A_{x}(S) \not = \emptyset,  A_{x}^{c}(S) \not = \emptyset ;\\
\ \prod_{a \in A_{x}(S)}\gamma(a), & \quad \text{if } A_{x}(S)=A;\\
 \dfrac{\prod_{b \in  A_{x}^{c}(S) }(1-\gamma(b))}{(|S|-1)^{|A_{x}^{c}(S)|}}, &\quad \text{ if } A_{x}^{c}(S)=A.
    \end{cases}
    \end{displaymath}
\end{definition}
SCR-6 is a MAR with $\gamma(a)=\dfrac{1}{2}$ for all $a\in A$.

Consider $S=\{x,y\}$ and $A=\{a\}$ such that $x\succ_{a}y$. This rule does not satisfy \textbf{BD} since 

\[ p^{6}(x,\{x,y\}, \{a\})= p^{6}(y,\{x,y\},\{a\})=\dfrac{1}{2}.\]

It is easy to verify that SCR-6 satisfies all the other Axioms.

\begin{claim} The following example of an SCR satisfies all the Axioms except \textbf{ST} for $S=\{x,y,z\}$ and $A=\{a\}$.    
\end{claim}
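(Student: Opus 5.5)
The approach is to let one attribute carry a \emph{cyclic} binary relation, so that \textbf{ST} fails, while every other axiom either holds by construction or is vacuous on menus with no pairwise-dominant alternative. Fix an attribute $a$ and three distinct alternatives $x,y,z$. Fix $\gamma(b)\in(\tfrac12,1]$ for every $b\in\mathbf A$. Define a complete, asymmetric binary relation $\rhd_a$ on $\mathbf X$ (each distinct pair ordered exactly one way) so that it contains the cycle $x\rhd_a y$, $y\rhd_a z$, $z\rhd_a x$. On the remaining alternatives $\rhd_a$ agrees with an arbitrary linear order, and it ranks them all below $x,y,z$. Every other attribute $b\neq a$ keeps a genuine linear order $\succeq_b$.

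For a singleton attribute set $\{b\}$ with $b\neq a$, let $p(\cdot,S,\{b\})$ be the MAR value. For $\{a\}$, proceed as follows. Set $p(w,\{w\},\{a\})=\gamma(a)$. If $S$ with $|S|\ge2$ has a \emph{Condorcet winner} $w$, meaning $w\rhd_a v$ for all $v\in S\setminus\{w\}$, then $w$ gets $\gamma(a)$ and every other element gets $\frac{1-\gamma(a)}{|S|-1}$. If $S$ has no such winner (for instance $S=\{x,y,z\}$), every element gets $\frac1{|S|}$. For a general $A$, put $p(v,S,A)=\prod_{b\in A}p(v,S,\{b\})$.

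The verification then proceeds axiom by axiom.
\begin{itemize}
\item \textbf{IN} holds by construction, since the rule is a product over disjoint attribute sets.
\item The construction is a valid SCR. All values are nonnegative, so
\[
\sum_{v\in S}\prod_{b\in A}p(v,S,\{b\})\le\prod_{b\in A}\sum_{v\in S}p(v,S,\{b\})\le1,
\]
and hence $p(x^*,S,A)\ge0$.
\item \textbf{FS} holds because both single-attribute cases sum to $1$.
\item \textbf{IS} holds because every singleton gets $\gamma(\cdot)$.
\item \textbf{BD} holds because in a binary menu the $\rhd_a$-winner is a Condorcet winner and receives $\gamma(a)>\tfrac12>1-\gamma(a)$.
\item For \textbf{UIA}, the premise only says $y',z'$ lose pairwise to some $x'$ in $S$. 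If $S$ has a Condorcet winner, all non-winners receive the same value (and $y',z'$ cannot be the winner). Otherwise all elements receive $\frac1{|S|}$. Either way $p(y',S,\{a\})=p(z',S,\{a\})$.
\item For \textbf{DOM}, the premise makes $x'$ a Condorcet winner of $S$, hence also of $S\setminus\{y'\}$, so its value is $\gamma(a)$ both times (using $p(x',\{x'\},\{a\})=\gamma(a)$ when $S$ shrinks to a singleton).
\item For $b\neq a$, all axioms follow exactly as for MAR in Theorem \ref{thm1}.
\item \textbf{ST} fails: $p(x,\{x,y\},\{a\})>p(y,\{x,y\},\{a\})$ and $p(y,\{y,z\},\{a\})>p(z,\{y,z\},\{a\})$, but $p(z,\{x,z\},\{a\})>p(x,\{x,z\},\{a\})$.
\end{itemize}
On $S=\{x,y,z\}$ the rule assigns $\tfrac13$ to each alternative, which is where \textbf{UIA} and \textbf{DOM} are vacuous.

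The main obstacle is ensuring that \textbf{UIA} and \textbf{DOM} hold on \emph{all} menus, not only on $\{x,y,z\}$. The Condorcet-winner-or-uniform definition is designed to make both checks immediate. The one case requiring care is a menu containing $\{x,y,z\}$ together with alternatives ranked below them: such a menu has no Condorcet winner, so it falls into the uniform case, which is consistent with both axioms.
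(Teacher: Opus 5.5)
Your construction is correct and is essentially the paper's. Specializing to $\mathbf X=\{x,y,z\}$, $\mathbf A=\{a\}$ and $\gamma(a)=\tfrac23$ reproduces the paper's example exactly: singletons get $\tfrac23$, the cyclic pairwise winners get $\tfrac23$ against $\tfrac13$, and each alternative gets $\tfrac13$ on the triple. Your Condorcet-winner-or-uniform extension with product aggregation over attributes is a sound generalization: it makes the independence example valid in any universe with at least three alternatives and makes \textbf{IN} hold non-vacuously, whereas the paper's one-attribute, three-alternative universe makes \textbf{IN} and \textbf{UIA} vacuous and \textbf{DOM} immediate.
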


\begin{proof}Consider $X=\{x,y,z\}$ and $A=\{a\}$. Let the SCR be as follows:

\[ p(x, X, A)=p(z, X, A)=\dfrac{1}{3}, \; p(y, X, A)=\dfrac{1}{3}.\]
\[p(x, \{x,y\}, A)=\dfrac{2}{3}, \; p(y, \{x,y,\}, A)=\dfrac{1}{3}.\]
\[p(y, \{y,z\}, A)=\dfrac{2}{3}, \; p(z, \{y,z\}, A)=\dfrac{1}{3}.\]
\[p(z, \{x,z\}, A)=\dfrac{2}{3}, \; p(x, \{x,z\}, A)=\dfrac{1}{3}.\]
\[ p(x, \{x\},A)=p(y,\{y\}, A)=p(z, \{z\}, A)=\dfrac{2}{3}.
\]
Note that \textbf{UIA}, \textbf{DOM} and \textbf{IN} are satisfied trivially, while \textbf{FS}, \textbf{IS} and \textbf{BD} are satisfied by the definition of the SCR above. 
\end{proof}

\subsection{Independence of the Axioms for AAR}\label{secAARindep}

We establish that the axioms characterizing the Additive Attention Rule
(Theorem \ref{thmAAR}) are independent: for each of \textbf{AVG}, \textbf{MA},
\textbf{BC}, \textbf{BD}, \textbf{IS}, and \textbf{ST} we exhibit an SCR that
satisfies the remaining five but violates that one. Since none of these rules is
an AAR (each fails a necessary axiom), each axiom is non-redundant. In every example the universe of alternatives $\mathbf{X}$ is finite, so that $\mathcal{X}$ consists of only finitely many menus and every displayed choice probability can be checked directly to lie in $[0,1]$, with menu totals not exceeding one.

\begin{definition}[AAR-1]
Let $\mathbf{X}=\{x,y\}$, $\mathbf{A}=\{a,b\}$ with $x\succ_ay$ and $y\succ_bx$, and $\gamma(a)=\gamma(b)=\tfrac12$. For every $S\in\mathcal{X}$ and $B\in\mathcal{A}$, define
\[
p^1(x,S,B):=\frac{1}{N(S)}\max_{a\in B}\gamma(a)\,V(x,a,S).
\]
\end{definition}

\textbf{Claim.} AAR-1 satisfies MA, BC, BD, IS, and ST, but violates AVG.

Since $|B|=1$ makes the maximum trivial, $p^1$ agrees with the Additive Attention Rule on every singleton attribute set, $p^1(x,S,\{a\})=\gamma(a)V(x,a,S)/N(S)$; consequently MA, BC, BD, and ST --- each of which constrains only single-attribute choice probabilities --- hold exactly as in the proof of Theorem \ref{thmAAR}, and IS holds since $p^1(x,\{x\},\{a\})=\gamma(a)$ for every $x$. It violates AVG: with $N(\{x,y\})=3$,
\[
p^1(x,\{x,y\},\{a\})=\tfrac13\Bigl(\tfrac12\Bigr)(2)=\tfrac13,\qquad p^1(x,\{x,y\},\{b\})=\tfrac13\Bigl(\tfrac12\Bigr)(1)=\tfrac16,
\]
so AVG would require $p^1(x,\{x,y\},\{a,b\})=\tfrac12\bigl(\tfrac13+\tfrac16\bigr)=\tfrac14$, whereas
\[
p^1(x,\{x,y\},\{a,b\})=\tfrac13\max\Bigl\{\tfrac12(2),\tfrac12(1)\Bigr\}=\tfrac13(1)=\tfrac13\ne\tfrac14.
\]

\begin{definition}[AAR-2]
Let $\mathbf{X}=\{x,y,z\}$, $\mathbf A=\{a\}$, $x\succ_ay\succ_az$, and $\gamma(a)=\tfrac12$. On singleton and pair menus let $p^2$ coincide with the Additive Attention Rule: $p^2(w,\{w\},a)=\tfrac12$ for $w\in\{x,y,z\}$, and on each pair the $\succeq_a$-winner receives $\tfrac13$ and the loser $\tfrac16$. On the triple set define $p^2(x,S,a)=p^2(y,S,a)=p^2(z,S,a)=\tfrac16$.
\end{definition}

\textbf{Claim.} AAR-2 satisfies AVG, BC, BD, IS, and ST, but violates MA.

Since the singleton and pair values are those of the Additive Attention Rule, IS, BD, BC, and ST hold exactly as in the proof of Theorem \ref{thmAAR}, and AVG holds trivially since $|A|=1$. It violates MA: taking $S=\{x,y\}$, $T=\{x,z\}$, so $S\cap T=\{x\}$,
\[
N(S)p^2(x,S,a)+N(T)p^2(x,T,a)-p^2(x,\{x\},a)=3\Bigl(\tfrac13\Bigr)+3\Bigl(\tfrac13\Bigr)-\tfrac12=\tfrac32,
\]
so MA would require $p^2(x,\{x,y,z\},a)=\tfrac{3/2}{N(\{x,y,z\})}=\tfrac{3/2}{6}=\tfrac14$, but the assigned value is $\tfrac16\ne\tfrac14$.

\begin{definition}[AAR-3]
Let $\mathbf{X}=\{x,y,z\}$, $\mathbf A=\{a\}$, $x\succ_ay\succ_az$, and $\gamma(a)=\tfrac14$. With
\[
W(x,a,S):=1+2\bigl|\{y\in S\setminus\{x\}: x\succ_ay\}\bigr|,
\]
define $p^3(x,S,\{a\}):=\dfrac{\gamma(a)}{N(S)}W(x,a,S)$.
\end{definition}

\textbf{Claim.} AAR-3 satisfies AVG, MA, BD, IS, and ST, but violates BC.

AVG holds trivially since $|A|=1$. Since $S\setminus\{x\}$ and $T\setminus\{x\}$ are disjoint whenever $S\cap T=\{x\}$, the win-count in $W$ is additive across such a union exactly as $V$ is, so $W(x,a,S\cup T)=W(x,a,S)+W(x,a,T)-1$; the telescoping argument used to prove MA in Theorem \ref{thmAAR} therefore applies verbatim with $W$ in place of $V$, and MA holds. On singletons $W(w,a,\{w\})=1$, so $p^3(w,\{w\},a)=\gamma(a)=\tfrac14$ for every $w$, giving IS. On a pair, the winner has one win ($W=3$) and the loser none ($W=1$), so with $N=3$,
\[
p^3(\text{winner},\cdot,a)=\tfrac14\Bigl(\tfrac33\Bigr)=\tfrac14,\qquad p^3(\text{loser},\cdot,a)=\tfrac14\Bigl(\tfrac13\Bigr)=\tfrac1{12},
\]
which are distinct (BD), and since the winner of every pair is always the $\succeq_a$-larger alternative, the binary order recovered from $p^3$ coincides with $\succeq_a$ itself, giving ST. It violates BC: BC requires the winner to receive $\tfrac23$ of its own singleton value, i.e.\ $\tfrac23\bigl(\tfrac14\bigr)=\tfrac16$, but the winner actually receives $\tfrac14\ne\tfrac16$.

\begin{definition}[AAR-4]
Let $\mathbf{X}=\{x,y\}$, $\mathbf A=\{a_0,a_1\}$ with $\gamma(a_0)=0$, $\gamma(a_1)=\tfrac12$, and any linear orders on $\mathbf{X}$ under $a_0$ and $a_1$ (say $x\succ_{a_1}y$). Let $p^4$ be given by the Additive Attention Rule expression itself, evaluated at these weights:
\[
p^4(x,S,B):=\frac{1}{N(S)|B|}\sum_{a\in B}\gamma(a)\,V(x,a,S).
\]
\end{definition}

\textbf{Claim.} AAR-4 satisfies AVG, MA, BC, IS, and ST, but violates BD.

AVG holds by construction. The derivations of MA and BC in the proof of Theorem \ref{thmAAR} use only that $p(x,S,\{a\})=\gamma(a)V(x,a,S)/N(S)$ for some $\gamma(a)\ge0$ --- nowhere do they require $\gamma(a)>0$ --- so MA and BC hold for every $a\in A$, including $a_0$. IS holds since $p^4(x,\{x\},\{a_0\})=0=p^4(y,\{y\},\{a_0\})$. ST holds for $a_1$ as usual, and holds vacuously for $a_0$, since its hypothesis (a strict binary inequality under $a_0$) is never satisfied: $p^4(x,\{x,y\},\{a_0\})=p^4(y,\{x,y\},\{a_0\})=0$. This is exactly the violation of BD: the two binary values under $a_0$ coincide rather than differ. Since the Additive Attention Rule requires $\gamma(a)>0$ for every attribute, AAR-4 is not an instance of it; this example shows that BD is precisely what forces the attention weights to be strictly positive.

\begin{definition}[AAR-5]
Let $\mathbf{X}=\{x,y,z\}$, $\mathbf A=\{a\}$, $x\succ_ay\succ_az$, and $\sigma(x)=\tfrac3{10}$, $\sigma(y)=\tfrac14$, $\sigma(z)=\tfrac15$. Define
\[
p^5(w,S,\{a\}):=\frac{\sigma(w)}{N(S)}V(w,a,S)\qquad\text{for }w\in S.
\]
\end{definition}

\textbf{Claim.} AAR-5 satisfies AVG, MA, BC, BD, and ST, but violates IS.

AVG holds trivially since $|A|=1$. Because $V(w,a,S\cup T)=V(w,a,S)+V(w,a,T)-1$ whenever $S\cap T=\{w\}$ --- an identity about $w$'s own rank alone --- the telescoping argument of Theorem \ref{thmAAR} applies with the base value $\sigma(w)$ in place of a common $\gamma(a)$, so $N(S)p^5(w,S,a)=\sigma(w)V(w,a,S)$ satisfies MA. On a pair $\{w,w'\}$ with $w\succ_aw'$, $N=3$ and $(V(w,\cdot),V(w',\cdot))=(2,1)$, so
\[
p^5(w,\{w,w'\},a)=\tfrac23\sigma(w),\qquad p^5(w',\{w,w'\},a)=\tfrac13\sigma(w'),
\]
which is exactly BC's requirement, stated relative to each alternative's own singleton value, so BC holds regardless of whether $\sigma$ is constant. Numerically,
\[
p^5(x,\{x,y\},a)=\tfrac15,\quad p^5(y,\{x,y\},a)=\tfrac1{12},\quad p^5(x,\{x,z\},a)=\tfrac15,\quad p^5(z,\{x,z\},a)=\tfrac1{15},
\]
\[
p^5(y,\{y,z\},a)=\tfrac16,\quad p^5(z,\{y,z\},a)=\tfrac1{15},
\]
which are distinct within each pair (BD). Since $\sigma(x),\sigma(y),\sigma(z)$ lie within a band of ratio below $2$ of one another, the alternative ranked higher under $\succeq_a$ always receives the higher binary probability, so the order recovered from binary comparisons coincides with $\succeq_a$ itself, giving ST. It violates IS: $p^5(x,\{x\},a)=\tfrac3{10}\ne\tfrac14=p^5(y,\{y\},a)$.

\begin{definition}[AAR-6]
Let $\mathbf{X}=\{x,y,z\}$, $\mathbf A=\{a\}$, $\gamma(a)=\tfrac12$, and replace the linear order $\succeq_a$ by the cyclic relation $x\text{ beats }y$, $y\text{ beats }z$, $z\text{ beats }x$ (not transitive). On singletons let $p^6(w,\{w\},a)=\tfrac12$ for every $w$. On each pair let the designated winner receive $\tfrac23\bigl(\tfrac12\bigr)=\tfrac13$ and the loser receive $\tfrac13\bigl(\tfrac12\bigr)=\tfrac16$. On the triple set let $p^6(x,S,a)=p^6(y,S,a)=p^6(z,S,a)=\tfrac16$.
\end{definition}

\textbf{Claim.} AAR-6 satisfies AVG, BC, BD, IS, and MA, but violates ST.

AVG holds trivially since $|A|=1$. IS, BD, and BC hold by direct inspection: singleton values are equal, pair values are distinct, and each pair matches the $\tfrac23$-$\tfrac13$ rule relative to its own singleton value. MA holds: taking $S=\{x,y\}$, $T=\{x,z\}$ (so $S\cap T=\{x\}$),
\[
N(S)p^6(x,S,a)+N(T)p^6(x,T,a)-p^6(x,\{x\},a)=3\Bigl(\tfrac13\Bigr)+3\Bigl(\tfrac16\Bigr)-\tfrac12=1,
\]
matching $N(S\cup T)p^6(x,S\cup T,a)=6\bigl(\tfrac16\bigr)=1$; by the symmetry of the cycle, the same check holds starting from $y$ or from $z$. It violates ST: $x$ beats $y$ ($\tfrac13>\tfrac16$) and $y$ beats $z$ ($\tfrac13>\tfrac16$), yet on $\{x,z\}$ it is $z$ --- not $x$ --- that wins, since $p^6(z,\{x,z\},a)=\tfrac13>\tfrac16=p^6(x,\{x,z\},a)$. The cyclic relation admits no linear order, so AAR-6 is not an Additive Attention Rule either.

\end{document}